\newtheorem{thm}{Theorem} 
\newtheorem{lem}[thm]{Lemma}
\newtheorem{co}[thm]{Corollary} 
\newtheorem{prop}[thm]{Proposition}
\newtheorem{rem}[thm]{Remark}
\theoremstyle{definition}
\newtheorem{ex}[thm]{Example}
\newtheorem{defi}[thm]{Definition}
\newcommand{\F}{\mathbb{F}}
\newcommand{\N}{\mathbb{N}}
\newcommand{\Z}{\mathbb{Z}}
\newcommand{\PP}{\mathcal{P}}
\newcommand{\Gr}{\mathcal{G}_{q}(k,n)}
\newcommand{\Cvs}{\mathcal{C}}
\newcommand{\Svs}{\mathcal{S}}
\newcommand{\Evs}{\mathcal{E}}
\newcommand{\Rvs}{\mathcal{R}}
\newcommand{\Vvs}{\mathcal{V}}
\newcommand{\Uvs}{\mathcal{U}}
\newcommand{\PS}{\mathcal{P}_q(n)}
\newcommand{\mat}[1]{\left(\begin{matrix}#1\end{matrix} \right)}
\newcommand{\rs}{\mathrm{rs}}
\newcommand{\rank}{\mathrm{rank}}
\newcommand{\orderspaces}[3]{\genfrac{[}{]}{0pt}{}{#1}{#2}_{#3}}
\newcommand{\diag}{\mathrm{diag}}
\newcommand{\ord}{\mathrm{ord}}
\newcommand{\wt}{\mathrm{weight}}
\newcommand{\stab}[2]{\Stab_{#1}({#2})}
\DeclareMathOperator{\Stab}{Stab}
\DeclareMathOperator{\RCF}{RCF}
\DeclareMathOperator{\Aut}{Aut}
\DeclareMathOperator{\Anti}{Anti}
\DeclareMathOperator{\Sym}{Sym}
\DeclareMathOperator{\GammaL}{\Gamma L}
\DeclareMathOperator{\GL}{GL}
\DeclareMathOperator{\dec}{dec}
\DeclareMathOperator{\Iso}{Iso}
\DeclareMathOperator{\AntiIso}{AntiIso}
\begin{document}

\title{Cyclic Orbit Codes}


  
  
  

\author{Anna-Lena Trautmann, Felice Manganiello, Michael Braun and
  Joachim Rosenthal 
  \thanks{A.-L. Trautmann and J. Rosenthal are with the University of
    Zurich. The authors were partially supported by
    Swiss National Science Foundation Project no.~126948.}  \thanks{F. Manganiello is with the University of
    Toronto. He was partially supported by Swiss National Science
    Foundation Projects no.~126948 and no.~135934.}  \thanks{M. Braun
    is with the University of Applied Sciences Darmstadt.} }
      
\maketitle

\begin{abstract}
In network coding a constant dimension code consists of a set of $k$-dimensional subspaces of $\F_q^n$. 
Orbit codes are constant dimension codes which are defined as orbits of a subgroup of the general linear group, acting on the set of all subspaces of $\F_q^n$. If the acting group is cyclic, the corresponding orbit codes are called cyclic orbit codes. In this paper we give a classification of cyclic orbit codes and propose a decoding procedure for a particular subclass of cyclic orbit codes.
\end{abstract}

\begin{keywords}
Network coding, subspace codes, Grassmannian, group action, general linear group, Singer cycle .
\end{keywords}

%
\section{Introduction and related work}
\label{sec:intro}

\PARstart{N}{etwork} coding describes a method for attaining a maximum information flow within a network, that is an acyclic directed graph with possibly several sources and sinks. The notion first arose in \cite{ah00}. An algebraic approach to coding in non-coherent networks, called \emph{random network coding}, with respect to error correction and the corresponding transmission model is given in the paper of K\"otter and Kschischang \cite{ko08}. There they specify under which conditions errors can be recognized and corrected.

In the random network coding setting a message corresponds to a subspace $\mathcal{V}$ of a finite $n$-dimensional vector space over a finite field ${\mathbb F}_q$, denoted by $\mathcal{V} \le {\mathbb F}_q^n$. Generating vectors of this subspace will be injected at the sources of the network and will be transmitted to their neighboring nodes. These nodes gather some vectors, linearly combine them, and transmit these linear combinations to their neighbors. In the end the receiver nodes collect linear combinations of the original injected vectors. 

In the error-free case the receiver can reconstruct the whole subspace $\Svs$ from the collected vectors. 

In real world applications, networks are exposed to noise such that messages can be lost or modified during the transmission of $\mathcal{V}$. It is also possible that some node wants to disrupt the transmission flow by injecting wrong vectors. Thus, on one hand, some vectors of $\mathcal{V}$ might be lost and a smaller subspace $\mathcal{V}'<\mathcal{V}$ will be received. On the other hand, vectors which are not contained in $\mathcal{V}$ might be received. These erroneous vectors span a vector space $\Evs$, thus $\Rvs= \mathcal{V}'\oplus \Evs$ will be received.  Vectors lost during transmission are called \emph{erasures} and additional received vectors not contained in $\mathcal{V}$ are called \emph{errors}.

Since K\"otter and Kschischang published their pioneering article \cite{ko08} several papers about the structure of network codes appeared, e.\,g. \cite{BEV11,el10,et08u,EV08,et08p,ko08p,ma08p,ma11,si08a,sk10,tr10p,tr11}. A survey on all most important results including constructions and bounds of subspace codes which appeared before 2010 offers the paper of Khaleghi et al. \cite{KSK09}.

In \cite{ko08p} Kohnert and Kurz constructed constant dimension codes by solving a Diophantine system of linear equations. In order to obtain a reduction of the huge system of equations they prescribed a group of automorphisms which is a subgroup of the general linear group. The corresponding codes are unions of orbits of that prescribed group on $k$-subspaces. This approach by prescribing a group of automorphisms and solving a Diophantine system of equations was first introduced by Kramer and Mesner \cite{KM76} in 1976 for the construction of combinatorial $t$-designs.

The concept of defining codes as orbits of certain groups traces back to Slepian \cite{Sle68} where  Euclidian spaces were considered---the corresponding codes are called \emph{group codes}. In the network coding setting Trautmann et al. \cite{tr10p} defined such codes as \emph{orbit codes}. The applied groups are again subgroups of the general linear group. If the acting group is the Singer cycle group Elsenhans et al. \cite{el10} gave a decoding algorithm for constant 3-dimensional orbit codes. A subclass of orbit codes define the cyclic subgroups of the general linear groups, the corresponding orbit codes are called \emph{cyclic orbit codes}---they were introduced by Trautmann and Rosenthal in \cite{tr11}, where the authors started with a partial characterization of these cyclic orbit codes.

The goal of this paper is to continue the work of \cite{tr11}. We describe cyclic orbit codes, give a complete characterization of these codes and propose a decoding algorithm for irreducible cyclic orbit codes which is different from the approach introduced in \cite{el10}.

The paper is organized as follows: In the second section we give the major preliminaries and consider some different point of views on subspace codes.

In the third section we motivate and describe orbit codes in a general setting by considering group actions on metric sets. Furthermore, we give an interpretation of maximum likelihood decoding in terms of the group actions.

In Section \ref{sec:classification} we classify cyclic orbit codes. For each type given by the conjugacy class of cyclic subgroups of the general linear group we characterize the corresponding cyclic orbit codes. Section \ref{sec:cardinality} deals with their size and the minimum distance.

In Section \ref{sec:decoding} we investigate a decoding algorithm for cyclic orbit codes arising from Singer cycles. We compare the complexity of this algorithm with different decoding procedures already proposed for other types of constant dimension codes.

Finally we conclude with the summary of the main results of this paper, give some suggestions for further research, and gather some open questions.

\section{Background}
\label{sec:background}

Let $\F_q$ be the finite field with $q$ elements (where $q$ is a prime power) and let $n$ be natural number. 

When $\Uvs$ is a subspace of the vector space $\F_q^n$ we will use the notation $\mathcal{U}\le \F_q^n$. The dimension will be abbreviated by $\dim (\mathcal{U})$. The set of all subspaces of $\F_q^n$ of dimension $k$ is called \emph{Grassmann variety} or simply \emph{Grassmannian} and is denoted by
\[
\Gr := \{ \mathcal{U}\le \F_q^n\mid \dim(\mathcal{U})=k\}.
\] 
The cardinality of this set is given by the \emph{$q$-Binomial coefficient}, also called \emph{Gaussian number}, 
\[
|\Gr| = \orderspaces{n}{k}{q}:=\genfrac{}{}{}{}{(q^n-1)(q^{n-1}-1)\cdots(q^{n-k+1}-1)}{(q^k-1)(q^{k-1}-1)\cdots(q-1)}.
\]
The union of all Grassmann varieties, i.\,e. the set of all subspaces of $\F_q^n$,  is called the \emph{projective space} and denoted by 
\[
\PS = \bigcup_{k=0}^n \Gr .
\] 

\subsection{The lattice point of view}

The projective space $\PS$ forms a lattice with supremum \lq\lq$+$\rq\rq{} and infimum \lq\lq$\cap$\rq\rq{} which is called the \emph{linear lattice} \cite{Bir67}. The dimension of subspaces defines a rank function satisfying the Jordan-Dedekind chain condition, i.\,e. $\PS$ is a JD-lattice. It is well-known that each JD-lattice $(L,\le,\wedge,\vee,r,0,1)$ with partial order \lq\lq$\le$\rq\rq{}, infimum \lq\lq$\wedge$\rq\rq{}, supremum \lq\lq$\vee$\rq\rq{}, rank function $r$, zero-element $0=\wedge_{x\in L} x$, and one-element $1=\vee_{x\in L}$ defines a metric lattice with a distance function $d$ (see \cite{Bir67}):
\[
d(x,y):=r(x\vee y)-r(x\wedge y).
\]
In the projective space setting, the corresponding metric is called the \emph{subspace distance}: For two subspaces $\mathcal{U},\mathcal{V} \in \PS$ we obtain:
\begin{align*}
d(\mathcal{U},\mathcal{V})&= \dim(\mathcal{U}+\mathcal{V})-\dim(\mathcal{U}\cap \mathcal{V})\\
&=\dim(\mathcal{U})+\dim(\mathcal{V})-2\dim (\mathcal{U} \cap \mathcal{V}).
\end{align*}

In \cite{ko08} K\"otter and Kschischang suggested to use this metric subspace lattice for the purpose of coding in erroneous network communication. They defined a \emph{subspace code} or \emph{random network code} as a subset $\mathcal{C}$ of the projective space:
\[
\mathcal{C}\subseteq \PS.
\]
A codeword of such a code corresponds to a subspace ${\mathcal U}\in{\mathcal C}$.  

The minimum distance $d(\mathcal{C})$ of a subspace code $\mathcal{C}$ is defined in the usual way --- as the smallest distance between any two elements of $\mathcal{C}$.

A special class of subspace codes are \emph{constant dimension codes} \cite{ko08}. These are simply subsets of the Grassmannian. If a constant dimension code $\mathcal{C}\subseteq \Gr$ has minimum distance $d(\mathcal{C})$, we call $\mathcal{C}$ an $[n,d(\mathcal{C}),|\mathcal{C}|,k]$-code.

A subspace code can be considered the $q$-analog of a binary block code \cite{Bra11,BEV11} as follows: The set $P(n)$ of all subsets of the $n$-set $N:=\{0,\ldots,n-1\}$ forms a JD-lattice with partial order \lq\lq$\le$\rq\rq{}, intersection of sets \lq\lq$\cap$\rq\rq{} as infimum operator, union \lq\lq$\cup$\rq\rq{} as supremum, the order $|U|$ of a subset $U\in P(n)$ as rank function, the zero-element $\emptyset$, and the one-element $N$:
\[
(P(n),\subseteq,\cap,\cup,|\cdot|,\emptyset,N).
\]
Hence, the order of the symmetric difference of sets $U,V\in P(n)$,
\[
d(U,V)=|U\cup V| - |U\cap V| ,
\]
defines a metric on $P(n)$. Since each subset $U\in P(n)$ can be represented by its characteristic vector $u=(u_0,\ldots,u_{n-1})\in \F_2^n$ ($u_i=1$ if $i\in U$, otherwise $u_i=0$), a binary block code corresponds to a subset 
\[
{\mathcal C}\subseteq P(n).
\]
The given distance function $d$ is then nothing but the Hamming metric. A binary constant weight code is a subset ${\mathcal C}\subseteq P(n)$ whose elements $U\in {\mathcal C}$ have a fixed order $|U|=k$. 

If we replace sets by subspaces and their orders by dimensions, we get the $q$-analog situation: Instead of the power set lattice $(P(n),\cap,\cup,|\cdot|,\emptyset,N)$ we use the projective space lattice
\[
(\PS,\le,\cap,+,\dim,\{0\},\F_q^n).
\]
Then the $q$-analog of a binary block code is a subspace code and the $q$-analog of a binary constant weight code is a constant dimension code.

If $(L,\le,\wedge,\vee,r,0,1)$ denotes a metric lattice, a bijective mapping $f:L\to L$ is called \emph{isometric} on $L$ if and only if 
\[
d(f(x),f(y))=d(x,y)
\]
for all $x,y\in L$. 

From lattice theory we know the following result, that order-preserving and order-reversing bijections on lattices define isometries \cite{Bir67,BEV11}:

\begin{lem}\label{lem:isometries}
Let $(L,\le,\wedge,\vee,r,0,1)$ denote a metric lattice and let $f:L\to L$ denote a bijective mapping. Then the following equivalencies hold:
\begin{enumerate}
\item $f$ is order-preserving, i.\,e. $x\le y \Leftrightarrow f(x)  \le f(y)$ for all $x,y\in L$, if and only if $r(f(x))=r(x)$ for all $x\in L$ and $f$ is isometric on $L$.
\item $f$ is order-reversing, i.\,e. $x\le y \Leftrightarrow f(y)  \le f(x)$ for all $x,y\in L$, if and only if $r(f(x))=r(1)-r(x)$ for all $x\in L$ and $f$ is isometric on $L$.
\end{enumerate}
\end{lem}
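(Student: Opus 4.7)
The plan is to prove part (1) directly from the lattice structure and rank properties, then derive part (2) by passing to the opposite lattice.

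For the forward direction of (1), I assume $f$ is an order-isomorphism (the biconditional in the statement). Two facts follow. First, $f$ maps maximal chains to maximal chains, and in a JD-lattice all maximal chains between two elements share the same length; hence $r(f(x)) = r(x)$. Second, joins and meets are defined by the order (as least upper bound and greatest lower bound), so an order-isomorphism between lattices is automatically a lattice isomorphism: $f(x\vee y) = f(x)\vee f(y)$ and $f(x\wedge y) = f(x)\wedge f(y)$. Combining these gives
\[
d(f(x),f(y)) = r(f(x\vee y)) - r(f(x\wedge y)) = r(x\vee y) - r(x\wedge y) = d(x,y).
\]

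For the backward direction of (1), suppose $r\circ f = r$ and $f$ is isometric. Let $x\le y$. Then $d(x,y) = r(y) - r(x)$, so $d(f(x),f(y)) = r(f(y)) - r(f(x))$. On the other hand, using the chain $f(x)\wedge f(y) \le f(x) \le f(x)\vee f(y)$ together with $f(y) \le f(x)\vee f(y)$, we get
\[
d(f(x),f(y)) = r(f(x)\vee f(y)) - r(f(x)\wedge f(y)) \ge r(f(y)) - r(f(x)).
\]
Equality forces $r(f(x)\wedge f(y)) = r(f(x))$. Since in a JD-lattice the rank function is strictly monotone on chains, $f(x)\wedge f(y) \le f(x)$ with equal rank implies $f(x)\wedge f(y) = f(x)$, i.e.\ $f(x)\le f(y)$. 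The reverse implication $f(x)\le f(y)\Rightarrow x\le y$ follows by applying the same argument to $f^{-1}$, which inherits both rank preservation ($r(f^{-1}(x)) = r(f(f^{-1}(x))) = r(x)$) and isometry.

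For part (2), I would use duality. The opposite lattice $(L,\ge,\vee,\wedge,r^{op},1,0)$ with $r^{op}(x) := r(1) - r(x)$ is again a JD-lattice, and a short computation shows the induced metric equals $d$:
\[
r^{op}(x\wedge y) - r^{op}(x\vee y) = r(x\vee y) - r(x\wedge y) = d(x,y).
\]
A bijection $f$ is order-reversing on $L$ if and only if $f: L \to L^{op}$ is order-preserving. Applying part (1) in this setting, the latter is equivalent to $f$ being isometric together with $r^{op}(f(x)) = r(x)$, i.e.\ $r(f(x)) = r(1) - r(x)$, as required.

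The main obstacle is the backward direction of (1): squeezing order preservation out of rank preservation plus isometry. The delicate ingredient is the strict monotonicity of $r$ on chains in a JD-lattice (which lets me upgrade equality of ranks to equality of elements); I would either cite this from Birkhoff or prove it from the Jordan--Dedekind chain condition before invoking it. The rest is largely bookkeeping with the distance formula and with the symmetric roles of $f$ and $f^{-1}$.
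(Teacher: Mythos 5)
Your proof is correct. Note that the paper itself does not prove this lemma at all: it is quoted as a known fact from lattice theory with a pointer to Birkhoff and to Braun--Etzion--Vardy, so there is no in-paper argument to compare against. Your proposal supplies a complete, self-contained proof, and the structure is sound: the forward direction of (1) via preservation of maximal chains (hence of the height function) and of meets and joins; the backward direction via the inequality $r(f(x)\vee f(y))-r(f(x)\wedge f(y))\ge r(f(y))-r(f(x))$, whose forced equality collapses $f(x)\wedge f(y)$ onto $f(x)$; the converse implication by symmetry through $f^{-1}$; and part (2) by passing to the opposite lattice with rank $r(1)-r(x)$. The one point you rightly flag --- strict monotonicity of $r$ on chains, needed to upgrade $r(f(x)\wedge f(y))=r(f(x))$ to an equality of elements --- is immediate here because $r$ is the height function of a finite JD-lattice (any $a<b$ admits a maximal chain through $a$ that is strictly longer up to $b$), and this same identification of $r$ with the height function is what makes your forward direction work; it is worth stating that identification explicitly, since for an arbitrary positive valuation in Birkhoff's sense an order-automorphism need not preserve $r$. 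A second, purely cosmetic point: part (1) as stated concerns a self-map $f:L\to L$, whereas your derivation of part (2) invokes it for a map $L\to L^{\mathrm{op}}$ between two different lattices on the same underlying set; your proof of (1) goes through verbatim in that two-lattice form, but you should say so rather than cite the one-lattice statement.
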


If $\Aut(L)$ denotes the set of all bijective order-preserving mappings on $L$ (also called \emph{lattice automorphisms}) and $\Anti(L)$ denotes the set of all bijective order-reversing mappings on $L$ (also called \emph{lattice anti-automorphisms}) the following lemma holds:

\begin{lem}\label{lem:aut:anti}
Let $(L,\le,\wedge,\vee)$ be a lattice. Assume that there exists a mapping $f\in \Anti(L)$. Then 
\[
\Anti(L)=\Aut(L)\circ f :=\{g\circ f\mid g\in \Aut(L)\}.
\]
\end{lem}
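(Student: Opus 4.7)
The plan is to recognize this as a standard coset-type identity: once we fix one anti-automorphism $f$, every other anti-automorphism differs from $f$ by a lattice automorphism, and every automorphism composed with $f$ is an anti-automorphism. The proof therefore splits into two opposite inclusions.

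First I would prove $\Aut(L)\circ f \subseteq \Anti(L)$. Take $g\in \Aut(L)$ and consider $g\circ f$. It is a bijection as the composition of bijections. For the order-reversal, I would chain the two equivalences: for $x,y\in L$,
\[
x\le y \;\Longleftrightarrow\; f(y)\le f(x) \;\Longleftrightarrow\; g(f(y))\le g(f(x)),
\]
using that $f$ is order-reversing and $g$ is order-preserving. Hence $g\circ f\in \Anti(L)$.

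For the reverse inclusion $\Anti(L)\subseteq \Aut(L)\circ f$, the key auxiliary observation is that $f^{-1}\in \Anti(L)$. This I would establish by substituting $x=f^{-1}(a)$ and $y=f^{-1}(b)$ into the defining equivalence $x\le y \Leftrightarrow f(y)\le f(x)$, which yields $f^{-1}(a)\le f^{-1}(b) \Leftrightarrow b\le a$, so $f^{-1}$ is a bijective order-reversing map. Then for any $h\in \Anti(L)$, I would write $h = (h\circ f^{-1})\circ f$, and show that $h\circ f^{-1}\in \Aut(L)$ by composing the two order-reversing equivalences exactly as in the first paragraph (two reversals give preservation). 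This exhibits $h$ as an element of $\Aut(L)\circ f$.

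There is no real obstacle here; the only subtlety is making sure that $f^{-1}$ is itself an anti-automorphism, because the statement is phrased so that the coset is written on the \emph{right} of $f$ (i.e.\ $\Aut(L)\circ f$, not $f\circ \Aut(L)$), and without $f^{-1}\in\Anti(L)$ we could not produce the required preimage factor $h\circ f^{-1}$. Once that small point is handled, both inclusions are immediate from the elementary fact that composing order-monotone bijections multiplies their orientations ($+\cdot - = -$ and $-\cdot - = +$).
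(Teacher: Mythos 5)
Your proof is correct and follows essentially the same route as the paper: both inclusions are handled by composing the order-preserving/order-reversing equivalences, and the reverse inclusion hinges on the same auxiliary fact that $f^{-1}\in\Anti(L)$, which lets you factor $h=(h\circ f^{-1})\circ f$. The paper merely dresses this up as a slightly more general statement about anti-isomorphisms between two posets and reaches the factorization by first showing $f\circ h^{-1}$ is an isomorphism and then inverting, but the substance is identical.
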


\begin{proof}
We give a more general proof for partially ordered sets: Let $(A,\le)$ respectively $(B,\preceq)$ two finite partially ordered sets (posets). Then we define the set of poset isomorphisms, respectively the set of poset anti-automorphisms between $A$ and $B$:
\begin{align*}
\Iso(A,B):=\{&f:A\to B\mid f\text{ bijection},\\
& f(x)\preceq f(y)\Leftrightarrow x\le y\,\, \forall x,y\in A\}
\end{align*}
resp.
\begin{align*}
\AntiIso(A,B):=\{&f:A\to B\mid f\text{ bijection},\\
& f(x)\preceq f(y)\Leftrightarrow y\le x\,\, \forall x,y\in A\}
\end{align*}
Let $(A,\le)$ and $(B,\preceq)$ two posets. Assume that there exists a mapping $d\in \AntiIso(A,B)$ Then we show that 
\[
\AntiIso(A,A)=\Iso(B,A) \circ d
\]
\begin{description}
\item[\fbox {$\subseteq$}]
Let $f\in \AntiIso(L,L)$. Then $f^{-1}\in \AntiIso(L,L)$ and we get the following equivalence, if we apply $d$:
\begin{align*}
x\le yÊ&\iff f^{-1}(y)\le f^{-1}(x)\\
&\iff d(f^{-1}(x))\preceq d(f^{-1}(y))\\
&\iff (d\circ f^{-1})(x)\preceq (d\circ f^{-1})(y)
\end{align*}
which means that $d\circ f^{-1}\in \Iso(A,B)$. Hence there exists a mapping $g\in \Iso(A,B)$ with $g=d\circ f^{-1}$. Applying $g^{-1}\in \Iso(B,A)$ from the left and $f$ from the right to this identity yields $id= g^{-1}\circ d\circ f^{-1}$ respectively $f= g^{-1}\circ d\in \Iso(B,A)\circ d$.

\item[\fbox {$\supseteq$}]
Let $f\in \Iso(B,A)$. Then we have 
\begin{align*}
x\le y &\iff d(y)\preceq f(x)\\
&\iff f(d(y)) \le f(d(x))\\
&\iff (f\circ d)(y)\le (f\circ d)(x)
\end{align*}
i.\,e. $f\circ d\in \AntiIso(A,A)$.
\end{description}
\end{proof}

In case of the power set lattice, the set of all order-preserving bijective mappings on $P(n)$ are exactly the permutations on the $n$-set,
\[
\Aut(P(n))=\Sym(n):=\{\pi:N\to N\mid \pi\text{ bijection}\} .
\]
In addition, an order-reversing mapping on $P(n)$ is defined by the setwise complement $\overline{U}=N \setminus U$, i.\,e. it satisfies
\[
U\subseteq V \iff \overline{V}\subseteq \overline{U}.
\]
In the setting of binary block codes, $Aut(P(n))$ is the set of the \emph{permutation isometries}, or \emph{linear isometries}, and the setwise complement corresponds to bit-flipping, i.\,e. a $1$ becomes a $0$ and vice versa. Denote by $\overline{\mathcal{C}}$ the code obtained by flipping the bits of a binary code $\mathcal{C}$. Then $\overline{\mathcal{C}}$ and $\Cvs$ have the same minimum distance $d(\mathcal{C})=d(\overline{\mathcal{C}})$, since the Hamming distance remains the same under bit-flipping, i.\,e. $d(\overline{u},\overline{v})=d(u,v)$.

In the case of the projective space lattice $\PS$ we know from the Fundamental Theorem of Projective Geometry (see \cite{Art57,Bae52}) that the set of order-preserving bijections on $\PS$ corresponds exactly to the set of \emph{semi-linear} mappings
\[
\Aut(\PS)=\GammaL_n(\F_q):=\GL_n(\F_q)\rtimes \Aut(\F_q)
\]
which is the semi-direct product of the \emph{general linear} group $\GL_n(\F_q)$ and the Galois group $\Aut(\F_q)$. 

If two codes are bijective with respect to a semi-linear mapping $f\in \GammaL_n(\F_q)$ we call them \emph{semi-linearily isometric}. If there exists $f\in \GL_n(\F_q)\le \GammaL_n(\F_q)$, such that $f$ maps one code to the other, the codes are called \emph{linearily isometric}.

Furthermore, the \emph{orthogonal complement} defines an order-reversing bijection on $\PS$,
\[
\mathcal{U}\le \mathcal{V}\iff \mathcal{V}^\perp\le \mathcal{U}^\perp
\]
and hence the \emph{dual code} $\mathcal{C}^\perp:=\{\mathcal{U}^\perp\mid \mathcal{U}\in\mathcal{C}\}$ defines a code with the same minimum distance as $\mathcal{C}\subseteq\PS$. 

In particular, if $\mathcal{C}$ is a constant dimension $[n,d(\mathcal{C}),|\mathcal{C}|,k]$-code, the dual code $\mathcal{C}^\perp$ is an $[n,d(\mathcal{C}),|\mathcal{C}|,n-k]$-code. In fact, the dual of a constant dimension code is the $q$-analog of a flipped binary constant weight code.

\subsection{The incidence geometry point of view}

Random network codes can be regarded as incidence structures in the projective space, therefore they are related to designs over finite fields \cite{Tho87}. A $t-(n,k,\lambda;q)$-design is a set $\mathcal{C}$ of $k$-subspaces, $\mathcal{C}\subseteq \Gr$, such that each $t$-subspace is contained in exactly $\lambda$ elements of $\mathcal{C}$. For the number of elements in such a design we get:
\[
|\mathcal{C}|=\lambda\genfrac{}{}{}{}{\orderspaces{n}{t}{q}}{\orderspaces{k}{t}{q}}.
\]
Each $t-(n,k,1;q)$-design defines an optimal constant dimension code (cf. \cite{ko08p}) with parameters
\[
[n,2(k-t+1),\genfrac{}{}{}{}{\orderspaces{n}{t}{q}}{\orderspaces{k}{t}{q}},k].
\] 
So far the only non-trivial $t-(n,k,1;q)$-designs are known for $t=1$. Such a $1$-design is called a \emph{spread}, which exists if and only if $k$ divides $n$ (cf. \cite{hi98,LW01}). The corresponding optimal constant dimension code with parameters $[n,2k,\genfrac{}{}{}{}{q^n-1}{q^k-1},k]$ is called a \emph{spread code} \cite{ma08p}.

\section{The group action point of view: orbit codes}
\label{sec:orbitcodes}

In this section we consider a third point of view: we derive random network codes from group actions---such codes are called \emph{orbit codes} \cite{ma11,tr10p}. We show, that this group theoretic approach yields a certain generalization of an ordinary linear code to random network codes. Both classes of codes, ordinary linear codes $C\le \F_q^n$ and orbit codes $\mathcal{C}\subseteq \PS$ can be defined by an orbit of a group action. If the group acting on the subspaces is cyclic we speak of \emph{cyclic orbit codes} \cite{tr11}. 

First we introduce the basic definitions and notation of groups acting on sets we need for our further investigations. The definitions and most results can be found in \cite{DM96,Ker99}.

\subsection{Basic definitions and properties of group actions}

Let $G$ be a finite multiplicative group with one-element $1_G$ and let $X$ denote a finite set. A \emph{group action} of $G$ on $X$ is a mapping 
\begin{align*}
X\times G&\longrightarrow X\\
(x,g)&\longmapsto xg
\end{align*}
such that $x1_G=x$ and $x(gg')=(xg){g'}$ holds for all $x\in X$ and $g,g'\in G$. The fundamental property is that 
\[
x \sim_G x' :\iff \exists g\in G: x'=xg
\]
defines an equivalence relation on $X$. The induced equivalence classes are the \emph{orbits} of $G$ on $X$. The orbit of $x\in X$ is abbreviated by 
\[
xG:=\{ xg\mid g\in G \}
\]
and we denote the set of all orbits of $G$ on $X$ by
\[
X/G:=\{ xG\mid x\in X\}.
\]
A transversal of the orbits $X/G$, denoted by ${\cal T}(X/G)$, is a minimal subset of $X$ such that
$X/G:=\{ xG\mid x\in {\cal T}(X/G)\}$, i.\,e. it is a set of representatives of the orbits.

Another useful tool in order to determine the corresponding orbit of an element $x\in X$ is the \emph{canonizing mapping}:
\begin{align*}
\gamma_{{\cal T}(X/G)}:X &\longrightarrow G \\
x &\longmapsto g\text{ with }xg\in {\cal T}(X/G).
\end{align*}
Two elements $x,x'\in X$ are in the same orbit, i.e. $xG=x'G$, if and only if
\[
x\gamma_{{\cal T}(X/G)}(x)=x'\gamma_{{\cal T}(X/G)}(x').
\]

The \emph{stabilizer} of an element $x\in X$ is the set of group elements that fix $x$:
\[
\stab{G}{x}:=\{g\in G\mid xg=x\}.
\]
Stabilizers are subgroups of $G$ having the property that the stabilizers of different elements of the same orbit are conjugated subgroups:
\[
\stab{G}{xg}=g^{-1}\stab{G}{x}g.
\]
The \emph{Fundamental Lemma of group actions} \cite{DM96,Ker99} says, that an orbit can be bijectively mapped onto the right cosets of the stabilizer:
\begin{align*}
\stab{G}{x}\backslash G &\rightarrowtail\hspace{-2ex}\rightarrow xG \\
\stab{G}{x}g &\mapsto xg.
\end{align*}
In particular, if ${\mathcal T}(\stab{G}{x}\backslash G)$ denotes a transversal between $\stab{G}{x}$ and $G$, the mapping ${\mathcal T}(\stab{G}{x}\backslash G)\to xG, g\mapsto xg$ is also one-to-one. As an immediate consequence  we obtain for the orbit size the equation:

\begin{prop}\label{prop5}
\[
|xG|=\genfrac{}{}{}{}{|G|}{|\stab{G}{x}|}
\]
\end{prop}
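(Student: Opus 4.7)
The plan is to obtain the formula as an immediate consequence of the Fundamental Lemma of group actions, which has just been stated above the proposition, combined with Lagrange's theorem. Concretely, the Fundamental Lemma asserts that the map
\[
\stab{G}{x}\backslash G \longrightarrow xG, \qquad \stab{G}{x}g \longmapsto xg
\]
is a bijection. A bijection between finite sets preserves cardinality, so $|xG|=|\stab{G}{x}\backslash G|$. Lagrange's theorem then gives $|\stab{G}{x}\backslash G|=|G|/|\stab{G}{x}|$, and combining the two equalities yields the claim.

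Since the Fundamental Lemma is being invoked as a black box, the only thing that genuinely needs to be checked (or referenced) is that the stated coset-to-orbit map is well-defined and bijective. Both well-definedness and injectivity reduce to the chain of equivalences
\[
xg = xg' \iff x(gg'^{-1}) = x \iff gg'^{-1} \in \stab{G}{x} \iff \stab{G}{x}g = \stab{G}{x}g',
\]
using the definition of the stabilizer and the axioms $x1_G=x$ and $x(hh')=(xh)h'$ of a group action. Surjectivity is immediate from the definition of the orbit $xG=\{xg\mid g\in G\}$.

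There is no real obstacle here; the proposition is essentially a restatement of the Fundamental Lemma in terms of cardinalities, so the only decision is how much detail to include. I would keep the proof to one or two lines, citing the bijection from the Fundamental Lemma and Lagrange's theorem, rather than re-deriving the bijection explicitly.
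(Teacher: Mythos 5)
Your proposal is correct and follows exactly the paper's route: the paper presents the proposition as an immediate consequence of the Fundamental Lemma's bijection $\stab{G}{x}\backslash G \to xG$ together with Lagrange's theorem, which is precisely your argument. The extra verification of well-definedness and injectivity is a harmless elaboration of what the paper leaves implicit.
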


\subsection{Group actions and codes over metric sets}\label{sec:orbitmetric}

Now we switch to coding theory and assume $X$ to be a set admitting a metric function $d:X\times X\to {\mathbb R}$.  A \emph{code} is a subset ${\cal C}\subseteq X$. Its minimum distance is defined as
\[
d({\cal C}) :=\min\{d(x,x')\mid x,x'\in {\mathcal C}, x\ne x'\} .
\]

A \emph{minimum distance decoder} of $\mathcal{C}\subseteq X$ is a maximum likelihood decoder 
\[
\dec_{\mathcal{C}}:X\longrightarrow \mathcal{C},
\]
which maps an element $r\in X$ onto an element $c\in {\cal C}$ such that $d(c,r)\le d(c',r)$ for all $c'\in {\cal C}$.

The following property is well-known:

\begin{lem}
Let $\mathcal {C}\subseteq X$, $c\in \mathcal{C}$, and let $r\in X$. 
If $d(c,r)\le \lfloor\genfrac{}{}{}{}{d({\cal C})-1}{2}\rfloor$, the minimum distance decoder $\dec_{\mathcal{C}}$ applied to $r$ finds the uniquely determined element $c$.
\end{lem}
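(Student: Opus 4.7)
The plan is to invoke the standard triangle-inequality argument that shows any codeword within the packing radius of a received word is the unique nearest codeword. Set $t:=\lfloor (d(\mathcal{C})-1)/2\rfloor$, so that $2t\le d(\mathcal{C})-1$, and assume $d(c,r)\le t$.

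First I would show uniqueness by contradiction. Suppose there exists some $c'\in\mathcal{C}$ with $c'\ne c$ and $d(c',r)\le d(c,r)$. Applying the triangle inequality available since $d$ is a metric on $X$, one obtains
\[
d(c,c')\le d(c,r)+d(r,c')\le 2\,d(c,r)\le 2t\le d(\mathcal{C})-1 < d(\mathcal{C}),
\]
which contradicts the definition of $d(\mathcal{C})$ as the minimum distance among distinct codewords of $\mathcal{C}$. Therefore $d(c',r) > d(c,r)$ for every $c'\in\mathcal{C}\setminus\{c\}$, so $c$ is the unique minimizer of the function $c^*\mapsto d(c^*,r)$ over $\mathcal{C}$.

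Finally, I would conclude by invoking the definition of the minimum distance decoder: $\dec_{\mathcal{C}}(r)$ returns some element of $\mathcal{C}$ at minimal distance to $r$. By the uniqueness just established, this element must be $c$, giving $\dec_{\mathcal{C}}(r)=c$.

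There is no real obstacle here: the entire content of the lemma is the triangle inequality combined with the definition of minimum distance, and the floor appears only to make $2t\le d(\mathcal{C})-1$ an integer inequality valid regardless of the parity of $d(\mathcal{C})$. One small point worth being explicit about is that the argument uses only the metric axioms (symmetry and triangle inequality) and not any additional structure of $X$, so the lemma holds verbatim in both the Hamming and subspace-distance settings considered in the paper.
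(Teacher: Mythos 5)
Your proof is correct: the paper states this lemma without proof, merely calling it ``well-known,'' and your triangle-inequality argument is exactly the standard packing-radius reasoning that justifies it. Nothing is missing; the uniqueness step and the appeal to the decoder's definition are both sound.
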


From now on assume that the group elements act as an isometry on $X$, i.\,e. 
\[
d(xg,x'g)=d(x,x')
\]
for all $x,x'\in X$ and $g\in G$.
For two subsets $A,B\subseteq X$ we define the \emph{intersubset distance} as
\[
\hat d(A,B)=\min\{d(x,y)\mid x\in A, y\in B\}
\]
which in fact defines a metric on the set of subsets of $X$. Using this intersubset distance we can define a metric on the set of orbits of $G$ on $X$:
\begin{align*}
\hat d(xG,yG)&=\min\{d(xg,yh)\mid g,h\in G\}\\
&=\min\{d(x,yhg^{-1})\mid g,h\in G\}\\
&=\min\{d(x,yg')\mid g'\in G\}   .
\end{align*}

\begin{lem}\label{lem:hatdelta:metric}
The mapping $\hat d$ defines a metric function on the set of orbits $X/G$.
\end{lem}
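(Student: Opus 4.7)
The plan is to verify the four axioms of a metric (non-negativity, identity of indiscernibles, symmetry, triangle inequality) for $\hat d$ on $X/G$. First, I would observe that the chain of equalities displayed immediately above the lemma already shows $\hat d$ is well-defined on orbits: any representative $x$ may be fixed, with the minimum taken over images $yg$ of the second orbit. I will use the form
\[
\hat d(xG,yG)=\min_{g\in G} d(x,yg)
\]
throughout. Non-negativity is inherited from $d$.

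For symmetry, fix any $g\in G$. By isometry of the $G$-action, $d(x,yg)=d(xg^{-1},y)=d(y,xg^{-1})$, and since $g\mapsto g^{-1}$ is a bijection of $G$, taking minima yields $\hat d(xG,yG)=\hat d(yG,xG)$. For the identity of indiscernibles, $\hat d(xG,yG)=0$ means that $d(x,yg)=0$ for some $g\in G$, hence $x=yg\in yG$, which forces $xG=yG$; conversely, if $xG=yG$ then $x=yg$ for some $g$, so $d(x,yg)=0$ and the minimum is $0$.

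For the triangle inequality, let $xG, yG, zG \in X/G$. Choose $g_1\in G$ realising $\hat d(xG,yG)=d(x,yg_1)$ and $g_2\in G$ realising $\hat d(yG,zG)=d(y,zg_2)$. Applying the element $g_1$ (an isometry) to the second identity gives $d(yg_1,zg_2g_1)=d(y,zg_2)=\hat d(yG,zG)$. The ordinary triangle inequality for $d$ now yields
\[
d(x,zg_2g_1)\le d(x,yg_1)+d(yg_1,zg_2g_1)=\hat d(xG,yG)+\hat d(yG,zG),
\]
and since $\hat d(xG,zG)\le d(x,zg_2g_1)$ by definition, the triangle inequality follows.

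The only step that requires care is the triangle inequality, because the minima defining $\hat d(xG,yG)$ and $\hat d(yG,zG)$ a priori pick out different representatives of the middle orbit $yG$. The key trick, enabled by the isometric action, is to translate the second pair by $g_1$ so that both distances share the representative $yg_1$ of the middle orbit; after that the usual triangle inequality of $d$ closes the argument.
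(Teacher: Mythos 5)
Your proof is correct and follows essentially the same route as the paper: the key step in both is to translate the pair realising $\hat d(yG,zG)$ by the group element realising $\hat d(xG,yG)$, so that both distances share the same representative of the middle orbit before applying the triangle inequality for $d$. You simply spell out symmetry and positive definiteness in more detail than the paper, which dismisses them as immediate.
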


\begin{proof}
The properties symmetry and positive definiteness of $\hat d$ follow directly from the properties of the metric $d$. The triangle inequality also holds: Let $x,y,z\in X$ and let $g,h\in G$ such that $\hat d(xG,yG)=d(x,yg)$ and $\hat d(yG,zG)=d(y,zh)=d(yg,zhg)$. Then we get
\begin{align*}
\hat d(xG,yG)+\hat d(yG,zG)
&=d(x,yg)+d(yg,zhg)\\
&\ge d(x,z(hg))\\
&\ge \hat d(xG,zG)
\end{align*}
which completes the proof.
\end{proof}

\begin{defi}
Codes that are orbits of a group $G$ on a metric set $X$, are called \emph{orbit codes}. If the group $G$ is cyclic we call them \emph{cyclic orbit codes}.
\end{defi}

We can evaluate the minimum distance of orbit codes more conveniently than considering all pairs of elements.

\begin{thm}\label{lem:mindist}
Let ${\cal C}=xG$ for an element $x\in X$. Then
\[
d({\cal C})=\min\{d(x,xg)\mid g\in{\mathcal T}(\stab{G}{x}\backslash G),g\not  \in \stab{G}{x}\}.
\]
\end{thm}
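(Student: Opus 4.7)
The plan is to use two reductions in sequence: first, the isometry of the $G$-action collapses the double minimum $\min_{g,h}d(xg,xh)$ into a single minimum $\min_{k}d(x,xk)$; second, the stabilizer acts trivially on $x$, which shows that $d(x,xk)$ depends only on the right coset $\stab{G}{x}k$, so we may restrict $k$ to a transversal $\mathcal{T}(\stab{G}{x}\backslash G)$.

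In more detail, I would start from the definition
\[
d(\mathcal{C})=\min\{d(xg,xh)\mid g,h\in G,\ xg\neq xh\}.
\]
Since every $g\in G$ acts as an isometry on $X$ (by hypothesis), applying $g^{-1}$ to both arguments gives
\[
d(xg,xh)=d(x,xhg^{-1}).
\]
Letting $k=hg^{-1}$ range over $G$, and noting that $xg\neq xh$ is equivalent to $xk\neq x$, i.e.\ to $k\notin\stab{G}{x}$, I obtain
\[
d(\mathcal{C})=\min\{d(x,xk)\mid k\in G,\ k\notin \stab{G}{x}\}.
\]

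Next I reduce the index set to a transversal. If $k,k'$ lie in the same right coset of $\stab{G}{x}$, say $k'=sk$ with $s\in\stab{G}{x}$, then $xk'=x(sk)=(xs)k=xk$, so $d(x,xk')=d(x,xk)$. Hence the function $k\mapsto d(x,xk)$ is constant on each right coset, and moreover the condition $k\notin\stab{G}{x}$ is a property of the coset (the trivial coset $\stab{G}{x}\cdot 1_G$). Therefore the minimum may be computed over any set of coset representatives, excluding the representative of the identity coset, which yields
\[
d(\mathcal{C})=\min\{d(x,xg)\mid g\in\mathcal{T}(\stab{G}{x}\backslash G),\ g\notin\stab{G}{x}\}.
\]

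There is no real obstacle here: the argument is a direct unwinding of the Fundamental Lemma of group actions combined with the isometry assumption. The only mildly delicate point is keeping the right versus left conventions straight — the group acts on the right, so the relevant cosets are right cosets $\stab{G}{x}\backslash G$ and the substitution $k=hg^{-1}$ (rather than $g^{-1}h$) is what makes the isometry step match the transversal step cleanly.
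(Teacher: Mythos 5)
Your proof is correct and follows essentially the same route as the paper: reduce $d(xg,xh)$ to $d(x,xhg^{-1})$ by isometry, then restrict to a transversal of $\stab{G}{x}\backslash G$ (the paper cites the Fundamental Lemma for this last step, where you verify directly that $k\mapsto d(x,xk)$ is constant on right cosets, which is the same fact). Your version simply spells out the details more explicitly.
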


\begin{proof}
To evaluate the minimum distance we have to consider all $d(y,z)$ for all $y,z\in {\cal C}$ with $y\ne z$, i.\,e. $d(xg,xh)=d(x,xhg^{-1})$ for all $g,h\in G$ such that $xg\ne xh$, or equivalently $d(x,xg)$ for all $g\in G$ with $xg \ne x$. The Fundamental Lemma yields that it is sufficient to consider all $g$ from a transversal between the stabilizer $\stab{G}{x}$ and $G$ in order to run through the orbit $xG$.
\end{proof}

The following lemma describes a minimum distance decoder for orbit codes in terms of group actions using the canonizing mapping. The crucial fact is, that we have a transversal ${\cal T}(X/G)=\{x_0,x_1,\ldots,x_{\ell-1}\}$ of the orbits of $G$ on the set $X$, such that the distance of two transversal elements $x_0$ and $x_i$ is the minimal distance between elements of the orbit code ${\mathcal C}=x_0G$ and the orbit $x_iG$, i.\,e. the distance between $x_0$ and $x_i$ is the intersubset distance between the two sets $x_0G$ and $x_iG$.

\begin{thm}\label{lem:can:decoding}
Let ${\cal T}(X/G)=\{x_0,x_1,\ldots,x_{\ell-1}\}$ be a transversal of the orbits, such that $\hat d(x_0G,x_iG)=d(x_0,x_i)$ for all  $1\le i< \ell$ and let ${\cal C}=x_0G$ be the corresponding orbit code. Then the mapping
\begin{align*}
\dec_{\cal C}:X &\longrightarrow \Cvs\\
r&\longmapsto x_0\gamma_{{\cal T}(X/ G)}(r)^{-1}
\end{align*}
yields a minimum distance decoder, i.\,e. $x_0\gamma_{{\cal T}(X/ G)}(r)^{-1}\in \mathcal{C}$ is the closest codeword to $r\in X$.
\end{thm}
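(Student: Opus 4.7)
The plan is to unpack the canonizing map to identify $\dec_{\Cvs}(r)$ concretely, apply isometry of the $G$-action to evaluate $d(r,\dec_{\Cvs}(r))$, and then invoke the hypothesis on the transversal to rule out any closer codeword. No machinery beyond the definition of $\gamma_{{\cal T}(X/G)}$ and the intersubset distance $\hat d$ should be needed.

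First I would note that for any $r\in X$ the element $g:=\gamma_{{\cal T}(X/G)}(r)$ is by definition a group element with $rg\in{\cal T}(X/G)$; since the transversal contains exactly one representative of each orbit, there is a unique index $i$ with $rg=x_i$, whence $r=x_ig^{-1}$ and $\dec_{\Cvs}(r)=x_0g^{-1}\in x_0G=\Cvs$. This confirms that the map really has codomain $\Cvs$. Using that $G$ acts by isometries, I would then compute $d(r,x_0g^{-1})=d(rg,x_0)=d(x_i,x_0)$. For an arbitrary codeword $c=x_0h\in\Cvs$, the points $r\in x_iG$ and $c\in x_0G$ lie in fixed orbits, so by definition of the intersubset distance
\[
d(r,c)\;\geq\;\hat d(x_0G,x_iG)\;=\;d(x_0,x_i),
\]
the last equality being exactly the hypothesis placed on the transversal. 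Combining the two distance computations yields $d(r,c)\geq d(r,\dec_{\Cvs}(r))$ for every $c\in\Cvs$, which is the minimum distance decoding property.

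The step that actually carries the content is this final inequality: isometry of the $G$-action alone only tells us that $\dec_{\Cvs}(r)$ sits at distance $d(x_i,x_0)$ from $r$, and it is the assumption $\hat d(x_0G,x_iG)=d(x_0,x_i)$ that promotes this value to the minimum achievable distance over all of $\Cvs$. Without this hypothesis the map $\dec_{\Cvs}$ would still be well-defined, but would generally fail to be a nearest-neighbor decoder, because the representative $x_i$ of the orbit containing $r$ would not in general be the element of $x_iG$ realizing the minimum distance to $x_0$. The remainder of the argument is routine bookkeeping with group-action identities.
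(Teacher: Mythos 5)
Your argument is correct and follows essentially the same route as the paper's proof: identify $g=\gamma_{{\cal T}(X/G)}(r)$, use isometry of the action to get $d(r,x_0g^{-1})=d(x_0,x_i)$, and invoke the transversal hypothesis $\hat d(x_0G,x_iG)=d(x_0,x_i)$ to conclude minimality. Your write-up is in fact slightly more explicit than the paper's in spelling out the inequality $d(r,c)\ge\hat d(x_0G,x_iG)$ for an arbitrary codeword $c$, but the substance is identical.
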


\begin{proof}
Assume that $r$ is in the orbit of $x_i$. The aim is now to find a codeword $c\in \mathcal{C}=x_0G$, such that $d(c,r)$ is minimal. Let  $g:=\gamma_{{\cal T}(X/ G)}(r)$ be the group element that maps $r$ onto the orbit representative $x_i$, i.\,e. $x_i=rg$. Then we obtain 
\begin{align*}
d(x_0,x_i)&=d(x_0,rg)=d(x_0g^{-1},r)\\&=d(x_0\gamma_{{\cal T}(X/G)}(r)^{-1},r) .
\end{align*}
Since we know that $d(x_0,x_i)$ is minimal between elements of $\mathcal{C}=x_0G$ and $x_iG$ we get that $d(x_0\gamma_{{\cal T}(X/G)}(r)^{-1},r)$ is also minimal. Hence $c:=x_0\gamma_{{\cal T}(X/G)}(r)^{-1}$ is the closest codeword to $r$.

This fact can easily be verified using Figure~\ref{fig:vis:dec4}.
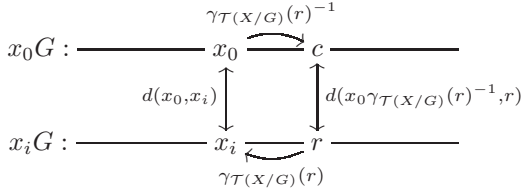
\begin{figure}[!htbp]
\[
\SelectTips{eu}{12} 
\xymatrix @-3pt
{
x_0G:\ar@{-}[rr]
&
&{x_0}\ar@{<->}[d]_{d(x_0,x_i)}\ar@/^/[r]^{\gamma_{{\cal T}(X/G)}(r)^{-1}}\ar@{-}[r]
& c\ar@{<->}[d]^{d(x_0\gamma_{{\cal T}(X/G)}(r)^{-1},r)}\ar@{-}[rr]
&
&\\
x_iG:\ar@{-}[rr]
&
&{x_i} \ar@{-}[r]
& r\ar@/^/[l]^{\gamma_{{\cal T}(X/G)}(r)}\ar@{-}[rr]
&
&\\
}
\]
\caption{Visualization of the minimum distance decoder}\label{fig:vis:dec4}
\end{figure}
\end{proof}


\subsection{Isometric orbit codes}\label{sec:iso}

Let $H$ be a distance-preserving group acting \emph{transitively} on the set $X$, i.\,e. $xH = X$ for all $x\in X$.

In this section we want to classify orbit codes that arise by subgroups $G$ of $H$, denoted by $G\leq H$. For this purpose we introduce the notion of isometry of orbit codes:

\begin{defi}
Two orbit codes $\mathcal{C}$ and $\mathcal{C}'$ defined by subgroups $G,G'\le H$, i.\,e. $\mathcal{C}=xG$ and $\mathcal{C}'=x'G'$ for elements $x,x'\in X$, are called \emph{$H$-isometric}, if there is a group element $h\in H$ mapping $\mathcal{C}$ onto $\mathcal{C}'$:
\[
\mathcal{C}'=\mathcal{C}h:=\{xh\mid x\in\mathcal{C}\}
\]
\end{defi}

\begin{thm}\label{thm:isometricorbitcodes} The following properties hold:
\begin{enumerate}
\item
Let $G\leq H$ and $G'=h^{-1}Gh$ for some $h\in H$, and $x\in X$. Then for $x'=xh$ the orbit codes $\mathcal{C}=xG$ and $\mathcal{C}'=x'G'$ are $H$-isometric.
\item 
Let $\mathcal{C}=xG$ and $\mathcal{C}'=\mathcal{C}h$ for some $h\in H$, i.e. $\Cvs$ and $\Cvs'$ are $H$-isometric. Then $\mathcal{C'}$ is an orbit code $\mathcal{C}'=x'G'$ with $x'=xh$ and $G'=h^{-1}Gh$.
\end{enumerate}
\end{thm}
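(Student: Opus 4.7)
The plan is to verify both statements by direct computation, using only the group action axioms $x(gg')=(xg)g'$ and the definition of conjugation. The heart of the matter is the simple identity $(xh)(h^{-1}gh)=xgh$, which lets us rewrite an orbit under $G$ acting on $x$ as an orbit under $h^{-1}Gh$ acting on $xh$.

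For part (1), I would start by noting that $G'=h^{-1}Gh$ is indeed a subgroup of $H$ (conjugation by $h\in H$ is a group automorphism of $H$), so $\mathcal{C}'=x'G'$ is a well-defined orbit code. Then I would compute
\[
\mathcal{C}h = (xG)h = \{(xg)h\mid g\in G\} = \{x(gh)\mid g\in G\}
\]
and, using $gh = h(h^{-1}gh)$, rewrite this as
\[
\{(xh)(h^{-1}gh)\mid g\in G\} = x'\{h^{-1}gh\mid g\in G\} = x'G' = \mathcal{C}',
\]
which by definition means that $\mathcal{C}$ and $\mathcal{C}'$ are $H$-isometric.

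For part (2), I would run essentially the same calculation in the opposite direction. Given $\mathcal{C}=xG$ and $\mathcal{C}'=\mathcal{C}h$ for some $h\in H$, set $x':=xh$ and $G':=h^{-1}Gh$. Then $G'\le H$, and applying the identity $(xh)(h^{-1}gh)=xgh$ gives
\[
x'G' = \{(xh)(h^{-1}gh)\mid g\in G\} = \{xgh\mid g\in G\} = (xG)h = \mathcal{C}h = \mathcal{C}',
\]
which exhibits $\mathcal{C}'$ as the orbit code claimed.

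There is no real obstacle here; the only care needed is to check that the associativity/action axioms are applied correctly and that $h^{-1}Gh$ is a subgroup of the ambient transitive group $H$. Once the identity $(xh)(h^{-1}gh)=xgh$ is used, both directions collapse to one-line set manipulations, so part (1) and part (2) are essentially the same computation read forwards and backwards.
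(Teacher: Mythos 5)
Your proposal is correct and matches the paper's own argument: the paper proves both parts at once via the same chain of identities $\mathcal{C}'=\{xh(h^{-1}gh)\mid g\in G\}=\{(xg)h\mid g\in G\}=\mathcal{C}h$, which is exactly your key identity $(xh)(h^{-1}gh)=xgh$ read in both directions. Your added remark that $h^{-1}Gh$ is again a subgroup of $H$ is a small point the paper leaves implicit, but otherwise the two proofs are the same.
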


\begin{proof}
Both statements follow from 
\begin{align*}
\mathcal{C}'&=\{x'g'\mid g'\in G'\}\\
&=\{xh(h^{-1}gh)\mid g\in GÊ\}\\
&=\{(xg)h\mid g\in G\}\\
&= \mathcal{C}h    .
\end{align*} 
\end{proof}

\begin{co}
Two orbit codes are $H$-isometric, if and only if they arise from conjugate subgroups of $H$.
\end{co}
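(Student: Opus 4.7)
The plan is to deduce this corollary as an essentially immediate reformulation of Theorem~\ref{thm:isometricorbitcodes}, with each direction supplied by one of its two parts.

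For the forward implication, I would unfold the definition of $H$-isometry: if $\mathcal{C}$ and $\mathcal{C}'$ are $H$-isometric, then by definition there is an $h \in H$ with $\mathcal{C}' = \mathcal{C}h$. Writing $\mathcal{C} = xG$ and invoking part~2 of Theorem~\ref{thm:isometricorbitcodes} produces a presentation $\mathcal{C}' = x'G'$ with $x' = xh$ and $G' = h^{-1}Gh$, which exhibits $G$ and $G'$ as conjugate subgroups of $H$.

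For the reverse implication, I would invoke part~1 of Theorem~\ref{thm:isometricorbitcodes} directly. Given the conjugation $G' = h^{-1}Gh$ and choosing the compatible base point $x' = xh$, part~1 yields that $xG$ and $x'G'$ are $H$-isometric, which is exactly the claim.

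The only subtle point I anticipate is the interpretation of the phrase ``arise from conjugate subgroups'': in the theorem, the conjugating element $h$ also controls the relationship between base points via $x' = xh$, so the correct reading is that the two codes admit presentations $\mathcal{C} = xG$ and $\mathcal{C}' = (xh)(h^{-1}Gh)$ for a common $h \in H$. With this reading the corollary is a one-line consequence of the preceding theorem; without some matching of base points, two orbits of conjugate subgroups need not be $H$-isometric in general, so this compatibility is the non-cosmetic point worth flagging in the write-up.
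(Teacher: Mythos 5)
Your proposal is correct and matches the paper's intent: the corollary is stated without proof precisely because it is the immediate combination of the two parts of Theorem~\ref{thm:isometricorbitcodes}, exactly as you describe. Your observation that ``arise from conjugate subgroups'' must be read with the compatible base points $x'=xh$ is a fair caveat about the corollary's phrasing, but it does not change the argument.
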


Hence, it is sufficient to consider just one representative of each conjugacy class of \emph{subgroups} of $H$, in order to classify by $H$-isometry. In the case of cyclic orbit codes we can even restrict to conjugacy classes of \emph{group elements} of $H$.


\begin{lem} Two cyclic subgroups $G,G'$ of $H$ are conjugate if and only if there exist some generators $g,g'\in H$ of these groups, i.e. $G=\langle g \rangle$ and $G'=\langle g' \rangle$, which are conjugate, i.\,e. $g'=h^{-1}gh$ for some $h\in H$.
\end{lem}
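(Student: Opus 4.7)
The plan is to prove the two directions separately, using as the only tool the fact that conjugation by a fixed element $h \in H$ defines a group isomorphism between $G$ and $h^{-1}Gh$, and that group isomorphisms between cyclic groups send generators to generators.

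For the easy direction ($\Leftarrow$), I would assume $G = \langle g \rangle$, $G' = \langle g' \rangle$, and $g' = h^{-1}gh$. Then by direct computation $(g')^k = h^{-1}g^k h$ for every $k \in \Z$, so
\[
G' = \langle g' \rangle = \{h^{-1}g^k h \mid k \in \Z\} = h^{-1}\langle g \rangle h = h^{-1} G h,
\]
which exhibits $G$ and $G'$ as conjugate subgroups.

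For the forward direction ($\Rightarrow$), I would start from the hypothesis $G' = h^{-1}Gh$ for some $h \in H$. Pick any generator $g$ of $G$ and set $g' := h^{-1}gh$; clearly $g' \in G'$. The key point is that the map $\varphi_h : G \to G'$ defined by $\varphi_h(x) = h^{-1}xh$ is a group isomorphism (it is a bijection by the existence of the inverse conjugation $\varphi_{h^{-1}}$, and it preserves multiplication since $h^{-1}xh \cdot h^{-1}yh = h^{-1}(xy)h$). Since an isomorphism of cyclic groups sends a generator to a generator, $g' = \varphi_h(g)$ generates $G'$. Thus $g$ and $g'$ are generators of $G$ and $G'$ respectively that satisfy $g' = h^{-1}gh$, as required.

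I do not anticipate any real obstacle here; the statement is essentially a restatement of the fact that conjugation acts by isomorphism and that cyclic group isomorphisms carry generators to generators. The only mild subtlety is resisting the temptation to choose $g$ first \emph{and} $g'$ first independently, since the content of the lemma is precisely that a compatible pair of generators exists; starting from an arbitrary generator $g$ of $G$ and \emph{defining} $g' := h^{-1}gh$ avoids this trap.
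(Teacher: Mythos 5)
Your proof is correct and follows essentially the same route as the paper: the backward direction is the identical computation $(h^{-1}gh)^i = h^{-1}g^ih$, and your forward direction (take a generator $g$ of $G$ and define $g' := h^{-1}gh$, then note conjugation is an isomorphism carrying generators to generators) is just an abstract rephrasing of the paper's direct computation $h^{-1}Gh = \{(h^{-1}gh)^i\} = \langle h^{-1}gh\rangle$. No gaps.
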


\begin{proof}
\begin{description}
\item[\fbox {$\Leftarrow$}]
Let $g'=h^{-1}gh$. Then
\begin{align*}
\langle g'\rangle &=\langle (h ^{-1} g h) \rangle\\
&=\{(h ^{-1} g h)^i\mid 0\le i <\ell\}\\
&=\{h ^{-1} g^i h\mid 0\le i <\ell\}\\
&=h^{-1}\langle g \rangle h
\end{align*}
i.e. $G'=h^{-1}Gh$.
\item[\fbox {$\Rightarrow$}]
Let $G'=h^{-1}Gh$ and $g$ a generator of $G$. Then
\begin{align*}
G'&=h^{-1}Gh\\
&=\{h^{-1}g^ih\mid 0\le i < \ell\}\\
&=\{(h^{-1}gh)^i\mid 0\le i < \ell\}\\
&=\langle h^{-1}gh \rangle
\end{align*}
i.e. $h^{-1}gh$ is a generator of $G'$.
\end{description}
\end{proof}

\begin{co} 
Two cyclic orbit codes are $H$-isometric if and only if they arise by two cyclic subgroups $G_{1}, G_{2}\le H$ that are generated by conjugate elements of $H$. I. e. there exist $g_{1}, g_{2} \in H$ such that
\[G_{1}=\langle g_{1}\rangle , \quad G_{2}=\langle g_{2} \rangle \]
and $g_{2}=h^{-1}g_{1}h$ for some $h\in H$.
\end{co}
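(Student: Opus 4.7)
The plan is to derive this corollary as an immediate consequence of the two results that precede it: the earlier corollary (the one placed just after Theorem \ref{thm:isometricorbitcodes}), which states that two orbit codes are $H$-isometric if and only if they arise from conjugate subgroups of $H$, together with the lemma just proved, which characterizes conjugacy of cyclic subgroups in terms of the existence of a pair of conjugate generators.

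Concretely, I would simply chain these two equivalences. Given two cyclic orbit codes $\mathcal{C}_{1}=x_{1}G_{1}$ and $\mathcal{C}_{2}=x_{2}G_{2}$ with $G_{1},G_{2}\le H$ cyclic, the earlier corollary yields that $\mathcal{C}_{1}$ and $\mathcal{C}_{2}$ are $H$-isometric if and only if $G_{1}$ and $G_{2}$ are conjugate subgroups of $H$. The preceding lemma then replaces this condition with the assertion that there exist generators $g_{1},g_{2}\in H$ satisfying $G_{1}=\langle g_{1}\rangle$, $G_{2}=\langle g_{2}\rangle$, and $g_{2}=h^{-1}g_{1}h$ for some $h\in H$. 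Composing the two equivalences delivers exactly the claim.

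There is essentially no technical obstacle here; the argument is a one-line composition of the two previous results. The only point that deserves attention is the quantifier structure: the lemma (and hence the corollary) only guarantees the existence of \emph{some} pair of conjugate generators, not that every pair of generators is conjugate in $H$. As the statement of the corollary is formulated in the same existential form, this matches perfectly and no further work is needed.
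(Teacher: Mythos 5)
Your proposal is correct and follows exactly the route the paper intends: the corollary is stated without proof precisely because it is the composition of the preceding corollary (isometry $\Leftrightarrow$ conjugate subgroups) with the lemma on conjugacy of cyclic subgroups via conjugate generators. Your remark on the existential quantifier over generators is accurate and matches the statement as given.
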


In order to furthermore study their properties, we introduce the notion of distance distribution of orbit codes which is adapted from the definition of  weight enumerators of classical block codes.

\begin{defi}
Let $\mathcal{C}=xG$ be an orbit code. The \emph{distance distribution} of $\mathcal{C}$ is the sequence $D_0,D_1,D_2,\ldots$ of natural numbers with
\[
D_i := |\{ y\in{\mathcal C}\mid d(x,y)=i\}|   .
\]
\end{defi}

%

\begin{thm} 
$H$-isometric orbit codes have the same distance distribution.
\end{thm}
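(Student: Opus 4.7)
The plan is to exploit the fact that $H$-isometric orbit codes are related by a single element of $H$, which acts by isometry on $X$. By Theorem~\ref{thm:isometricorbitcodes}, two $H$-isometric orbit codes $\mathcal{C}=xG$ and $\mathcal{C}'=x'G'$ satisfy $\mathcal{C}'=\mathcal{C}h$ for some $h\in H$, with the natural pairing $x'=xh$ and $G'=h^{-1}Gh$. The proof should come down to writing down the bijection induced by $h$ and observing that it preserves distances from the base point.

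First I would address what looks like the main subtlety: the number $D_{i}$ is defined using the distinguished base point $x$ in $\mathcal{C}=xG$, so I need to know that any choice of base point within the same orbit gives the same distribution. For any other $y_{0}=xg\in\mathcal{C}$, right multiplication by $g$ is an isometric permutation of $X$ (since $g\in G\le H$), and
\[
d(y_{0},xg')=d(xg,xg')=d(x,xg'g^{-1}).
\]
As $xg'$ ranges over $\mathcal{C}$, so does $xg'g^{-1}$, which yields a bijection between $\{y\in\mathcal{C}\mid d(y_{0},y)=i\}$ and $\{y\in\mathcal{C}\mid d(x,y)=i\}$. Hence $D_{i}$ is genuinely an invariant of the orbit code, not of the chosen generator.

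With this in hand, I would take $x'=xh$ as the base point of $\mathcal{C}'$ and define $\phi\colon\mathcal{C}\to\mathcal{C}'$ by $\phi(y)=yh$. This is a bijection with inverse $y'\mapsto y'h^{-1}$, and because $h\in H$ acts by isometry,
\[
d(x',\phi(y))=d(xh,yh)=d(x,y)
\]
for every $y\in\mathcal{C}$. Therefore $\phi$ restricts to a bijection between $\{y\in\mathcal{C}\mid d(x,y)=i\}$ and $\{y'\in\mathcal{C}'\mid d(x',y')=i\}$, so $D_{i}=D_{i}'$ for all $i$.

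The step I expect to be the actual obstacle is not the transfer along $\phi$, which is essentially a one-line consequence of $H$ acting by isometries, but the base-point independence of the distance distribution. Once that is recorded, the isometric bijection between $\mathcal{C}$ and $\mathcal{C}'=\mathcal{C}h$ makes the statement essentially automatic.
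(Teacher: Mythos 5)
Your proof is correct and follows essentially the same route as the paper: transport distances from the base point along the isometry $h$, using $d(xh,yh)=d(x,y)$. The only difference is your preliminary check that the distance distribution is independent of the chosen base point within the orbit, a well-definedness issue the paper's proof passes over silently (it implicitly takes $x'=xh$ as the base point of $\mathcal{C}'$); this is a worthwhile refinement but not a different argument.
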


\begin{proof}
Let $\Cvs=xG$ and $\Cvs'=\Cvs h$ (for some $h\in H$) be two $H$-isometric orbit codes. Hence $h$ is a bijection between the two codes. Moreover, 
\[d(xh, yh) = d(x, y)
\]
for and $x,y \in \Cvs$, i.e. the number of elements in $\Cvs$ with a fixed distance from $x$ is the same as the number of elements in $\Cvs'$ with that distance from $xh$.
\end{proof}

\subsection{Characterization of linear codes as orbit codes}

Let $X={\mathbb F}_q^n$ and let $C\le {\mathbb F}_q^n$ be a linear code of dimension $k$. For any vector $v\in {\mathbb F}_q^n$ the mapping 
\begin{align*}
\tau_v:{\mathbb F}_q^n&\longrightarrow {\mathbb F}_q^n\\
x&\longmapsto x\tau_v:=x+v
\end{align*}
defines a bijection. Since $C$ is an additive group the set
\[
G=\{\tau_c\mid c\in C\}
\]
also forms a group with respect to the composition. The map
\begin{align*}
{\mathbb F}_q^n\times G &\longrightarrow {\mathbb F}_q^n\\ 
(x,\tau_c) &\longmapsto x\tau_c=x+c
\end{align*}
defines a group action, preserving the Hamming distance:
\[
d(x,y)=d(x+c,y+c)=d(x\tau_c,y\tau_c).
\]

Since the stabilizers $G_x$ of all elements $x\in{\mathbb F}_q^n$ are trivial, and the group has size $|G|=q^{k}$ the number of orbits is
\[
\ell=|{\mathbb F}_q^n/G|=q^{n-k}.
\]

Let ${\cal T}({\mathbb F}_q^n/G)=\{x_0,x_1,\ldots,x_{\ell-1}\}$ be a transversal with $x_0=0$ satisfying the desired property
\[
\hat d(x_0G,x_iG)=\hat d(0G,x_iG)=d(0,x_i)=\wt(x_i)
\]
for all $1\le i< \ell$, i.\,e. the orbit representative $x_i$ has minimal weight within its orbit. Such an element is called \emph{coset leader}. 

If we take $x_0=0$ the corresponding orbit code ${\cal C}=x_0G=0G$ satisfies
\[
{\cal C}=0G=\{0{\tau_c}\mid c\in C\}=\{c\mid c\in C\}=C,
\]
i.\,e. it is the linear code itself. Since the stabilizers $\stab{G}{x}$ for all $x\in \F_q^n$ are trivial, a transversal between $G$ and $\stab{G}{0}$ is the complete set $G$, i.\,e. ${\mathcal T}(\stab{G}{0}\backslash G)=G$. Then Lemma \ref{lem:mindist} yields the formula:
\begin{align*}
d({\cal C})&=\min\{d(0,0\tau_c)\mid \tau_c\in {\mathcal T}(\stab{G}{0}\backslash G),\tau_c\ne id\}\\
&=\min\{d(0,c)\mid c\in C, c\ne 0\}.
\end{align*}
If $r\in x_iG$ the element $\tau_{x_i-r}\in G$ is the canonical element of $x_{i} G$, i.\,e.
\[
\gamma_{{\mathcal T}({\mathbb F}_q^n/G)}(r)=\tau_{x_i-r},
\]
since 
\[
y\tau_{x_i-r}=r+(x_i-r)=x_i.
\] 
It is obvious that $\tau_{x_i-r}^{-1}=\tau_{r-x_i}$. Hence, according to Lemma \ref{lem:can:decoding} the mapping
\begin{align*}
y\mapsto \, x_0\gamma_{{\mathcal T}({\mathbb F}_q^n/G)}(r)^{-1} &= x_0{\tau_{x_i-r}^{-1}}\\&=0{\tau_{r-x_i}}\\&=0+(r-x_i)\\&=r-x_i
\end{align*}
defines the minimum distance decoder. In fact, this corresponds exactly to the decoding procedure, which is known as \emph{coset leader decoding}. The challenge here is to find the corresponding orbit $x_iG$ and hence the corresponding orbit representative $x_i$, which has minimal weight in its orbit. The determination of the orbit can be done by the evaluation of the syndrome, which is invariant on each orbit: If $\Delta$ denotes a check matrix of the linear code $C$ the equivalence holds:
\[
x\Delta^t=y\Delta^t\iff xG=yG.
\]

We now examine the $H$-isometry of these codes, where
\[H=\{\tau_{c} \mid c \in \F_{q}^{n}\}\] 
acts on $\F_{q}^{n}$ the same way that $G$ does. 
According to Theorem \ref{thm:isometricorbitcodes} a code $\mathcal{C}'=x'G'$ with 
\[x'=0\tau_{h}=h , \quad G'=\tau_{h}^{-1} G \tau_{h}\]
 is $H$-isometric to the orbit code $\mathcal{C}=0G$. Since the group $H$ is commutative we get $G'=G$ and hence 
\[
\mathcal{C'}=hG=\{h+c\in C\} = h+C
\] 
if $G=\{\tau_c\mid c\in C\}$ is defined by a linear code $C\le \F_q^n$. Finally, the codes which are $H$-isomorphic to a linear code $C\le \F_q^n$ are the affine spaces $h+C$ for any $h\in \F_q^n$.

\subsection{Random network orbit codes}

Now we consider random network codes arising from group actions which are simply called \emph{orbit codes} and which were introduced in \cite{tr10p}.

The general linear group of degree $n$, denoted by $\GL_n$, is the set of all invertible $n\times n$-matrices with entries in $\F_{q}$. If we have to specify the underlying field we will write $\GL_n(\F_q)$:
\[
\GL_n(\F_q):=\{A\in \F_q^{n\times n}\mid \rank(A)=n\}.
\] 

In order to represent a $k$-subspace $\mathcal U\in \Gr$ we use a \emph{generator matrix} which is a $k\times n$-matrix $U$ whose rows form a basis of $\mathcal U$, i.\,e.
\[
\mathcal{U} = \rs (U) :=\textrm{row space}(U)= \{ x U\mid x \in \F_{q}^k\}.
\]

Multiplication with $\GL_n$-elements actually defines a
group action from the right on the projective space $\PS$ by
\begin{align*}
\PS\times \GL_n &\longrightarrow \PS\\
(\mathcal{U},A) &\longmapsto \mathcal{U} A:=\{vA\mid v\in {\mathcal U}\}.
\end{align*}

Since any invertible matrix maps subspaces to subspaces of the same dimension and since two $k$-subspaces can be mapped onto each other by an invertible matrix, the orbit of the general linear group $\GL_n$ on any $k$-subspace $\mathcal{U}$ is the whole set of all $k$-subspaces. Thus, $\GL_n$ acts transitively on $\Gr$. In particular, if $\mathcal{U}_k$ denotes the standard $k$-subspace generated by the first $k$ unit vectors, we get
\[
\mathcal{U}_k\GL_n= \Gr.
\]
Hence, the set of all orbits is
\[
\PS/\GL_n = \{ \mathcal{G}_q(0,n), \mathcal{G}_q(1,n),\ldots,\mathcal{G}_q(n,n)\}.
\]

It is easy to verify that the stabilizer $\stab{\GL_n}{\mathcal{U}_k}$ of the standard $k$-subspace $\mathcal{U}_k$ consists of all matrices of the form
\[
\left(\begin{array}{c|c}
X & 0\\
\hline
Y & Z
\end{array}
\right)
\]
where $X\in \GL_k$, $Y\in\F_q^{n-k\times k}$ and $Z\in \GL_{n-k}$.
Since every $k$-subspace ${\mathcal U}$ lies in the orbit of $\mathcal{U}_k$, say $\mathcal{U}=\mathcal{U}_kA$ for some $A\in \GL_n$, we get
\[
\stab{\GL_n}{\mathcal{U}}=A^{-1}\stab{\GL_n}{\mathcal{U}_k}A.
\]

If we take a subgroup ${G}$ of the general linear group $\GL_n$ and consider the orbits of ${G}$ on the set of all subspaces $\PS$ we obtain that the $GL_n$-orbits split into the ${G}$-orbits, such that $G$ induces an action on the Grassmannian $\Gr$:
\begin{align*}
\Gr\times {G} &\longrightarrow \Gr\\
(\mathcal{U},A) &\longmapsto \mathcal{U}A.
\end{align*}
The corresponding stabilizers of this action are 
\[
\stab{{G}}{\mathcal{U}}={G}\cap\stab{\GL_n}{\mathcal{U}}.
\]

Now the definition of an orbit code in the setting of the projective space is straight-forward. 

\begin{defi}
The orbits of a subgroup of the general linear group $\GL_n$ on the Grassmannian $\Gr$ are called \emph{(subspace) orbit codes}.
\end{defi}

From Proposition \ref{prop5} and Theorem \ref{lem:mindist} we can deduce:

\begin{thm}\label{thm:basic}
Let $\mathcal{U}\in \Gr$, $G\le \GL_n$ and ${\mathcal C} ={\mathcal U}{G}$ be an orbit code. 
\begin{enumerate}

\item 
The code ${\mathcal C}$ has size
\[
|{\mathcal C}|=\genfrac{}{}{}{}{|{G}|}{|\stab{{G}}{\mathcal{U}}|}=\genfrac{}{}{}{}{|{G}|}{|{G}\cap\stab{\GL_n}{\mathcal{U}}|}.
\]

\item The minimum distance $d(\mathcal{C})$ of the code satisfies
\begin{align*}
d(\mathcal{C})=&\min\{d(\mathcal{U},\mathcal{U}A)\mid \\
 &A\in \mathcal{T}(\stab{{G}}{\mathcal{U}}\backslash{G}), A\not \in \stab{G}{\Uvs}\}.
\end{align*}

\item 
The stabilizers in $GL_n$ of different codewords $\mathcal{V},\mathcal{W}\in{\mathcal C}$ are conjugated subgroups, i.\,e. there exists $A\in {G}$ with
\[
\stab{\GL_n}{\mathcal{V}} = A^{-1}\stab{\GL_n}{\mathcal{W}}A.
\]
and
\[
\stab{{G}}{\mathcal{V}} = A^{-1}\stab{{G}}{\mathcal{W}}A.
\]
\item
In particular, 
$
|\stab{\GL_n}{\mathcal{V}}|=|\stab{\GL_n}{\mathcal{W}}|$, 
respectively
$
|\stab{{G}}{\mathcal{V}}|=|\stab{{G}}{\mathcal{W}}|.
$
 \end{enumerate}
\end{thm}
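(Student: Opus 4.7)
The plan is to obtain all four assertions as direct specializations of the general results about group actions developed earlier in Section \ref{sec:orbitcodes}, once we verify that $\GL_n$ acts on $\Gr$ by isometries of the subspace distance. This verification is routine: for $A\in\GL_n$, the map $\Vvs\mapsto \Vvs A$ preserves dimensions and commutes with intersections (since $(\Vvs\cap\Wvs)A=\Vvs A\cap \Wvs A$), so
\[
d(\Vvs A,\Wvs A)=\dim(\Vvs A)+\dim(\Wvs A)-2\dim(\Vvs A\cap \Wvs A)=d(\Vvs,\Wvs).
\]
Hence the induced action of any subgroup $G\leq\GL_n$ on $\Gr$ is isometric, and the framework of Section \ref{sec:orbitmetric} applies verbatim.

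For part (1), I would simply invoke Proposition \ref{prop5} for the action of $G$ on $\Gr$, giving $|\mathcal{C}|=|\Uvs G|=|G|/|\stab{G}{\Uvs}|$. The second equality then follows from the identification
\[
\stab{G}{\Uvs}=\{A\in G\mid \Uvs A=\Uvs\}=G\cap\stab{\GL_n}{\Uvs},
\]
which is immediate from the definition of the stabilizer restricted to a subgroup. For part (2), I would apply Theorem \ref{lem:mindist} directly to $\mathcal{C}=\Uvs G$, recalling that the theorem expresses the minimum distance as $\min\{d(x,xg)\mid g\in \mathcal{T}(\stab{G}{x}\backslash G),\,g\notin\stab{G}{x}\}$; substituting $x=\Uvs$ gives exactly the claimed formula.

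For part (3), the general stabilizer identity $\stab{G}{xg}=g^{-1}\stab{G}{x}g$ stated in Section \ref{sec:orbitcodes} yields the conjugacy statement. Concretely, if $\Vvs,\Wvs\in\mathcal{C}$ then both are in the $G$-orbit of $\Uvs$, so there exists $A\in G$ with $\Wvs=\Vvs A$; the identity then gives $\stab{G}{\Wvs}=A^{-1}\stab{G}{\Vvs}A$. Exactly the same argument applied to the ambient action of $\GL_n$ (using that $A\in G\leq\GL_n$) delivers $\stab{\GL_n}{\Wvs}=A^{-1}\stab{\GL_n}{\Vvs}A$. Part (4) is then an immediate corollary: conjugation by a fixed invertible element is a group isomorphism, so conjugate subgroups have the same cardinality.

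There is no real obstacle here; the only thing to be careful about is the bookkeeping between the two levels of stabilizer (inside $G$ versus inside $\GL_n$) and the verification that the action is isometric so that Theorem \ref{lem:mindist} is applicable. Once these are in place, each assertion is a one-line consequence of a previously established lemma or proposition.
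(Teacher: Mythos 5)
Your proposal is correct and follows essentially the same route as the paper, which derives the theorem directly from Proposition \ref{prop5} and Theorem \ref{lem:mindist} together with the stabilizer conjugation identity; your explicit verification that $\GL_n$ acts isometrically on $\Gr$ and the identification $\stab{G}{\mathcal{U}}=G\cap\stab{\GL_n}{\mathcal{U}}$ are exactly the (implicit) ingredients the paper relies on.
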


As mentioned in Section \ref{sec:background}, the dual code $\mathcal{C}^\perp=\{\mathcal{U}^\perp\mid \mathcal{U}\in \mathcal{C}\}$ also defines a code with the same minimum distance as the original code. Moreover, the dual of an orbit code forms an orbit code as well \cite{tr10p}:

\begin{thm} 
Let $\mathcal{U}\in \Gr$, ${G}$ a subgroup of $\GL_n$ and ${\mathcal C} ={\mathcal U}{G}$ be an orbit code. Then the dual satisfies $\mathcal{C}^\perp=(\mathcal{U}^\perp)G^t$ where $G^t=\{A^t\mid A\in G\}$.
\end{thm}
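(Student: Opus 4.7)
The plan is to reduce the statement to a single matrix-theoretic identity: for any $\mathcal{V}\in\PS$ and any $A\in\GL_n$,
\[
(\mathcal{V}A)^\perp = \mathcal{V}^\perp (A^t)^{-1}.
\]
Once this is in hand, the theorem follows formally by taking $\mathcal{V}=\mathcal{U}$ and running $A$ over $G$.

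First I would verify that $G^t := \{A^t \mid A \in G\}$ is indeed a subgroup of $\GL_n$. This is immediate from the identities $(AB)^t = B^t A^t$ and $(A^{-1})^t = (A^t)^{-1}$, together with closure of $G$ under inversion and multiplication; in particular $(A^t)^{-1} \in G^t$ whenever $A \in G$.

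Next I would prove the key identity. Unravelling the definition, $w \in (\mathcal{V}A)^\perp$ iff $(vA)w^t = 0$ for every $v\in\mathcal{V}$. Rewriting $(vA)w^t = v(Aw^t) = v(wA^t)^t$, this condition is equivalent to saying that the row vector $wA^t$ annihilates every element of $\mathcal{V}$, i.e.\ $wA^t \in \mathcal{V}^\perp$. Since $A^t$ is invertible, this in turn is equivalent to $w \in \mathcal{V}^\perp (A^t)^{-1}$, proving the identity. This bookkeeping with row/column conventions is the only delicate step; the rest is formal.

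Finally I would conclude the theorem:
\begin{align*}
\mathcal{C}^\perp
&= \{(\mathcal{U}A)^\perp \mid A\in G\} \\
&= \{\mathcal{U}^\perp (A^t)^{-1} \mid A\in G\} \\
&= \{\mathcal{U}^\perp (A^{-1})^t \mid A\in G\} \\
&= \{\mathcal{U}^\perp B \mid B\in G^t\} \\
&= (\mathcal{U}^\perp)G^t,
\end{align*}
where the second-to-last equality uses that inversion permutes $G$, so that $\{(A^{-1})^t \mid A\in G\} = \{B^t \mid B\in G\} = G^t$. The claim that the minimum distance is preserved was already recorded earlier in the excerpt (the orthogonal complement is an order-reversing isometry of the lattice $\PS$, by Lemma \ref{lem:isometries}), so no extra argument is needed for that part.
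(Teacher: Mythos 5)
Your proposal is correct and follows essentially the same route as the paper: the paper's entire proof is the one-line observation that the claim follows from the identity $(\mathcal{U}A)^{\perp}=(\mathcal{U}^{\perp})(A^{-1})^t$, which is exactly the identity you isolate and then verify. You simply supply the row-vector bookkeeping and the final orbit computation that the paper leaves implicit.
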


\begin{proof}
The proof immediately follows from the identity $(\mathcal{U}A)^{\perp}=(\mathcal{U}^{\perp})(A^{-1})^t$.
\end{proof}

Denote by $D_{0}(\Cvs), D_{2}(\Cvs), \dots ,D_{2k}(\Cvs)$ the distance distribution of the orbit code $\Cvs\subseteq \Gr$ and define the distance enumerator polynomial of an orbit code $\Cvs\subseteq \Gr$ by
\[\mathcal{D}_{\Cvs}(x,y):= \sum_{i=0}^{k} D_{2i}(\Cvs) x^{i}y^{k-i} .\]
 Then one can easily derive the analogue of the MacWilliams identity:

\begin{thm}
\[\mathcal{D}_{\Cvs}(x,y)= \mathcal{D}_{\Cvs^{\perp}}(x,y) \]
\end{thm}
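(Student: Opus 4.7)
The plan is to exhibit an explicit distance-preserving bijection between $\Cvs$ and $\Cvs^\perp$, namely the orthogonal complement map, and to observe that it sends the ``base point'' of $\Cvs$ to that of $\Cvs^\perp$. Because the distance distribution $D_{2i}(\Cvs)$ counts codewords at a fixed subspace distance from a reference codeword (and, by the isometry of the $G$-action, does not depend on which codeword we pick as reference), this will immediately give $D_{2i}(\Cvs)=D_{2i}(\Cvs^\perp)$ for all $i$, hence equality of the enumerator polynomials.

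Concretely, I would first invoke Lemma~\ref{lem:isometries} (or directly the identities $\dim(\mathcal{V}^\perp)=n-\dim(\mathcal{V})$ and $(\mathcal{V}+\mathcal{W})^\perp=\mathcal{V}^\perp\cap\mathcal{W}^\perp$) to conclude that the map $\perp\colon\PS\to\PS$ is an order-reversing bijection and hence an isometry of the subspace metric. Thus for any $\mathcal{V},\mathcal{W}\in\Gr$,
\[
d(\mathcal{V}^\perp,\mathcal{W}^\perp)=d(\mathcal{V},\mathcal{W}),
\]
and $\perp$ restricts to a bijection $\Cvs\to\Cvs^\perp$ between two constant dimension codes of dimensions $k$ and $n-k$ respectively.

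Next I would fix $\mathcal{U}\in\Cvs$ so that $\Cvs=\mathcal{U}G$ and note that $\mathcal{U}^\perp\in\Cvs^\perp$. For each $i$,
\[
D_{2i}(\Cvs)=\bigl|\{\mathcal{V}\in\Cvs\mid d(\mathcal{U},\mathcal{V})=2i\}\bigr|
=\bigl|\{\mathcal{V}^\perp\in\Cvs^\perp\mid d(\mathcal{U}^\perp,\mathcal{V}^\perp)=2i\}\bigr|
=D_{2i}(\Cvs^\perp),
\]
using that $\perp$ is a bijection $\Cvs\to\Cvs^\perp$ with $\mathcal{U}\mapsto\mathcal{U}^\perp$ preserving distance. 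Summing the definition $\mathcal{D}_\Cvs(x,y)=\sum_{i=0}^{k}D_{2i}(\Cvs)\,x^iy^{k-i}$ (and noting that $k$ and $n-k$ play symmetric roles in fixing the total degree, since $D_{2i}(\Cvs^\perp)$ is nonzero only for $i\le n-k$ anyway) yields $\mathcal{D}_\Cvs(x,y)=\mathcal{D}_{\Cvs^\perp}(x,y)$.

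There is essentially no serious obstacle here: the content of the theorem is that $\perp$ is an isometry on the subspace lattice, which has already been established earlier in the paper via Lemma~\ref{lem:isometries}. The only small subtlety worth flagging in the write-up is that $D_{2i}(\Cvs)$ is well-defined as the ``distance distribution from an arbitrary codeword'' precisely because $G$ acts isometrically and transitively on $\Cvs$, which is why choosing $\mathcal{U}$ and $\mathcal{U}^\perp$ as reference points on the two sides is legitimate.
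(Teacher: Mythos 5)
Your proof is correct and follows essentially the same route as the paper's: both rest on the single fact that $d(\Uvs^{\perp},\Vvs^{\perp})=d(\Uvs,\Vvs)$, so that orthogonal complementation is a distance-preserving bijection from $\Cvs$ onto $\Cvs^{\perp}$ carrying the reference codeword $\Uvs$ to $\Uvs^{\perp}$, whence $D_{2i}(\Cvs)=D_{2i}(\Cvs^{\perp})$ for all $i$. Your additional remarks (deriving the isometry from the order-reversing property and noting that the distance distribution is independent of the chosen reference codeword by transitivity of $G$ on $\Cvs$) merely flesh out details the paper leaves implicit.
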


\begin{proof}
The statement follows from the fact that $d(\Uvs^{\perp}, \Vvs^{\perp})=d(\Uvs, \Vvs)$ for any $\Uvs, \Vvs \in \Gr$, 
which implies that $D_{2i}(\Cvs)=D_{2i}(\Cvs^{\perp})$ for all $i=0,\dots,k$.
\end{proof}

\section{Classification of Cyclic Orbit Codes}
\label{sec:classification}

As described in section \ref{sec:iso} isometric network orbits codes
arise from the different conjugacy classes of subgroups of the general
linear group $H=\GL_n$. Moreover, the isometric cyclic orbit codes
correspond to the different conjugacy classes of elements of
$\GL_n$. A canonical transversal of all these conjugacy classes can be
obtained from the different \emph{rational canonical forms} of
invertible matrices. In this section we want to specify these matrix
representatives and give some properties of the groups and orbit
codes generated by these matrices in canonical forms.

\subsection{Rational canonical forms and conjugacy classes}
%

For simplicity, in this section we abbreviate the notation
of polynomials to $p\in\F_q[x]$. 


\begin{defi}\label{d:com_mat}
Let $p=p_0+p_1x+\ldots+p_{n-1}x^{n-1}+x^n\in F_{q}[x]$. The
\emph{companion matrix} of $p$ is
\[
M_p:=\mat{
0&1&0 &\cdots &0\\
0 & 0 & 1 & &0\\
\vdots & & & \ddots &\vdots\\
0&0&0& & 1 \\ 
-p_0& -p_1 & -p_2 & \cdots & -p_{n-1}
}\in \F_q^{n\times n}
.\]
\end{defi}

\begin{defi}
Let $p(x)\in \F_q[x]$ with $p(0)\neq 0$. Then the least integer $e\in \N$ such that $p(x)$ divides $x^e-1$ is called the \emph{order} of $p(x)$.
\end{defi}

The interested reader can find more information related to the previous
definitions in \cite{li86}. 

We characterize conjugacy classes of elements of $\GL_n$ by the rational canonical forms. The following result can be found in \cite[Chapter~6.7]{he75}.

\begin{thm}
  Let $A\in \GL_n$. Then there exist a transformation matrix $L\in
  \GL_n$, distinct monic irreducible polynomials $p_1,\ldots,p_m\in
  \F_q[x]$  and integer
  partitions $e_1,\ldots, e_m$ of natural numbers $n_1,\ldots,n_m$,
  $n=\sum_i n_i$,
\[
e_i=(e_{i,1},\ldots,e_{i,r_i})\text{ with } e_{i,1}\ge\ldots\ge
e_{i,r_i}\text{ and } n_i=\sum_{j}e_{i,j}
\]
such that
\begin{align*}
  &L^{-1}AL=\\
  &\quad\quad\diag(M_{p_1^{e_{1,1}}},\ldots, M_{p_1^{e_{1,r_1}}},
  \ldots,M_{p_m^{e_{m,1}}}, \ldots, M_{p_m^{e_{m,r_m}}})
\end{align*}
forms a block diagonal matrix with companion matrices $M_{p_i^{e_{i,j}}}$ of
the polynomials $p_i^{e_{i,j}}$. Moreover, this matrix is unique (up
to ordering of the polynomials) for any choice of $A\in \GL_n$.  
Hence, we call this block diagonal matrix the \emph{rational canonical
  form} of $A$ and we write
\[
\RCF(A):=L^{-1}AL.
\]
The polynomials $p_i^{e_{i,j}}$ are called \emph{elementary divisors}
of $A$.
\end{thm}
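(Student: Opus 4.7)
The plan is to translate the statement into module theory and invoke the structure theorem for finitely generated modules over a principal ideal domain. First I would endow $\F_q^n$ with the structure of an $\F_q[x]$-module by declaring $x\cdot v := vA$ for all $v\in \F_q^n$; with this structure, finding a basis in which $A$ takes the claimed block-diagonal form amounts to decomposing $\F_q^n$ into a direct sum of cyclic $\F_q[x]$-modules of the form $\F_q[x]/(p_i^{e_{i,j}})$.

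Next, since $\F_q^n$ is finite-dimensional, the module is finitely generated; and because $A$ is annihilated by its characteristic polynomial, it is a torsion module, so no free summand appears. Applying the structure theorem over the PID $\F_q[x]$ (the main external input), one obtains distinct monic irreducible polynomials $p_1,\ldots,p_m$, which are exactly the irreducible factors of the minimal polynomial of $A$, together with integer partitions $e_i=(e_{i,1},\ldots,e_{i,r_i})$ such that
\[
\F_q^n \cong \bigoplus_{i=1}^{m}\bigoplus_{j=1}^{r_i} \F_q[x]/(p_i^{e_{i,j}}).
\]
Setting $n_i := \deg p_i \cdot \sum_j e_{i,j}$ gives $\sum_i n_i = n$. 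Since $A$ is invertible, $p_i(0)\neq 0$ for each $i$.

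On each cyclic summand $\F_q[x]/(p_i^{e_{i,j}})$, the elements $1, x, x^2, \ldots, x^{d-1}$ with $d=\deg p_i^{e_{i,j}}$ form an $\F_q$-basis. In this basis, the action of $x$ (which is the action of $A$) is by definition the companion matrix $M_{p_i^{e_{i,j}}}$ as in Definition \ref{d:com_mat}. Concatenating these bases across all summands yields an $\F_q$-basis of $\F_q^n$; letting $L$ be the change-of-basis matrix expressing this basis in terms of the standard basis gives $L^{-1}AL$ equal to the required block-diagonal matrix of companion blocks.

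For uniqueness (up to reordering), I would appeal again to the structure theorem: the multiset of elementary divisors $\{p_i^{e_{i,j}}\}$ is a complete invariant of the isomorphism type of the $\F_q[x]$-module $(\F_q^n, A)$. Concretely, the distinct $p_i$ are determined by the prime factorization of the minimal polynomial, and each partition $e_i$ is recovered from the $\F_q$-dimensions of the subquotients $\ker p_i(A)^{k+1}/\ker p_i(A)^k$. The main obstacle is therefore purely the invocation of the structure theorem for finitely generated modules over a PID; granting that classical result, the remaining steps are bookkeeping, and the existence and uniqueness of $\RCF(A)$ follow directly.
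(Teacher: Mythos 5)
Your proposal is correct. The paper does not actually prove this theorem---it is quoted with a citation to Herstein \cite[Chapter~6.7]{he75}---so there is no internal proof to compare against; your module-theoretic argument (viewing $\F_q^n$ as a torsion $\F_q[x]$-module via $x\cdot v:=vA$, applying the structure theorem over the PID $\F_q[x]$ to get the elementary divisor decomposition, and reading off the companion blocks from the bases $1,x,\dots,x^{d-1}$ of the cyclic summands) is exactly the standard route that such a reference takes, and your uniqueness argument via the subquotients $\ker p_i(A)^{k+1}/\ker p_i(A)^k$ is sound. One small bookkeeping remark: the theorem as stated has $e_i$ a partition of $n_i$ with $n=\sum_i n_i$, which is only dimensionally consistent if $n_i=\deg p_i\cdot\sum_j e_{i,j}$ (as you set it) and $e_i$ is then a partition of $n_i/\deg p_i$ rather than of $n_i$; this is an imprecision in the paper's statement, not a gap in your proof.
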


\begin{co}\label{cor:rcf:conjugated}
Two matrices $A,B\in \GL_n$ are conjugate if and only if $\RCF(A)=\RCF(B)$.
\end{co}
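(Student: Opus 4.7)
The proof is essentially a direct application of the preceding theorem on rational canonical forms, in particular its uniqueness assertion. The plan is to handle the two implications separately, with the uniqueness of $\RCF$ doing all the work in the non-trivial direction.

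For the \emph{if} direction, I would just chase transformation matrices. Suppose $\RCF(A)=\RCF(B)$, and let $L_A, L_B \in \GL_n$ be the matrices from the theorem such that $\RCF(A)=L_A^{-1}AL_A$ and $\RCF(B)=L_B^{-1}BL_B$. Setting $P := L_A L_B^{-1} \in \GL_n$, the equality $L_A^{-1}AL_A = L_B^{-1}BL_B$ rearranges to $B = P^{-1} A P$, so $A$ and $B$ are conjugate in $\GL_n$.

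For the \emph{only if} direction, assume $B = P^{-1}AP$ for some $P \in \GL_n$, and let $L \in \GL_n$ realize the RCF of $A$, i.e.\ $\RCF(A) = L^{-1}AL$. Substituting $A = PBP^{-1}$ gives
\[
\RCF(A) = L^{-1} P B P^{-1} L = (P^{-1}L)^{-1} \, B \, (P^{-1}L).
\]
Thus the matrix $\RCF(A)$ is obtained from $B$ by conjugation by $P^{-1}L$, and it is already in rational canonical form (a block diagonal matrix of companion matrices of prime powers, as prescribed by the theorem). By the uniqueness part of the preceding theorem — which asserts that the RCF of any matrix in $\GL_n$ is unique up to the ordering of the polynomial blocks — this forces $\RCF(A) = \RCF(B)$.

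The only potentially delicate point is the \emph{only if} direction, and specifically the appeal to uniqueness: one must be careful that ``unique up to ordering of the polynomials'' is compatible with the statement. Since $\RCF(\cdot)$ is defined to be a particular representative of the conjugacy class (with a fixed ordering convention on the elementary divisors), the argument above shows $\RCF(A)$ is \emph{some} RCF of $B$, and uniqueness with the convention then yields equality. No computation beyond the algebraic manipulation above is needed; the entire content of the corollary is packaged into the existence-and-uniqueness theorem that precedes it.
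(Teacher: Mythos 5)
Your proof is correct and matches the paper's (implicit) reasoning: the corollary is stated there without proof as an immediate consequence of the existence-and-uniqueness theorem for the rational canonical form, and your two-directional argument via the transformation matrices and the uniqueness clause is exactly the intended justification.
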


The conjugacy classes of elements in $\GL_n$ are determined by the
finite lists of $m$ distinct monic irreducible polynomials
$p_1,\ldots,p_m$ and the $m$ partitions $e_1,\ldots,e_m$ of the
numbers $n_1,\ldots,n_m$ such that $n=\sum_i n_i$.

If we consider conjugation of subgroups, which is what matters for the
consideration of isometry of orbit codes, we show in the following
that the number $m$ and the partitions $e_1,\ldots,e_m$ and the order
of the polynomials $p_1,\dots,p_m$ are invariant
on the generators of subgroups of the same conjugacy class of
subgroups. Hence, the choice of irreducible polynomials
$p_1,\ldots,p_m$ is based only on their order.

More formally, assume that the list of partitions $e_1,\ldots,e_m$ is
in a unique ordering (e.\,g. lexicographically increasing), and let
$o_1,\ldots, o_m\in \N$ be the orders of the polynomials
$p_1,\ldots,p_m$, then the couple given by the ordered vector of all partitions 
\[
e(A):=(e_1,\ldots,e_m) 
\]
and by the vector \[o(A):=(o_1,\ldots,o_m)\] will be called the
\emph{matrix type} of the rational canonical form of a matrix $A$.

\begin{co} \label{cor:conj:sametype} 
  Let $A,B\in \GL_n$ be conjugate
  matrices. Then the rational canonical forms of both matrices have
  the same type, i.e.
  \[e(A)=e(B) \mbox{ and } o(A)=o(B).\]
\end{co}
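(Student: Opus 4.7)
The plan is to derive this directly from the previous corollary (\ref{cor:rcf:conjugated}), which characterizes conjugacy in $\GL_n$ by equality of rational canonical forms. Since conjugate matrices share the same $\RCF$, I expect this to be essentially a bookkeeping statement: equal rational canonical forms force equal lists of elementary divisors, and the partition data $e(\cdot)$ and order data $o(\cdot)$ are both readable from that list.

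More concretely, my first step is: by Corollary~\ref{cor:rcf:conjugated}, $A \sim B$ implies $\RCF(A) = \RCF(B)$. Using the uniqueness clause of the preceding theorem (uniqueness of the rational canonical form up to ordering of the polynomials), the multiset of companion blocks $M_{p_i^{e_{i,j}}}$ appearing in $\RCF(A)$ equals the corresponding multiset for $\RCF(B)$. Consequently the multiset of elementary divisors $\{p_i^{e_{i,j}}\}$ coincides for $A$ and $B$.

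From there, by grouping elementary divisors according to their underlying monic irreducible polynomial $p_i$, one recovers the same number $m$ of distinct irreducibles, the same polynomials $p_1,\ldots,p_m$ (up to relabelling), and for each $p_i$ the same partition $e_i = (e_{i,1},\ldots,e_{i,r_i})$. Fixing the agreed lexicographic ordering on the list of partitions, I conclude $e(A) = e(B)$. Since the very same polynomials $p_1,\ldots,p_m$ occur for both matrices, their orders $o_1,\ldots,o_m$ coincide as well, yielding $o(A) = o(B)$.

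I do not foresee a genuine obstacle here; the only subtlety is the ordering convention built into the definition of $e(A)$ and $o(A)$, but since the same normalization is applied to both sides, it presents no difficulty. In the write-up I would keep the argument to a few lines, essentially: invoke Corollary~\ref{cor:rcf:conjugated}, invoke uniqueness of $\RCF$, and read off both invariants.
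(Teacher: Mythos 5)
Your proposal is correct and matches the argument the paper intends: the corollary is stated without an explicit proof precisely because it follows immediately from Corollary~\ref{cor:rcf:conjugated} and the uniqueness of the rational canonical form, exactly as you lay out. Reading off the partitions and the orders of the irreducible polynomials from the shared multiset of elementary divisors is all that is needed.
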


\begin{lem}\label{lem:group:sametype}
The matrix type is constant on the set of all generators of the same cyclic group:
\[
\langle A\rangle = \langle B\rangle \,\Longrightarrow\, e(A)=e(B)
\mbox{ and } o(A)=o(B).
\]
\end{lem}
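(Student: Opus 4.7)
The plan is to exploit the fact that $\langle A\rangle=\langle B\rangle$ forces $\F_q[A]=\F_q[B]$ as $\F_q$-subalgebras of $\F_q^{n\times n}$, since each of $A,B$ is a power of the other. Set $\mathcal{A}:=\F_q[A]=\F_q[B]$ and view $\F_q^n$ as an $\mathcal{A}$-module; its primary decomposition $\F_q^n=\bigoplus_{i=1}^m V_i$ over the maximal ideals $\mathfrak{m}_1,\dots,\mathfrak{m}_m$ of $\mathcal{A}$ is defined intrinsically, without reference to a chosen generator.

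For the partitions, I would apply the structure theorem for finitely generated modules over the local Artinian principal ideal ring $\mathcal{A}/\mathfrak{m}_i^{k_i}$ to each primary component, obtaining $V_i\cong\bigoplus_j \mathcal{A}/\mathfrak{m}_i^{e_{i,j}}$; the resulting partition $e_i$ is a module invariant, so it agrees whether computed from $A$ or from $B$.

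For the polynomial orders, I would observe that the irreducible polynomial $p_i$ appearing in the RCF of $A$ is precisely the minimal polynomial over $\F_q$ of the image $\bar A$ of $A$ in the residue field $K_i:=\mathcal{A}/\mathfrak{m}_i$, whence by the definition of the order of a polynomial we get $o_i(A)=\ord(\bar A)$ in $K_i^\times$; analogously $o_i(B)=\ord(\bar B)$. Since $\langle A\rangle=\langle B\rangle$, the quotient map $\mathcal{A}\to K_i$ sends the images $\bar A$ and $\bar B$ to two elements which each generate the same cyclic subgroup of $K_i^\times$, so $\ord(\bar A)=\ord(\bar B)$.

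The main bookkeeping nuisance is the labelling of primary components whose partitions coincide: on such ``tied'' components the indices may permute between the lists obtained from $A$ and from $B$. This is handled by noting that the multiset of pairs $\{(e_i,o_i)\}$ is intrinsic to the $\mathcal{A}$-module structure, so any fixed tie-breaking convention---in particular the lexicographic one used in the paper---produces identical ordered tuples $e(A)=e(B)$ and $o(A)=o(B)$.
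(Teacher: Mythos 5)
Your proof is correct, and it takes a genuinely different route from the paper's. The paper works generator by generator: writing $B=A^i$ with $\gcd(i,|\langle A\rangle|)=1$, it passes to the Jordan form of $A$ over a splitting field, computes the characteristic polynomial of $A^i$ from the eigenvalues $\mu^{ijq^u}$ (which immediately gives $o(A)=o(B)$), and then excludes a finer splitting of the elementary divisors of $A^i$ by noting that if $\RCF(A^i)$ had more than one block, then so would $\RCF\bigl((A^i)^j\bigr)=\RCF(A)$ for $ij\equiv 1 \pmod{|\langle A\rangle|}$, a contradiction; the case of several elementary divisors is dispatched with a one-line remark about enlarging the splitting field. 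You instead exploit that $\langle A\rangle=\langle B\rangle$ forces $\F_q[A]=\F_q[B]=:\mathcal{A}$ and extract both invariants from the $\mathcal{A}$-module structure of $\F_q^n$, which does not see the choice of generator: the partitions come from the structure theorem over the local factors of $\mathcal{A}$, and the orders from the multiplicative orders of the images of $A$ and $B$ in the residue fields $K_i$, which agree because those images generate the same cyclic subgroup of $K_i^\times$. Your version buys uniformity (no case split on the number of elementary divisors, no splitting fields, no block-counting contradiction) and makes the generator-independence conceptually transparent; the paper's computation stays entirely in the language of rational canonical forms and explicitly exhibits how the elementary divisors of $B=A^i$ arise from those of $A$, a relation that is reused in the proof of Theorem \ref{t:rep_grp}. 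Your closing remark that only the multiset of pairs $(e_i,o_i)$ is intrinsic, so that a tie-breaking convention is needed when two primary components share a partition, addresses a bookkeeping point that the paper's definition of matrix type leaves implicit.
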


\begin{proof}
  First we show the case where $A$ has a unique elementary divisor. At
  the end of the proof we will give the main remark that implies the
  generalized statement.

  From the unique divisor of $A$, it follows that the characteristic
  and the minimal polynomial of $A$ are both $p^{e}\in
  \F_q[x]$ where $p\in \F_q[x]$ is irreducible. Define $k:=n/e$ and
  let $\F_{q^k}:=\F_q[x]/(p)$ be the splitting field of the polynomial
  $p$ and $\mu\in \F_{q^k}$ a primitive element of it. There exists a
  $j\in \N$ such that $p=\prod_{u=0}^{k-1}(x-\mu^{jq^u})$. Since
  $p^{e}$ is the characteristic and the minimal polynomial of $A$, we
  obtain that the Jordan normal form of $A$ over $\F_{q^k}$ is
  \[
  J_A=\diag\left(J_{A,\mu^j}^{e},\dots,J_{A,\mu^{jq^{k-1}}}^{e}\right)
  \]
  where $J_{A,\mu^{jq^u}}^{e}\in \GL_{e}(\F_{q^k})$ is a unique Jordan
  block with diagonal entries $\mu^{jq^u}$ for $0\le u<k$.

  By the Jordan normal form of $A$ it follows that for every $i\in \N$
  the characteristic polynomial of $A^i$ is
  \[p_{A^i}=\left(\prod_{u=0}^{k-1}x-\mu^{ijq^u}\right)^{e}.\] Since $B$
  is a generator of $\langle A\rangle$, there exists an $i$ with
  $\gcd(i,|\langle A\rangle|)=1$ such that $B=A^i$. It follows that
  $p_{A^i}\in \F_q[x]$ is a monic irreducible polynomial whose order is
  the same as the one of $p$.

  In order to conclude that $p_{A^i}^{e}$ is the unique elementary
  divisor of $A^i$, we consider its rational canonical form. Without
  loss of generality, assume the rational canonical form of $A^i$ to
  be
  \[\RCF(A^i)=\diag\left(M_{p_{A^i}^{e_{1}}},M_{p_{A^i}^{e_{2}}}\right)\] where
  $e=e_{1}+e_{2}$. For any $j\in \N$ we obtain that the matrix
  $\RCF((\RCF(A^i))^j)$ is a block diagonal matrix with at least two
  blocks. Let $j\in \N$ such that $ij\equiv 1 \pmod{|\langle
    A\rangle|}$ and $L\in \GL_n$ be a matrix such that
  $\RCF(A^i)=L^{-1}A^iL$, then
\[
(\RCF(A^i))^j=(L^{-1}A^iL)^j=L^{-1}AL
\]
implying that
\[
\RCF(A)=\RCF((\RCF(A^i))^j) .
\] 
This leads to a contradiction since $\RCF(A)=M_{p^{e}}$ has only one
block. We conclude that $p_{A^i}^{e}$ is the elementary divisor of $A^i$.

If $m>1$ the only difference is the choice of the splitting
field. 
If $p_1,\dots,p_m$ are the distinct monic irreducible polynomials
related to $\RCF(A)$, then the splitting field on which the proof is
based is $\F_q[x]/(\prod_{t=1}^m p_i)$.
\end{proof}

Due to this result, the following definition is well-defined. If
$G=\langle A\rangle$ denotes a cyclic subgroup of $\GL_n$, the type of
the cyclic group $G$ is defined by
\[
e(G):=e(A) \mbox{ and } o(G):=o(A).
\]

\begin{thm}\label{t:rep_grp}
Let $G,G'$ be two cyclic subgroups of $\GL_n$. Then they are conjugate if and only if they have the same group type:
\[
G,G^\prime\text{ conjugate} \iff e(G)=e(G^\prime) \mbox{ and } o(G)=o(G^\prime).
\]
\end{thm}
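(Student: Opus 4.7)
My plan is to treat the two implications separately; the harder one is the ``if'' direction.

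For $(\Rightarrow)$: if $G$ and $G'$ are conjugate, then the lemma just above the theorem gives conjugate generators $g$ of $G$ and $g'$ of $G'$. By Corollary \ref{cor:rcf:conjugated}, $\RCF(g) = \RCF(g')$, so in particular $e(g) = e(g')$ and $o(g) = o(g')$. Lemma \ref{lem:group:sametype} makes these values intrinsic to the group, giving $e(G) = e(G')$ and $o(G) = o(G')$.

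For $(\Leftarrow)$: pick any generators $A$ of $G$ and $A'$ of $G'$ and, after a preliminary conjugation, assume both are in rational canonical form. The hypothesis forces $A$ and $A'$ to have the same block partition data, while the polynomials appearing in the blocks, $p_t^{e_{t,j}}$ for $A$ and $(p_t')^{e_{t,j}}$ for $A'$, are irreducible of the same order $o_t$ and therefore of the same degree $k_t = \mathrm{ord}_{o_t}(q)$. I would then look for an exponent $i$, coprime to $|G|$, with $\RCF(A^i) = A'$; a matrix $L$ with $L^{-1} A^i L = A'$ will satisfy $L^{-1} G L = L^{-1} \langle A^i \rangle L = \langle A' \rangle = G'$, since $A^i$ still generates $G$.

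The computation done in the proof of Lemma \ref{lem:group:sametype} extends block-by-block and shows that $A^i$ has elementary divisors $\tilde p_t^{e_{t,j}}$, where $\tilde p_t$ is the minimal polynomial over $\F_q$ of $\mu_t^i$ for any fixed root $\mu_t$ of $p_t$ in a splitting field. Requiring $\tilde p_t = p_t'$ for each $t$ translates to a congruence $i \equiv i_t \pmod{o_t}$, where $i_t \in (\Z/o_t\Z)^\ast$ is well-defined only modulo the Galois orbit of length $k_t$ generated by multiplication by $q$; such an $i_t$ exists because $(\Z/o_t\Z)^\ast$ acts transitively on the primitive $o_t$-th roots of unity.

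The main obstacle is to solve these $m$ congruences \emph{simultaneously} while preserving $\gcd(i, |G|) = 1$. This is a Chinese Remainder Theorem argument: the built-in Galois flexibility in each residue $i_t$ provides enough slack to reconcile the congruences on any shared prime-power factor of the $o_t$, while coprimality with the $\mathrm{char}(\F_q)$-part of $|G|$ can be imposed as an independent congruence on $i$. Once such an exponent $i$ is produced, $A^i$ and $A'$ have identical rational canonical forms and are therefore $\GL_n$-conjugate, completing the argument.
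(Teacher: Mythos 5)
Your overall strategy coincides with the paper's: you handle $(\Rightarrow)$ via the conjugate-generators lemma together with Corollary \ref{cor:conj:sametype} and Lemma \ref{lem:group:sametype}, and for $(\Leftarrow)$ you look for an exponent $i$ coprime to $|G|$ with $\RCF(A^i)=\RCF(A')$ by matching elementary divisors through the Galois orbits of their roots. You are also right to single out the simultaneous solvability of the congruences $i\equiv i_t \pmod{o_t}$ as the crux. But that is exactly where your argument has a genuine gap (and, it should be said, the paper's own proof simply asserts the existence of such an $i$ ``by the condition on the orders'' without justification): the ``Galois flexibility'' in each residue $i_t$ is only a coset of the subgroup $\langle q\rangle\le(\Z/o_t\Z)^\ast$, which is in general a proper subgroup, and when two of the $o_t$ share a common factor the resulting coset conditions need not be compatible. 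No Chinese Remainder argument can supply the missing slack, because the implication being proved is in fact false at this level of generality.

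Concretely, over $\F_2$ let $\alpha$ generate $\F_{32}^\ast$ and let $p_1,p_2,p_3$ be the minimal polynomials of $\alpha,\alpha^3,\alpha^5$; these are three distinct primitive polynomials of degree $5$ and order $31$. Put $A=\diag(M_{p_1},M_{p_2})$ and $A'=\diag(M_{p_1},M_{p_3})$ in $\GL_{10}(\F_2)$. Both cyclic groups have type $e=((1),(1))$ and $o=(31,31)$. A generator $A^i$ ($1\le i\le 30$) has elementary divisors the minimal polynomials of $\alpha^{i}$ and $\alpha^{3i}$, so $\langle A\rangle$ and $\langle A'\rangle$ are conjugate if and only if the pair of cyclotomic cosets of $i$ and $3i$ modulo $31$ equals the pair of cosets of $1$ and $5$; running through the cosets $\{1,2,4,8,16\}$, $\{3,6,12,17,24\}$, $\{5,9,10,18,20\}$ shows no such $i$ exists. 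Hence these two groups have the same type but are not conjugate. Your argument does go through when $m=1$ (the irreducible case, which is all the rest of the paper uses) and, by the CRT step you describe, when the orders $o_1,\dots,o_m$ are pairwise coprime; to salvage the general statement one would have to refine the notion of type so that it records the cosets $i_t\langle q\rangle$ up to a common unit, not merely the orders $o_t$.
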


\begin{proof}
\begin{description}
\item[\fbox{$\Rightarrow$}] Follows from Corollary
  \ref{cor:conj:sametype} and Lemma \ref{lem:group:sametype}.

\item[\fbox{$\Leftarrow$}] Let $G=\langle A\rangle$ and
  $G^\prime=\langle A^\prime\rangle$, we obtain that
  \[e(A)=e(A^\prime)\mbox{ and }o(A)=o(A^\prime).\] Let
  $p_{A,1},\ldots p_{A,m},p_{A^\prime,1},\ldots
  p_{A^\prime,m},\in\F_q[x]$ be the bases of the elementary divisors of $A$ and $A'$, respectively.  Since $o(A)=o(A^\prime)$, let $\F$
  be the unique splitting field of $(\prod_{t=1}^rp_{A,t})(\prod_{t=1}^rp_{A^\prime,t})$, and $\mu\in \F$ a primitive element
  of it.

  Let $d_j:= \deg p_{A,j}$ for $j=1,\ldots,
  m$. Then, there exist  $i_1,\dots,i_m\in \N$  such that
  \[p_{A,j}=\prod_{u=0}^{d_j-1}(x-\mu^{i_j q^u})\] for
  $j=1,\dots,m$. The same holds for the matrix $A^\prime$, i.\,e. there exist
  $i^\prime_{1},\dots,i^\prime_{m}\in \N$ such that
  \[p_{A^\prime,j}=\prod_{u=0}^{d_j-1}(x-\mu^{i^\prime_jq^u})\] for
  $j=1,\dots,m$. By the condition on the orders, there exists a unique
  $i\in \N$ such that \[i_j\equiv i\cdot i^\prime_j
  \pmod{o_j}\] for $j=1,\dots,m$. We obtain that
  $p_{(A^\prime)^i,j}=p_{A,j}$ which combined with the condition
  $e(A)=e(A^\prime)$ implies that $\RCF((A^\prime)^i)=\RCF(A)$,
  i.\,e. $A$ and $(A^\prime)^i$ are conjugate.
\end{description}
\end{proof}

In order to obtain a transversal of the conjugacy classes of all
cyclic subgroups of $\GL_n$ we have to consider their group types. These correspond to all different ordered lists of
partitions $(e_1,\ldots,e_m)$, $e_i=(e_{i,1},\ldots,e_{i,r_i})$, such
that $\sum_{i,j}e_{i,j}=n$ together with possible orders $(o_1,\ldots,o_m)$.

Hence, we can uniquely represent the conjugacy class of a cyclic subgroup of $\GL_n$ by a rational canonical form where the irreducible polynomials of the elementary divisors have the given degrees and orders.




\subsection{Cyclic orbit codes from block matrices}\label{sec:conj_and_codes}

In order to simplify the following statements, let
$M:=\diag(M_1,\dots,M_t)\in \GL_n$ be a block diagonal matrix where the
$M_i$ are the companion matrices of the polynomials $p_i^{e_i}\in
\F_q[x]$, the $p_i$ are monic irreducible polynomials, and
$n_i=\deg(p_i^{e_i})$.

\begin{lem}\label{l:cardd}
  Let $U=\rs\mat{U_1,\dots,U_t}\in \Gr$, $r_i:=\min\{r\in \N\mid
  \rs(U_i)M_i^r=\rs(U_i)\}$ and $s_i:=\min\{s\in \N\mid
  U_iM_i^s=U_i\}$ for $i=1,\dots,t$. Then the smallest $g\in \N$ with
   \[\Stab_{
    \langle M\rangle}(\Uvs)=\langle M^g\rangle\] 
    fulfills 
    \[\mathrm{lcm}(r_i\mid
  i\in \{1,\dots, t\})\mid g\mid \mathrm{lcm}(s_i\mid i\in
  \{1,\dots,t\}).\]
\end{lem}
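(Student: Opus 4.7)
The plan is to exploit the block-diagonal structure of $M$ to reduce stabilization of $\Uvs$ under powers of $M$ to two componentwise conditions: stabilization of the row spaces $\rs(U_i)$ (giving the lower divisibility) and stabilization of the matrices $U_i$ themselves (giving the upper divisibility).

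First I would observe that $\Stab_{\langle M\rangle}(\Uvs)$ is a subgroup of the cyclic group $\langle M\rangle$, hence cyclic, so it is generated by $M^g$ for the smallest positive integer $g$ with $\rs(U)M^g=\rs(U)$. The task is thus to sandwich this $g$ between the two lcms.

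For the divisibility $g\mid \mathrm{lcm}(s_i\mid i\in\{1,\dots,t\})$, set $g':=\mathrm{lcm}(s_i\mid i)$. Since $s_i\mid g'$ for every $i$, one has $U_iM_i^{g'}=U_i$, and the block-diagonal form of $M$ gives $UM^{g'}=[U_1M_1^{g'},\dots,U_tM_t^{g'}]=U$. In particular $\rs(U)M^{g'}=\rs(U)$, so $M^{g'}\in\Stab_{\langle M\rangle}(\Uvs)=\langle M^g\rangle$, whence $g\mid g'$.

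For the divisibility $\mathrm{lcm}(r_i\mid i)\mid g$, let $\pi_i:\F_q^n\to\F_q^{n_i}$ be the projection to the $i$-th block. The key (routine) point is that $\pi_i(\rs(U))=\rs(U_i)$ and $\pi_i(vM^g)=\pi_i(v)M_i^g$, so $\pi_i$ intertwines the $M^g$-action on $\F_q^n$ with the $M_i^g$-action on $\F_q^{n_i}$. From $\rs(U)M^g=\rs(U)$ I would then deduce $\rs(U_i)M_i^g=\rs(U_i)$ for every $i$, hence $r_i\mid g$ and therefore $\mathrm{lcm}(r_i\mid i)\mid g$.

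The only point that requires a little care is the intertwining step: one must verify that $\rs(U_i)$ is genuinely the image of $\rs(U)$ under the block projection (immediate from $U=[U_1,\ldots,U_t]$) and that the $M_i^g$-action on that projection matches the projection of the $M^g$-action (immediate from block-diagonality). Everything else is purely formal manipulation of cyclic groups.
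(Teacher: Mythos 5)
Your proof is correct and follows essentially the same route as the paper's: the upper divisibility is obtained exactly as in the paper from the observation that $M^{\mathrm{lcm}(s_i\mid i)}$ fixes $U$ entrywise and hence lies in the stabilizer, and the lower divisibility is the same blockwise reduction, which the paper phrases via a basis-change matrix $N\in\GL_k$ with $(U_1,\dots,U_t)=N(U_1M_1^g,\dots,U_tM_t^g)$ rather than via your block projections $\pi_i$. The two formulations of that step are interchangeable, and there are no gaps.
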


\begin{proof}
  Let \[g:=\min\left\{l\in \{0,\dots, |\langle M\rangle|-1\}\mid
  \Stab_{\langle M\rangle}(\Uvs)=\langle M^l \rangle\right\}.\] 
  It holds that
\begin{enumerate}
\item $\mathrm{lcm}(r_i\mid i\in \{1,\dots, t\})\mid g$: Since $M^g\in
  \stab{\langle M\rangle}{\Uvs}$, then there exists an $N\in \GL_k$
  such that
  \begin{align*}
    (U_1,\dots,U_t)&=N(U_1,\dots,U_t)M^g\\
    &=(NU_1M_1^g,\dots,NU_tM_t^g).
  \end{align*}
  It follows that $U_i=NU_iM_i^g$ which means that $M_i^g\in
  \stab{\langle M_i\rangle}{\Uvs_i}$, i.e., $r_i\mid g$ for any $i\in \{1,\dots,t\}$.
\item $g\mid \mathrm{lcm}(s_i\mid i\in \{1,\dots,t\})$ since, by
  definition of the $s_i$, $M^{\mathrm{lcm}(s_i\mid i\in
    \{1,\dots,t\})}\in \stab{G_M}{\Uvs}$.
\end{enumerate}

 \end{proof}

\begin{prop}\label{l:second}
  Let $k_i\leq n_i$, $U_i\in \F_q^{k_i\times d_i}$ be matrices
  with full rank, $\Uvs:=\rs(U)\in \Gr$ where
  $U=\diag(U_1,\dots,U_t)$, $\Cvs=\Uvs\langle M\rangle$ and
  $\Cvs_i:=\rs(U_i)\langle M_i\rangle$. It holds that
 \[|\Cvs|=\mathrm{lcm}(|\Cvs_i|\mid i\in \{1,\dots, t\})\] and
 \[d(\Cvs)\geq \min_{i\in \{1,\dots,t\}}\left\{d(\Cvs_i)\right\}.\]
  Moreover, we can distinguish the following cases:
  \begin{itemize}
  \item If $\gcd(|\Cvs_i|,|\Cvs_j|)=1$ for all $i\neq j$, then 
    \[
    d(\Cvs)= \min_{i\in \{1,\dots,t\}}\left\{d(\Cvs_i)\right\} .
    \]
  \item If $J=\{i\in \{1,\dots,t\}\mid |\Cvs_i|=|\Cvs|\}\neq
    \emptyset$, then \[d(\Cvs)\geq \sum_{j\in \{1,\dots, t\}\setminus
      J}d(\Cvs_j).\]
  \end{itemize}
  
\end{prop}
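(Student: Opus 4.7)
The plan is to exploit the block structure of $M=\diag(M_1,\dots,M_t)$ and $U=\diag(U_1,\dots,U_t)$ throughout. Writing $\Uvs_i:=\rs(U_i)$, since each $M_i$ acts only on its own coordinate block we obtain $\Uvs M^j = \Uvs_1 M_1^j \oplus \cdots \oplus \Uvs_t M_t^j$ and, because the blocks occupy disjoint coordinate ranges of $\F_q^n$, also $\Uvs \cap \Uvs M^j = \bigoplus_{i=1}^t (\Uvs_i \cap \Uvs_i M_i^j)$. Taking dimensions and using $\dim\Uvs = \sum_i k_i = k$ yields the master identity
\[
d(\Uvs,\Uvs M^j) \;=\; \sum_{i=1}^t d(\Uvs_i,\Uvs_i M_i^j),
\]
which will drive every remaining claim.

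For the cardinality, I would observe that $M^j$ stabilises $\Uvs$ if and only if $\Uvs_i M_i^j=\Uvs_i$ for every $i$, equivalently $r_i\mid j$ for every $i$, equivalently $\mathrm{lcm}(r_i)\mid j$. Since $|\Cvs_i|=r_i$, Proposition \ref{prop5} then gives $|\Cvs|=\mathrm{lcm}(|\Cvs_i|)$. For the basic inequality $d(\Cvs)\geq\min_i d(\Cvs_i)$, any $j$ with $\Uvs M^j\neq\Uvs$ must make at least one summand on the right of the master identity positive; any such summand is bounded below by the corresponding $d(\Cvs_i)$, while all other summands are nonnegative.

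For the coprime case I would invoke the Chinese Remainder Theorem. Fix $i_0$ attaining $\min_i d(\Cvs_i)$ and pick $j_0$ realising $d(\Cvs_{i_0})$ for the subcode. Pairwise coprimality of the $r_i$ lets me solve $j\equiv j_0\pmod{r_{i_0}}$ together with $j\equiv 0\pmod{r_l}$ for all $l\ne i_0$. Substituting into the master identity collapses every term except the $i_0$-th, exhibiting an element of $\Cvs$ at distance exactly $d(\Cvs_{i_0})$ from $\Uvs$, which matches the general lower bound. For the case $J\neq\emptyset$, the decisive observation is that every $i\in J$ satisfies $r_i=|\Cvs|$, so for any $j\in\{1,\dots,|\Cvs|-1\}$ we have $r_i\nmid j$, forcing $\Uvs_i M_i^j\neq\Uvs_i$ and contributing at least $d(\Cvs_i)$ in the master identity; summing the contributions guaranteed by this argument yields the claimed lower bound.

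The main technical obstacle I anticipate is justifying cleanly that $\Uvs\cap\Uvs M^j$ decomposes as the direct sum of the componentwise intersections. This rests on $\F_q^n$ splitting as an $M$-invariant direct sum of the coordinate blocks on which the individual $M_i$ act, but one has to keep the embeddings straight so that $\Uvs$ really is the direct sum of its components inside the ambient space. Once this identity and the master formula are in place, each assertion reduces to elementary divisibility bookkeeping about which $r_i$ divide which $j$.
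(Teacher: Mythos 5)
Your decomposition $\Uvs\cap\Uvs M^j=\bigoplus_{i}\left(\rs(U_i)\cap\rs(U_i)M_i^j\right)$ and the resulting identity $d(\Uvs,\Uvs M^j)=\sum_{i}d(\rs(U_i),\rs(U_i)M_i^j)$ are exactly the content of the paper's computation $\rank\mat{U\\UM^j}=\sum_{i}\rank\mat{U_i\\U_iM_i^j}$, and your handling of the cardinality, of the general bound $d(\Cvs)\geq\min_i d(\Cvs_i)$, and of the coprime case via the Chinese Remainder Theorem coincides with the paper's proof. (Your version of the general lower bound is in fact stated more carefully than the paper's, since you only use that \emph{some} block must move at each $j$, rather than assuming it is the block attaining the minimum.) The direct-sum identity you flag as the technical obstacle does hold here precisely because $U$ is block diagonal, so $\Uvs$ really is the internal direct sum of its components; this is exactly what distinguishes this proposition from Proposition~\ref{l:first}.

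The gap is in the final bullet. Your argument shows that for every $j\in\{1,\dots,|\Cvs|-1\}$ each block $i\in J$ moves and therefore contributes at least $d(\Cvs_i)$ to the master identity, which upon summing gives $d(\Cvs)\geq\sum_{i\in J}d(\Cvs_i)$. But the inequality to be proved is $d(\Cvs)\geq\sum_{i\notin J}d(\Cvs_i)$, a sum over the \emph{complement} of $J$, and your argument says nothing about the blocks outside $J$: such a block $i$ is fixed by $M^j$ whenever $|\Cvs_i|$ divides $j$, so it need not contribute anything at the minimizing $j$. The paper's proof of this bullet proceeds differently, bounding $\rank\mat{U_i\\U_iM_i^j}\geq 2k_i-d_i$ for $i\notin J$ while using only the trivial bound $k_i$ for $i\in J$; your observation, sound as it is, actually supports the complementary inequality indexed by $J$ rather than the one claimed. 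As written, your proof therefore establishes a different statement from the one asserted, and you would need either to supply an argument controlling the blocks outside $J$ uniformly in $j$, or to state explicitly that what your method yields is the bound $\sum_{i\in J}d(\Cvs_i)$.
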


\begin{proof}
\begin{enumerate}
\item 
 Let us first derive the cardinality of $\Cvs$. Let $j:=\min\{i\in
  \N \mid \Uvs=\Uvs M^i\}$. Then
  \begin{eqnarray*}
    \rank\mat{U \\
      UM^j}& =& 
    \rank\mat{\diag(U_1,\dots,U_t) \\ 
      \diag(U_1M_1^j,\dots,U_tM_t^j)}\\
    &=&\sum_{i=1}^t \rank\mat{U_i\\U_iM_i^j}=k.
  \end{eqnarray*}
  Since $\rank\mat{U_i\\U_iM_i^j}\geq k_i$ for all
  $i\in\{1,\dots, t\}$, it follows that
  $\rs(U_i)=\rs(U_i)M^j$, implying that $|\Cvs_i|$ divides $j$ for
  all $i\in \{1,\dots,t\}$. By minimality we obtain the formula of the cardinality.

\item
  To show the bound on the minimum distance 
  assume without loss of generality that $d(\Cvs_1)\leq d(\Cvs_i)$ for
  $i\in \{1,\dots,t \}$ and $d_1\in \N$ such that 
  \[d_1=\max_{1\leq j<|\Cvs_1|}\left\{\dim (\rs(U_1)\cap\rs(U_1)M_1^j)\right\}.\] 
  For $1\leq j<|\Cvs|$, it holds that
  \begin{align*}
    \rank\mat{U\\UM^j}&=\sum_{i=1}^t
    \rank\mat{U_i\\U_iM_i^j} \\ 
    & \geq 2k_1-d_1+\sum_{i=2}^tk_i=k+k_1-d_1.
  \end{align*} 
  It follows that 
  \begin{align*}
    d(\Cvs)&=2 \min_{1\leq j< |\Cvs|}\left\{\rank\mat{U\\UM^j}\right\}-2k\\
    & \geq  2(k+k_1-d_1)-2k=2k_1-2d_1=d(\Cvs_1).
  \end{align*}
\item
  The inequality becomes an equality if $\gcd(|\Cvs_i|,|\Cvs_j|)=1$
  for any $i\neq j$, since there exists a $1\leq g<|\Cvs|$ such that
  \[g\equiv g_1 \pmod{|\Cvs_1|} \mbox{ and } g\equiv 0
  \pmod{|\Cvs_j|}\] for $j>1$. It follows that $d(\Uvs,\Uvs M^g)=d(\Cvs_1)$.
\item
  If instead $J:=\{i\in \{1,\dots,t\}\mid |\Cvs_i|=|\Cvs|\}$ is
  non-empty, then, for any $1\leq j < |\Cvs|$,  it holds that
  \[\rank\mat{U\\UM^j}\geq\sum_{i\in J}k_i+\sum_{i\notin
    J}(2k_i-d_i)=k+\sum_{i\notin J}(k_i-d_i)\] where $d_i=\max_{1\leq j<
    |\Cvs_i|}\left\{\dim(\rs(\bar{U}_i)\cap\rs(\bar{U}_i)M_i^j)\right\}$. This implies that
  \[d(\Cvs)\geq \sum_{j\notin J}d(\Cvs_j).\]

  \end{enumerate}
\end{proof}

\begin{prop}\label{l:first}
  Let $\Uvs:=\rs (U_1,\dots,U_t)\in \Gr$ where $k\leq n_i$ and $U_i\in
  \F_q^{k\times n_i}$ are matrices having full rank.
  Define 
  \begin{align*}
    \Cvs:=\Uvs\langle M\rangle\subset \Gr \mbox{ and
    }\Cvs_i:=\rs(U_i)\langle M_i\rangle\subset \mathcal{G}_{q}(k,n_i).
  \end{align*} 
  It holds that 
  \[d(\Cvs) \geq \min_{i\in \{1,\dots, t\}}\left\{d(\Cvs_i)\right\}.\]
\end{prop}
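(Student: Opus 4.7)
My plan is to control $d(\Uvs,\Uvs M^j)$ for every $j$ with $\Uvs M^j\neq\Uvs$ by projecting into each block and comparing to the marginal code $\Cvs_i$, thereby reducing the bound to a distance bound inside a single $\Cvs_i$. The key structural input I would exploit is that each $U_i$ has full row rank $k$, so the coordinate projection $\pi_i\colon\F_q^n\to\F_q^{n_i}$ onto the $i$-th block restricts to an isomorphism $\Uvs\to\rs(U_i)$ --- the composition $x\mapsto xU\mapsto xU_i$ being injective --- and similarly to an isomorphism $\Uvs M^j\to\rs(U_i)M_i^j$, since $M^j=\diag(M_1^j,\ldots,M_t^j)$ acts block-diagonally.

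The next step is to push this injectivity into the intersection. For $v=xU=yUM^j\in\Uvs\cap\Uvs M^j$ one has $\pi_i(v)=xU_i=yU_iM_i^j\in\rs(U_i)\cap\rs(U_i)M_i^j$; since $\pi_i$ is injective on $\Uvs$, it is injective on the intersection as well, so
\[
\dim(\Uvs\cap\Uvs M^j)\;\le\;\dim\bigl(\rs(U_i)\cap\rs(U_i)M_i^j\bigr)\qquad\text{for every }i.
\]
Via the identity $d(\cdot,\cdot)=2k-2\dim(\cap)$ this rewrites as $d(\Uvs,\Uvs M^j)\ge d(\rs(U_i),\rs(U_i)M_i^j)$ for every $i$. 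In the main case, where there exists an index $i_0$ with $\rs(U_{i_0})M_{i_0}^j\neq\rs(U_{i_0})$, the right-hand side is at least $d(\Cvs_{i_0})\ge\min_i d(\Cvs_i)$ by definition of the minimum distance of $\Cvs_{i_0}$, and the bound follows.

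The main obstacle is the residual case where $\rs(U_i)M_i^j=\rs(U_i)$ for every $i$ while $\Uvs M^j\neq\Uvs$. This situation does not arise in the block-diagonal setting of Proposition~\ref{l:second}, but it can occur here because the components of $U=(U_1,\ldots,U_t)$ are coupled through a single $x\in\F_q^k$. To handle it I would write $U_iM_i^j=N_iU_i$ for uniquely determined $N_i\in\GL_k$ and compute $\Uvs\cap\Uvs M^j$ explicitly through the parametrisation $y\mapsto yUM^j$: the intersection corresponds to $\{y\in\F_q^k : yN_1=\cdots=yN_t\}$, giving
\[
d(\Uvs,\Uvs M^j)\;=\;2\,\rank\bigl[N_2-N_1\mid\cdots\mid N_t-N_1\bigr].
\]
The hypothesis $\Uvs M^j\neq\Uvs$ forces the $N_i$ not all to coincide, and the structural description of each $N_i$ as the action of a common power of $M_i$ on the subfield-stabilised subspace $\rs(U_i)$ is what one exploits to estimate this rank from below by $\tfrac12\min_i d(\Cvs_i)$, completing the argument.
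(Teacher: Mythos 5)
Your first two paragraphs reproduce the paper's argument: project onto the $i$-th block, use the full row rank of $U_i$ to make the projection injective on $\Uvs$ and on $\Uvs M^j$, and bound $\dim(\Uvs\cap\Uvs M^j)$ by the intersection number of any block code $\Cvs_{i_0}$ whose block actually moves. Up to that point you match the paper. You have, moreover, correctly isolated the case that the paper's own proof silently excludes: the paper asserts that for every $g\in\{1,\dots,|\Cvs|-1\}$ some block satisfies $\rs(U_j)M_j^g\neq\rs(U_j)$, and this is false in general --- in the paper's own example one has $|\Cvs|=315$ while $\mathrm{lcm}(|\Cvs_1|,|\Cvs_2|)=105$, so at $g=105$ both blocks return to themselves even though $\Uvs M^{105}\neq\Uvs$. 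Recognizing that this residual case exists and must be argued separately is the most valuable part of your write-up.

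However, your treatment of the residual case is a placeholder, not a proof. The reduction $d(\Uvs,\Uvs M^j)=2\rank\bigl[N_2-N_1\mid\cdots\mid N_t-N_1\bigr]$ is correct, but the concluding sentence --- that the ``structural description'' of the $N_i$ yields $\rank[\cdots]\ge\tfrac12\min_i d(\Cvs_i)$ --- is precisely the assertion that has to be proved, and no argument is given. It is not a routine step: replacing each $U_i$ by $L_iU_i$ with $L_i\in\GL_k$ leaves every block code $\Cvs_i$, hence every $d(\Cvs_i)$, unchanged, while conjugating the matrices $N_i\mapsto L_iN_iL_i^{-1}$ independently of one another. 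Thus the only constraint the block codes impose on the tuple $(N_1,\dots,N_t)$ is the conjugacy class of each $N_i$, and a lower bound on $\rank(N_2-N_1)$ cannot be read off from conjugacy classes alone: already in $\GL_3(\F_2)$ the class-algebra constants show that a transvection can be written as a product of two order-$7$ elements, so two matrices of order $7$ (the kind of $N_i$ produced by spread-type blocks with $d(\Cvs_i)=2k=6$) can differ by a matrix of rank $1$. Whatever argument closes the residual case must therefore use more than the data you invoke; as written, your proof stalls exactly where the genuine difficulty of the statement (and the gap in the paper's own proof) lies.
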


\begin{proof}
  For any $g\in \{1,\dots,|\Cvs|-1\}$ the following properties hold:
  \begin{enumerate}
  \item If $v=(v_1,\dots,v_t)\in \Uvs\cap\Uvs M^g$ where $v_i\in
    \F_q^{n_i}$ then $v_i\in \rs(U_i)\cap \rs(U_i)M_i^g$ for $i\in
    \{1,\dots,t\}.$
  \item There exists $j\in \{1,\dots,t\}$ such that $\rs(U_j)\neq \rs(U_j)M_j^g$.
  \end{enumerate}
  It follows that
  \[
    \max_{g}\{\dim(\Uvs\cap \Uvs M^g)\} \leq \hspace{4.5cm}\]
    \[\max_{i}\!\left\{\!\max_g \!\left\{\!\dim(\rs(U_i)\cap
        \rs(U_i)M_i^g)\mid \rs(U_i)\neq
        \rs(U_i)M_i^g\right\}\!\!\right\}\] 
    which implies that $d(\Cvs)\geq \min_{i\in \{1,\dots,t\}}
    \{d(\Cvs_i) \}$.

\end{proof}

\begin{ex} {\ }
  Let $p_1=x^4+x+1,p_2=x^6+x+1\in \F_2[x]$ and $M_1,M_2$ the respective companion matrices of $p_1,p_2$.
      Let
    \[
      U_1=\mat{1&0&0&0\\0&1&1&0}, \mbox{ }
      U_2=\mat{1&0&0&0&0&0\\0&1&0&1&1&1}
    \] and $\Cvs_1=\rs(U_1)\langle M_1\rangle$ and
    $\Cvs_1=\rs(U_2)\langle M_2\rangle$. It holds that $|\Cvs_1|=5$,
    $|\Cvs_2|=21$ and $d(\Cvs_1)=d(\Cvs_2)=4$.

\begin{enumerate}
  \item \emph{Proposition \ref{l:second}:}\\
      Let $\Uvs=\rs\mat{U_1 & 0 \\0 &U_2}\in \mathcal{G}_2(4,10)$ and
    $M=\diag(M_1,M_2)$,  then
    $|\Cvs|=105$ and $d(\Cvs)=4$.
 
  \item \emph{Proposition \ref{l:first}:}\\
    Let $\Uvs=\rs\mat{U_1 & U_2}$, $M=\diag(M_1,M_2)$ and
    $\Cvs=\Uvs\langle M\rangle$, then
      \[|\Cvs|=315 \mbox{ and } d(\Cvs)=4.\]

%
    
  \end{enumerate}
  
\end{ex}

\section{Cardinality and Minimum Distance of Cyclic Orbit Codes}\label{sec:cardinality}

In the previous section we investigated the cyclic subgroups of $\GL_n$ to classify the respective orbit codes. To compute the minimum distance and cardinality of a given cyclic orbit code we also need to take the starting point of that orbit into account. In this section we will show how to compute these parameters using the polynomial extension field representation of the elements of that starting point.

We will explain in detail the case of irreducible cyclic orbit codes and give some remarks in the end how to generalize this to arbitrary cyclic subgroups of $\GL_n$.


\subsection{Irreducible cyclic orbit codes}\label{icoc}

\begin{defi}
\begin{enumerate}
\item A matrix $A\in \GL_n$ is called \emph{irreducible} if $\F_q^n$ contains no non-trivial $A$-invariant subspace, otherwise it is called \emph{reducible}.
\item A non-trivial subgroup $G\le \GL_n$ is called \emph{irreducible} if $\F_q^n$ contains no non-trivial $G$-invariant subspace, otherwise it is called \emph{reducible}.
\end{enumerate}
\end{defi}

A cyclic group is irreducible if and only if its generator matrix is irreducible. Moreover, an invertible matrix is irreducible if and only if its characteristic polynomial is irreducible. The rational canonical form of such an invertible matrix is the companion matrix of its characteristic polynomial.

It follows that any cyclic irreducible subgroup of $\GL_n$ is conjugate to a group generated by a companion matrix of an irreducible polynomial.
Moreover, we know from Theorem \ref{t:rep_grp}  that groups generated by companion matrices of an irreducible polynomial of the same order are conjugate and from Lemma \ref{thm:isometricorbitcodes} that orbit codes defined by conjugate groups are isometric.

Therefore it is sufficient to
characterize the orbits of cyclic groups generated by companion
matrices of irreducible polynomials of degree $n$ of different orders.

We first need the following definitions and results on irreducible
polynomials over finite fields.

Let $p(x)=p_0+p_1x+\ldots+p_{n-1}x^{n-1}+x^n$ be a monic irreducible
polynomial of degree $n$ over the finite field $\F_{q}$. 

In the finite field $\F_{q}[x]/(p(x))$, the modular multiplication of
an element $v(x)=v_0+v_1x+\ldots+v_{n-1}x^{n-1}$ by the fixed
polynomial $x$ yields
\begin{align*}
w(x) &=v(x)\cdot x \pmod{p(x)} \\
 &= -v_{n-1}p_0+\sum_{i=1}^{n-1}(v_{i-1}-v_{n-1}p_{i})x^{i}.
\end{align*}
Using a vector representation of $\F_{q}[x]/(p(x))$, the modular
multiplication by $x$, $w(x)=v(x)\cdot x\pmod{p(x)}$ yields a linear
mapping
\[
(w_0,w_1,\ldots,w_{n-1})=(v_0,v_1,\ldots,v_{n-1}) M_p
\]
where $M_p$ is the companion matrix of $p(x)$ as defined in Definition
\ref{d:com_mat}. 

If $p(x)$ is a \emph{primitive} polynomial,
i.\,e. the minimal polynomial of a multiplicative generator $\alpha$
of $\F_{q^n}^*:=\F_{q^n} \setminus \{0\}$, the group generated by
$M_p$ has order $q^n-1$ and is known as the \emph{Singer group}. This
notation is used e.\,g. by Kohnert et al. in their network code
construction \cite{el10, ko08p}. Elsewhere $M_p$ is called
\emph{Singer cycle} or \emph{cyclic projectivity} (e.\,g. in
\cite{hi98}).

If we substitute the indeterminate $x$ in the polynomials $v\in
\F_{q}$ by the generator $\alpha\in \mathbb{F}_{q^n}^*$, the modular
multiplication with $x$ corresponds to the multiplication with
$\alpha$ and hence to multiplication with the companion matrix $M_p$:

\begin{lem}\label{thm:compmult}
  Let $p(x)$ be an irreducible polynomial over $\mathbb{F}_{q}$ of
  degree $n$ and $P$ its companion matrix.  Furthermore let $\alpha\in
  \mathbb{F}_{q^n}^*$ be a root of $p(x)$ and $\phi$ be the canonical
  homomorphism
\begin{align*}
\phi: \mathbb{F}_q^n &\longrightarrow \mathbb{F}_{q^n}\\ 
(v_0,\dots,v_{n-1})&\longmapsto \sum_{i=0}^{n-1} v_i \alpha^{i}.
\end{align*}
Then the multiplication with $M_p$ resp. $\alpha$ commutes with the
mapping $\phi$, i.\,e. for all $v\in \F_q^n$ we get
\[
\phi(vM_p)=\phi(v)\alpha .
\]
\end{lem}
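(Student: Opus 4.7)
The plan is to carry out a direct index calculation using the explicit shape of the companion matrix from Definition~\ref{d:com_mat} together with the defining relation $p(\alpha)=0$, i.\,e.
\[
\alpha^n=-p_0-p_1\alpha-\cdots-p_{n-1}\alpha^{n-1}.
\]
Conceptually, the statement is a reformulation of the fact that $\phi$ identifies the $\F_q$-vector space $\F_q^n$ with $\F_q[x]/(p(x))\cong\F_{q^n}$ via $x\mapsto\alpha$, and that multiplication by $x$ modulo $p(x)$ is exactly the linear map implemented by $M_p$. That identification has already been spelled out in the displayed formula preceding the lemma, so the proof is essentially a matter of checking that the two routes around the square agree.

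First, I would expand $vM_p$ coordinate-wise from Definition~\ref{d:com_mat}. Only the subdiagonal $1$'s and the bottom row of $M_p$ contribute, giving
\[
(vM_p)_0=-v_{n-1}p_0,\qquad (vM_p)_j=v_{j-1}-v_{n-1}p_j\quad(1\le j\le n-1).
\]
These are precisely the coefficients of $v(x)\cdot x\pmod{p(x)}$ displayed just before the lemma, so applying $\phi$ yields
\[
\phi(vM_p)=-v_{n-1}p_0+\sum_{j=1}^{n-1}(v_{j-1}-v_{n-1}p_j)\alpha^j.
\]

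Second, I would compute $\phi(v)\alpha$ directly: by $\F_q$-linearity,
\[
\phi(v)\alpha=\sum_{i=0}^{n-1}v_i\alpha^{i+1}=\sum_{j=1}^{n-1}v_{j-1}\alpha^j+v_{n-1}\alpha^n,
\]
and substituting the minimal polynomial relation for $\alpha^n$ rewrites this as
\[
\phi(v)\alpha=\sum_{j=1}^{n-1}v_{j-1}\alpha^j-v_{n-1}\sum_{i=0}^{n-1}p_i\alpha^i=-v_{n-1}p_0+\sum_{j=1}^{n-1}(v_{j-1}-v_{n-1}p_j)\alpha^j.
\]
Comparing the two expressions term by term gives $\phi(vM_p)=\phi(v)\alpha$ for every $v\in\F_q^n$, which is the claim.

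There is no real obstacle here; the only risk is a bookkeeping slip in the re-indexing of the sums (in particular the shift $i\mapsto j=i+1$ and the role of the $v_{n-1}\alpha^n$ term). Once both sides are written in the basis $\{1,\alpha,\ldots,\alpha^{n-1}\}$ with coefficient of $\alpha^j$ read off explicitly, the equality is immediate.
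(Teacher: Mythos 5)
Your proof is correct and follows essentially the same route the paper takes: the paper gives no separate proof of this lemma, but the displayed computation of $v(x)\cdot x \pmod{p(x)}$ and its matrix reformulation immediately preceding the statement is exactly your coordinate-wise verification. Your version merely makes explicit the substitution of $\alpha^n=-p_0-\cdots-p_{n-1}\alpha^{n-1}$ and the comparison of coefficients, which is what the paper leaves implicit.
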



\begin{defi}
A \emph{multiset} is a generalization of the notion of set in which members are allowed to appear more than once. To distinguish it from usual sets $\{x \in X\}$ we will denote multisets by $\{\{x \in X\}\}$. The number of times an element $x$ belongs to the multiset $X$ is the \emph{multiplicity} of that element, denoted by $m_X(x)$.
\end{defi}

We will first investigate the simpler case of primitive companion
matrices and then generalize it to arbitrary irreducible cyclic orbit
codes.


\subsubsection*{Primitive Generator}\label{prim}

For this part of the paper let $p(x)\in \F_{q}[x]$ always be a
primitive polynomial of degree $n$ (i.\,e. the order of $p(x)$ is
$q^{n}-1$) and $\alpha$ be a root of it. Thus $\alpha$ is a primitive
element of $\F_{q^n}$. It follows that for any non-zero element $u\in \F_{q}^{n}$ there exists an $i\in \Z_{q^{n}-1}$ such that $\phi(u) = \alpha^{i}$.

The following fact is a generalization of Lemma 1 from \cite{ko08p} and has been formulated in a similar manner in \cite{tr11}.

\begin{thm}\label{thmprim}
Assume $\mathcal{U}=\{0,u_1,\dots,u_{q^k-1}\} \in \Gr$, where $u_{i} \in \F_{q}^{n}\setminus \{0\}$ and let $b_i \in \Z_{q^n-1}$ such that
  \[\phi(u_i)=\alpha^{b_i} \hspace{0.7cm} \forall i=1,\dots,q^k-1 .\]
  Let $d$ be minimal such that any element of the set
  \[D:=\{\{b_{m}- b_{l} \mod q^n-1 \mid l,m \in \mathbb{Z}_{q^k-1}, l\neq
  m\}\}\] 
has multiplicity less than or equal to $q^d-1$, i.\,e. a quotient of two
  elements in the field representation appears at most $q^d-1$ times
  in the set of all pairwise quotients. If $d<k$ then the orbit of the group
  generated by the companion matrix $P$ of $p(x)$ on $\mathcal{U}$ is
  an orbit code of cardinality $q^{n}-1$ and minimum distance $2k-2d$.
\end{thm}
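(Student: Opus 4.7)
The plan is to transfer the whole problem into the field $\F_{q^n}$ via the canonical isomorphism $\phi$ of Lemma~\ref{thm:compmult}, so that the matrix action of $P$ becomes multiplication by the primitive element $\alpha$ and the combinatorics of the orbit can be read directly off the multiset $D$. Concretely, set $\Vvs:=\phi(\Uvs)\subseteq \F_{q^n}$; then $\phi(\Uvs P^s)=\alpha^s\Vvs$, so $\Uvs P^s=\Uvs$ is equivalent to $\alpha^s\Vvs=\Vvs$, and $\dim_{\F_q}(\Uvs\cap \Uvs P^s)=\dim_{\F_q}(\Vvs\cap \alpha^s\Vvs)$. Everything then reduces to understanding the intersections $\Vvs\cap \alpha^s\Vvs$ for $s\in \Z_{q^n-1}$.

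The first key step is to identify the size of these intersections with the multiplicity profile of $D$. A non-zero $\alpha^{b_m}\in \Vvs$ also lies in $\alpha^s\Vvs$ precisely when $\alpha^{b_m}=\alpha^s\alpha^{b_l}$ for some $l$, i.e.\ when $b_m-b_l\equiv s\pmod{q^n-1}$, so
\[
|\Vvs\cap \alpha^s\Vvs|-1 = m_D(s).
\]
Since $\Vvs\cap \alpha^s\Vvs$ is an $\F_q$-subspace, $m_D(s)+1$ must be a power of $q$, so the only values $0,\,q-1,\,q^2-1,\ldots$ can occur as multiplicities and $\dim(\Vvs\cap \alpha^s\Vvs)$ is the exponent $\delta(s)$ with $m_D(s)=q^{\delta(s)}-1$. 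This rigidity is the conceptual lever of the proof, and I expect it to be the main step to nail down carefully.

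With this dictionary in place, the minimum distance drops out of Theorem~\ref{lem:mindist}:
\[
d(\Cvs)=\min_{s\not\equiv 0}\bigl(2k-2\delta(s)\bigr)=2k-2\max_{s\not\equiv 0}\delta(s).
\]
By minimality of $d$ in its defining condition, the maximum multiplicity attained in $D$ is exactly $q^d-1$ (the next admissible value being $q^{d-1}-1$), so $\max\delta(s)=d$ and $d(\Cvs)=2k-2d$.

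For the cardinality I would invoke Proposition~\ref{prop5}: $|\Cvs|=(q^n-1)/|\stab{\langle P\rangle}{\Uvs}|$. If some $P^s$ with $s\not\equiv 0$ stabilized $\Uvs$, then multiplication by $\alpha^s$ would act on $\Vvs\setminus\{0\}$ as a fixed-point-free permutation (fixing a non-zero element would force $\alpha^s=1$, i.e.\ $s\equiv 0$), and the common difference $b_{\pi(i)}-b_i\equiv s$ for each index $i$ would contribute $q^k-1$ copies of $s$ to $D$, contradicting $m_D(s)\le q^d-1<q^k-1$. Hence the stabilizer is trivial and the orbit has the claimed size $q^n-1$. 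The principal technical hurdle is the identification in the second step: the combination of additive/multiplicative bookkeeping with the power-of-$q$ constraint on intersection sizes is what converts a mere multiplicity inequality into the exact distance formula $2k-2d$.
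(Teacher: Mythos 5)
Your proof is correct and follows essentially the same route as the paper: transfer to $\F_{q^n}$ via $\phi$, identify $\dim(\Uvs\cap\Uvs P^s)$ with the multiplicity $m_D(s)$, and read off the distance from the minimality of $d$. You actually make explicit two points the paper leaves implicit --- the power-of-$q$ rigidity forcing $m_D(s)=q^{\delta(s)}-1$, and the stabilizer-triviality argument for the cardinality --- which strengthens rather than changes the argument.
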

\begin{proof}
  In field representation the elements of the orbit code are:
  \begin{align*}
  C_{0}=&\{\alpha^{b_{1}}, \alpha^{b_{2}},...,\alpha^{b_{q^{k}-1}}\}\cup \{0\}\\
  C_{1}=&\{\alpha^{b_{1}+1}, \alpha^{b_{2}+1},...,\alpha^{b_{q^{k}-1}+1}\}\cup \{0\}\\
  \vdots &\\
  C_{q^{n}-2}=&\{\alpha^{b_{1}+q^{n}-2}, ...,\alpha^{b_{q^{k}-1}+q^{n}-2}\}\cup \{0\}
  \end{align*}
First we compute the minimum distance of the code. 
We know from Theorem \ref{thm:basic} that it suffices to compute the minimum distance and therefor the intersection of $C_{0}$ with $C_{h}$ for $h=1, \dots, q^{n}-2$.  A non-zero element $\alpha^{b_{i}} \in C_0$ is also in $C_{h}$ if and only if
\[
\alpha^{b_{i}} = \alpha^{b_j+h} \quad \textnormal{ for some } j\in \{1,\dots, q^{k}-1\} \]
\[\iff b_{i}-b_{j} \equiv h \mod q^{n}-1  .\]
But by assumption there are at most $q^{d}-1$ many pairs $(b_i, b_j)$ fulfilling this equation. Thus,
\[\dim (C_{0}\cap C_{h}) \leq d \quad \forall h\in \{1,\dots, q^{n}-2\}   .\]
Since we chose $d$ minimal, this inequality is actually an equality for some $h$, hence the minimum distance of the code is $2k-2d$.

The cardinality of the code follows from the fact that $d<k$, which implies that all elements of the orbit are distinct.
\end{proof}

\begin{prop}\label{d=k}
In the setting of before, if $d=k$, one gets orbit elements with full
intersection which means they are the same vector space. 
\begin{enumerate}
\item Let $D':=D\setminus \{\{a\in D \mid m_D(a)=q^{k}-1\}\}$ and $d':=\log_{q}(\max\{m_{D'}(a) \mid a\in D'\}+1)$. Then the minimum distance of the code is $2k - 2d'$. 
\item Let $m$ be the least element of $D$ of multiplicity $q^{k}-1$. Then the cardinality of the code is $m-1$.
\end{enumerate}
\end{prop}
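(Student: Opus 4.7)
The strategy is to mirror the proof of Theorem \ref{thmprim} while tracking, via orbit--stabilizer, exactly which shifts $P^h$ now fix $\Uvs$. The entire proposition hinges on a single equivalence which I would establish first.

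\textbf{Key equivalence.} For $h \in \{1, \dots, q^n - 2\}$, I claim that $P^h \in \stab{\langle P \rangle}{\Uvs}$ if and only if $m_D(h) = q^k - 1$. Indeed, by Lemma \ref{thm:compmult} the matrix $P^h$ sends $\phi(\Uvs \setminus \{0\}) = \{\alpha^{b_1}, \dots, \alpha^{b_{q^k-1}}\}$ to $\{\alpha^{b_1+h}, \dots, \alpha^{b_{q^k-1}+h}\}$, so $P^h$ fixes $\Uvs$ precisely when the shift $b_i \mapsto b_i + h \pmod{q^n - 1}$ permutes $\{b_1, \dots, b_{q^k-1}\}$. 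Since the $b_i$ are pairwise distinct, this shift is a permutation iff every $b_j$ equals $b_i + h$ for some (necessarily unique) $i$, i.e.\ iff the difference $h$ occurs in $D$ with the maximum possible multiplicity $q^k - 1$.

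For part~(1), I would recycle the intersection count from Theorem \ref{thmprim}: the nonzero elements of $C_0 \cap C_h$ correspond to the indices $i$ for which $b_i - b_j \equiv h$ has a (necessarily unique) solution $j$, so $|C_0 \cap C_h| = m_D(h) + 1$. Shifts with $m_D(h) = q^k - 1$ yield $C_h = C_0$ as codewords and hence contribute nothing to the minimum distance --- this is exactly the role of passing from $D$ to $D'$. Over the remaining shifts, the defining bound $m_D(h) \leq q^{d'} - 1$ gives $\dim(C_0 \cap C_h) \leq d'$, with equality attained for some $h$ by minimality of $d'$. Therefore $d(\mathcal{C}) = 2k - 2d'$.

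For part~(2), I would apply orbit--stabilizer (Proposition \ref{prop5}) to the cyclic group $\langle P \rangle$ of order $q^n - 1$. By the key equivalence, $\stab{\langle P \rangle}{\Uvs}$ consists of $P^0$ together with all $P^h$ for which $m_D(h) = q^k - 1$. Every subgroup of a cyclic group of order $q^n - 1$ has the form $\langle P^m \rangle$ for some divisor $m \mid q^n - 1$, so the stabilizer is forced to be $\langle P^m \rangle$ where $m$ is the least positive element of $D$ of multiplicity $q^k - 1$; by Theorem \ref{thm:basic}(1) the orbit then has cardinality $(q^n - 1)/((q^n - 1)/m) = m$. The only subtlety in this step is confirming that the set of full-multiplicity shifts (together with $0$) really closes up into a subgroup of $\langle P \rangle$, so that $m$ in fact divides $q^n - 1$; this is automatic from the general fact that stabilizers are subgroups, combined with the cyclic structure which leaves no other option than $\langle P^m \rangle$.
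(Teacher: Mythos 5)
Your argument follows essentially the same route as the paper's (very terse) proof, and the key equivalence you isolate --- $P^h$ stabilizes $\Uvs$ if and only if $m_D(h)=q^k-1$ --- is exactly the fact the paper leaves implicit; your part (1) matches the paper's ``discard the full-intersection shifts and take the largest remaining intersection''. The point you must address is that in part (2) your orbit--stabilizer computation yields cardinality $m$, whereas the proposition asserts $m-1$, and you never reconcile the two. As it happens, your value is the correct one: for the spread code of Corollary \ref{spread} one has $m=c=\frac{q^n-1}{q^k-1}$ and the code has exactly $c$ elements (check $q=2$, $n=4$, $k=2$: the orbit of $\F_4\le\F_{16}$ consists of $5=m$ subspaces, not $4$), and the paper's own proof sketch (``one has to choose the minimal element \dots\ for the number of distinct vector spaces in the orbit'') also points to $m$. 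So the statement, and Algorithm \ref{alg1} which returns $m-1$, carry an off-by-one slip. Still, a blind proof of the proposition as written should either derive $m-1$ or explicitly flag that the stated value is inconsistent with the argument; silently proving a different count is a gap in the write-up even though the underlying mathematics is sound. A minor additional remark on part (1): you should say what happens when $D'=\emptyset$ (as for a spread), where the convention $\max\emptyset=0$ gives $d'=0$ and minimum distance $2k$ as required.
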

\begin{proof}
\begin{enumerate}
\item Since the minimum distance of the code is only taken between distinct vector spaces, one has to consider the largest intersection of two elements whose dimension is less than $k$.
\item Since 
\[\Uvs P^{m} = \Uvs \implies \Uvs P^{lm} = \Uvs \quad \forall l \in \N\]
and the elements of $D$ are taken modulo the order of $P$, one has to choose the minimal element of the multiset $\{\{a \in D \mid m_D(a)=q^{k}-1\}\}$ for the number of distinct vector spaces in the orbit.
\end{enumerate}
\end{proof}

Note, that for $q>2$ it holds that  $\{P^{i}\mid i=0,\dots, \ord(P)\}=\F_{q}[P]$ and thus contains all scalar multiples of each power of $P$. Therefore, there are at most $(q^{n}-1)/(q-1)$ different vector spaces in the orbit and one will always get $d=k$ when applying Theorem \ref{thmprim}.

\begin{algorithm}
\caption{Computing cardinality and minimum distance of primitive cyclic orbit codes in $\Gr$}
\label{alg1}                       
\begin{algorithmic}       
\REQUIRE $p(x)\in \F_{q}[x]$ primitive of degree $n$, $\alpha$ a root of $p(x)$ and $\Uvs=\{0,u_{1},..., u_{q^{k}-1}\} \in \Gr$
\FOR{$u_{i}$ in $\Uvs\setminus\{0\}$}
\STATE store $\bar{u}_{i}:=\log_{\alpha}(\phi(v))$
\ENDFOR
\STATE store all differences $\bar{u}_i-\bar{u}_j$ in the difference multiset $D$

\STATE set $c:= \max\{m_D(a) \mid a \in D\}$
\IF{$\log_{q}(c+1)\neq k$}
\RETURN $q^{n}-1, 2(k-\log_{q}(c+1))$
\ELSE
\STATE set $c':=\max\{m_D(a) \mid a \in D $ and $ m_D(a)<c\}$
\STATE set $m:= \min\{a \in D \mid m_D(a)=c\}$
\RETURN $m-1, 2(k-\log_{q}(c'+1))$
\ENDIF
\end{algorithmic}
\end{algorithm}

Tables \ref{table1}--\ref{table3} list binary codes found by randomly picking points in the Grassmannian as initial points of the orbit and computing the minimum distance. Since this search was not exhaustive, the blank fields in the tables do not imply that there are no codes for this set of parameters, but that we have not found any.

\begin{table}[h]
\begin{center}
\begin{tabular}{*{6}{|c}|c|c|}
\hline
\backslashbox{$d(\Cvs)$}{$n$}&4&5&6&7&8&9&10 \\\hline
2&   15&31&63&127&255&511&1023 \\\hline
4&   5&-&21&-&85&-&341 \\\hline
\end{tabular}
\end{center}
\caption{Cardinality of binary primitive cyclic orbit codes for $k=2$.}
\label{table1}
\end{table}

\begin{table}[h]
\begin{center}
\begin{tabular}{*{6}{|c}|c|c|}
\hline
\backslashbox{$d(\Cvs)$}{$n$}&6&7&8&9&10&11&12 \\\hline
2&   63&127&255&511&1023&2047&4095 \\\hline
4&   63&127&255&511&1023&2047&4095 \\\hline
6&   9&-&-&73&-&-&585 \\\hline
\end{tabular}
\end{center}
\caption{Cardinality of binary prim. cyclic orbit codes for $k=3$.}
\end{table}

\begin{table}[h]
\begin{center}
\begin{tabular}{*{6}{|c}|c|c|}
\hline
\backslashbox{$d(\Cvs)$}{$n$}&8&9&10&11&12 \\\hline
2&   255&511&1023&2047&4095 \\\hline
4&   255&511&1023&2047&4095 \\\hline
6&   -&511&1023&2047&4095 \\\hline
8&   17&-&-&-&273 \\\hline
\end{tabular}
\end{center}
\caption{Cardinality of binary prim. cyclic orbit codes for $k=4$.}
\label{table3}
\end{table}

In the following we will show that spread codes can be constructed as cyclic orbit codes (cf. \cite{tr11}).
For this let $\alpha$ be a primitive element of $\F_{q^n}$ and assume $k|n$ and
$c:=\frac{q^n-1}{q^k-1}$. Naturally, the subfield $\F_{q^{k}}\leq \F_{q^{n}}$ is also an $\F_{q}$-subspace of $\F_{q^{n}}$. On the other hand, $\F_{q^{k}} = \{\alpha^{ic} \mid i=0,...,q^{k}-2\}\cup\{0\}$.

\begin{lem}
  For every $\beta\in\F_{q^n}$ the set
$$
\beta\cdot\F_{q^k}=\{\beta\alpha^{ic} \mid i=0,...,q^{k}-2\}\cup\{0\}
$$
defines an $\F_q$-subspace of dimension $k$.
\end{lem}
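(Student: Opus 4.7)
The plan is to prove the statement by exhibiting $\beta\cdot\F_{q^k}$ as the image of the $k$-dimensional $\F_q$-subspace $\F_{q^k}\le\F_{q^n}$ under an $\F_q$-linear isomorphism, namely multiplication by $\beta$. Throughout I will assume $\beta\neq 0$; the case $\beta=0$ gives the zero subspace, which is a degenerate edge case not used in the subsequent spread construction.

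First I would define the map $\mu_\beta:\F_{q^n}\to\F_{q^n}$, $x\mapsto\beta x$, and check that it is $\F_q$-linear. This is immediate from the distributive and associative-commutative laws in the field $\F_{q^n}$: for any $a,b\in\F_q$ and $x,y\in\F_{q^n}$ we have $\mu_\beta(ax+by)=\beta(ax+by)=a(\beta x)+b(\beta y)=a\mu_\beta(x)+b\mu_\beta(y)$, using that $\F_q\subseteq\F_{q^n}$ commutes with $\beta$.

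Next I would observe that $\mu_\beta$ is a bijection whenever $\beta\neq 0$: injectivity follows because $\F_{q^n}$ is a field (so $\beta x=\beta y$ forces $x=y$), and a linear injection between finite $\F_q$-vector spaces of equal dimension is automatically a bijection. Hence $\mu_\beta$ is an $\F_q$-linear automorphism of $\F_{q^n}$, and in particular it sends every $\F_q$-subspace to an $\F_q$-subspace of the same dimension.

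Finally I would apply this to the specific subspace $\F_{q^k}$. Since $k\mid n$, the subfield $\F_{q^k}$ is an $\F_q$-subspace of $\F_{q^n}$ of dimension exactly $k$, and by the description given in the paper its nonzero elements are precisely $\{\alpha^{ic}\mid i=0,\dots,q^k-2\}$. Therefore
\[
\mu_\beta(\F_{q^k})=\beta\cdot\F_{q^k}=\{\beta\alpha^{ic}\mid i=0,\dots,q^k-2\}\cup\{0\}
\]
is an $\F_q$-subspace of $\F_{q^n}$ of dimension $k$, which is the claim. There is no real obstacle here; the whole content of the lemma is the observation that multiplication by a nonzero field element is an $\F_q$-linear isomorphism of $\F_{q^n}$, and that this isomorphism carries the subfield $\F_{q^k}$ to the coset-like set $\beta\cdot\F_{q^k}$ without changing its $\F_q$-dimension.
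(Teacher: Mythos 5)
Your proof is correct and follows essentially the same route as the paper, which likewise observes that $u\mapsto\beta u$ is an $\F_q$-linear isomorphism of $\F_{q^n}$ carrying the $k$-dimensional subspace $\F_{q^k}$ onto $\beta\cdot\F_{q^k}$. Your explicit handling of the degenerate case $\beta=0$ is a minor point the paper leaves implicit, but otherwise the two arguments coincide.
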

\begin{proof}
Since $\F_{q^{k}}$ is a subspace of dimension $k$ and
\begin{align*}
\varphi_\beta:\ \F_{q^n} & \longrightarrow \F_{q^n}\\ 
u &\longmapsto \beta u
\end{align*}
is an $\F_q$-linear isomorphism, it follows that 
$\varphi_\beta(\F_{q^k})=\beta\cdot\F_{q^k}$ is an $\F_q$-subspace of dimension $k$. 
\end{proof}

\begin{thm}
 The set
$$
\mathcal{S}=\left\{ \alpha^i\cdot\F_{q^k}\mid i=0,\ldots,c-1\right\}
$$
is a spread of $\F_{q^{n}}$ and thus defines a spread code in $\Gr$. 
\end{thm}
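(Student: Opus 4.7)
The plan is to identify $\mathcal{S}$ with the set of multiplicative cosets of $\F_{q^k}^*$ inside $\F_{q^n}^*$ and then verify the defining properties of a spread (namely that the listed $k$-subspaces partition the nonzero elements of $\F_{q^n}$).

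First, I would invoke the preceding lemma to conclude that each $\alpha^i\cdot \F_{q^k}$ is an $\F_q$-subspace of dimension $k$, so $\mathcal{S}\subseteq \mathcal{G}_q(k,n)$. Next I would record the multiplicative-group picture: because $\alpha$ generates $\F_{q^n}^*$ and $c=(q^n-1)/(q^k-1)$, the cyclic subgroup $\langle\alpha^c\rangle$ has order $q^k-1$ and coincides with $\F_{q^k}^*$. Therefore $\F_{q^n}^*/\F_{q^k}^*$ has exactly $c$ cosets, and the elements $\alpha^0,\alpha^1,\ldots,\alpha^{c-1}$ are a complete transversal for this quotient.

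The key step is pairwise disjointness. Suppose $v\in (\alpha^i\cdot \F_{q^k})\cap(\alpha^j\cdot \F_{q^k})$ is nonzero, with $0\le i,j<c$. Writing $v=\alpha^i\gamma=\alpha^j\gamma'$ with $\gamma,\gamma'\in \F_{q^k}^*$, we get $\alpha^{i-j}=\gamma'\gamma^{-1}\in \F_{q^k}^*=\langle\alpha^c\rangle$, so $c\mid (i-j)$ in $\Z_{q^n-1}$, forcing $i=j$. Hence any two distinct members of $\mathcal{S}$ meet only in $\{0\}$.

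Finally, a counting argument closes it: each $\alpha^i\cdot \F_{q^k}$ contributes $q^k-1$ nonzero vectors, and there are $c$ such subspaces, giving $c(q^k-1)=q^n-1$ nonzero vectors in total, which is exactly $|\F_{q^n}^*|$. Combined with disjointness, this shows the nonzero parts of the $\alpha^i\cdot \F_{q^k}$ partition $\F_{q^n}^*$, so $\mathcal{S}$ is a spread of $\F_{q^n}$ and therefore defines a spread code in $\mathcal{G}_q(k,n)$. There is no real obstacle here; the only substantive input is the identification $\F_{q^k}^*=\langle\alpha^c\rangle$, which is immediate from $k\mid n$ and $\alpha$ being primitive.
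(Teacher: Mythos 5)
Your proposal is correct and follows essentially the same route as the paper: both establish pairwise trivial intersection by observing that a common nonzero element forces $\alpha^{i-j}\in\F_{q^k}^*=\langle\alpha^c\rangle$ and hence $c\mid(i-j)$, and both close with the counting argument $c(q^k-1)=q^n-1$ (which the paper leaves implicit as ``a simple counting argument'' and you spell out). No gaps.
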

\begin{proof}
  By a simple counting argument it is enough to show that the subspaces
  $\alpha^i\cdot\F_{q^k}$ and $\alpha^j\cdot\F_{q^k}$ have only
  trivial intersection whenever $0\leq i<j\leq c-1$.  For this assume
  that there are field elements $\beta_i,\beta_j\in \F_{q^k}$, such that
$$
v=\alpha^i \beta_i=\alpha^j \beta_j\in \alpha^i\cdot\F_{q^k}\cap
\alpha^j\cdot\F_{q^k}.
$$
If $v\neq 0$ then $\alpha^{i-j}=\beta_j \beta_i^{-1}\in \F_{q^k}.$ But this
means $i-j\equiv 0 \mod c$ and $ \alpha^i\cdot\F_{q^k}=
\alpha^j\cdot\F_{q^k}$, which contradicts the assumption.  It follows that $\mathcal{S}$ is a spread.
\end{proof}

We now translate this result into a matrix setting.
For this let $\phi$  denote the canonical homomorphism as
defined in Theorem \ref{thm:compmult}.

\begin{co}\label{spread}
  Assume $k|n$.  Then there is a subspace 
 $\mathcal{U} \in \Gr$ such that the cyclic orbit code obtained by
  the  group action of a 
 a primitive companion matrix is a code with
  minimum distance $2k$ and cardinality $\frac{q^n-1}{q^k-1}$. Hence
  this  irreducible cyclic orbit code is a spread code.
\end{co}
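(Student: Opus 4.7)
The plan is to transport the spread $\mathcal{S}=\{\alpha^i\cdot\F_{q^k}\mid i=0,\ldots,c-1\}$ from the field-theoretic setting of the previous theorem into the matrix/vector setting via the canonical homomorphism $\phi:\F_q^n\to\F_{q^n}$, and then read off the parameters of the resulting orbit code.

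First, I would define $\mathcal{U}:=\phi^{-1}(\F_{q^k})\in\Gr$. Since $\phi$ is an $\F_q$-linear isomorphism and $\F_{q^k}$ is a $k$-dimensional $\F_q$-subspace of $\F_{q^n}$, this gives an element of $\Gr$. By Lemma \ref{thm:compmult}, for every $v\in\F_q^n$ we have $\phi(vM_p)=\phi(v)\alpha$, so by induction $\phi(vM_p^i)=\phi(v)\alpha^i$ for all $i\ge 0$. Consequently, $\phi(\mathcal{U}M_p^i)=\alpha^i\cdot\F_{q^k}$, meaning the orbit code $\mathcal{C}:=\mathcal{U}\langle M_p\rangle$ corresponds under $\phi$ precisely to the collection $\{\alpha^i\cdot\F_{q^k}\mid i\in\mathbb{Z}\}$.

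Next, I would invoke the preceding theorem to conclude that this collection is exactly the spread $\mathcal{S}$ with $c=\frac{q^n-1}{q^k-1}$ distinct elements, since two subspaces $\alpha^i\cdot\F_{q^k}$ and $\alpha^j\cdot\F_{q^k}$ coincide if and only if $i\equiv j\pmod c$. Translating back via $\phi^{-1}$, the orbit $\mathcal{C}$ consists of exactly $c=\frac{q^n-1}{q^k-1}$ distinct $k$-subspaces, giving the stated cardinality.

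Finally, for the minimum distance: because $\mathcal{S}$ is a spread, any two distinct elements of $\mathcal{C}$ intersect trivially under $\phi$, hence $\dim(\mathcal{V}\cap\mathcal{W})=0$ for any two distinct $\mathcal{V},\mathcal{W}\in\mathcal{C}$. The subspace distance formula then gives $d(\mathcal{V},\mathcal{W})=\dim(\mathcal{V})+\dim(\mathcal{W})-2\dim(\mathcal{V}\cap\mathcal{W})=2k$, so $d(\mathcal{C})=2k$. Since this matches the spread code parameters $[n,2k,\tfrac{q^n-1}{q^k-1},k]$ from Section \ref{sec:background}, $\mathcal{C}$ is a spread code. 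There is no real obstacle here: the entire argument is essentially a dictionary translation between the field picture (already established in the preceding theorem) and the matrix picture, with Lemma \ref{thm:compmult} providing the bridge.
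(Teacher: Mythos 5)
Your proposal is correct and follows essentially the same route as the paper: the paper's proof likewise represents $\F_{q^k}\le\F_{q^n}$ as the row space of a $k\times n$ matrix over $\F_q$ and represents $\alpha$ by its companion matrix in the same basis, so that the orbit code is exactly the image of the spread $\mathcal{S}$ from the preceding theorem. You have merely spelled out the dictionary (via Lemma \ref{thm:compmult} and induction on the exponent) that the paper leaves implicit.
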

\begin{proof}
  In the setting of the previous theorem represent $\F_{q^k}\leq \F_{q^n}$ as the
  row space of a $k\times n$ matrix $U$ over $\F_q$ and, using the
  same basis over $\F_q$, represent the primitive element $\alpha$
  with its respective companion matrix $P$. Then the orbit code $\Cvs = \rs(U) \langle
  P\rangle$ has all the desired properties.  
\end{proof}

\begin{ex}\cite{tr11}
  Over the binary field let $p(x):=x^6+x+1$ be primitive, $\alpha$ a root
  of $p(x)$ and $P$ its companion matrix.
  \begin{enumerate}
  \item For the 3-dimensional spread compute $c=\frac{63}{7}=9$ and
    construct a basis for the starting point of the orbit:
    \begin{align*}
      u_1&=\phi^{-1} (\alpha^0)=\phi^{-1} (1)=(100000)\\
      u_2&=\phi^{-1} (\alpha^c)=\phi^{-1}(\alpha^9)=\phi^{-1}(\alpha^4+\alpha^3)=(000110)\\
      u_3&=\phi^{-1} (\alpha^{2c})=\phi^{-1}(\alpha^{18})=\phi^{-1}(
      \alpha^3+\alpha^2+\alpha+1)\\
	  &= (111100)
    \end{align*}
    The starting point is
    \[ \mathcal{U}=\rs\left[\begin{array}{cccccc} 1&0&0&0&0&0\\
        0&0&0&1&1&0\\ 1&1&1&1&0&0 \end{array}\right] 
\] 
and the orbit of the group
    generated by $P$ on $\mathcal{U}$ is a spread code.
  \item For the 2-dimensional spread compute $c=\frac{63}{3}=21$ and
    construct the starting point
    \begin{align*}
      u_1&=\phi^{-1} (\alpha^0)=\phi^{-1}(1)=(100000)\\
      u_2&=\phi^{-1} (\alpha^c)=\phi^{-1}(\alpha^{21})=\phi^{-1}(\alpha^2+\alpha+1)\\
	  &=  (111000)
    \end{align*}
    The starting point is
    \[\mathcal{U}=\rs\left[\begin{array}{cccccc} 1&0&0&0&0&0\\
        1&1&1&0&0&0 \end{array}\right]
\]
    and the orbit of the group generated by $P$ is a spread code.
  \end{enumerate}
\end{ex}

\subsubsection*{Non-Primitive Generator} 

For this subsection let $p(x)\in \F_{q}[x]$ be irreducible but not primitive, and, as before, $\alpha$ a root of it and $P$ its companion matrix.

\begin{prop}\label{non1}
  Let $G=\langle P \rangle$ the group generated by $P$. If $\mathcal{U} \in \Gr$ such that 
  \[v\neq w \implies vG \neq wG  \quad \forall \: v,w \in \mathcal{U} , \]
  then $\mathcal{U}G$ is an orbit code with minimum distance $2k$ and
  cardinality $\mathrm{ord}(P)$.
\end{prop}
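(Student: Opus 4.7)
The plan is to exploit two facts in tandem: first, since $P$ is the companion matrix of an irreducible polynomial $p(x)$, the $\F_q[P]$-module $\F_q^n$ is isomorphic to $\F_{q^n}$ on which $P$ acts as multiplication by a root $\alpha$ of $p(x)$ (via Lemma~\ref{thm:compmult}); second, the hypothesis forces the group $G=\langle P\rangle$ to act as freely as possible on the chosen subspace $\mathcal{U}$.

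First I would record the key free-action property: for every $0 < i < \ord(P)$ and every $0 \neq v \in \F_q^n$ one has $vP^i \neq v$. Indeed, under the isomorphism $\phi$ the equation $vP^i = v$ becomes $\phi(v)\alpha^i = \phi(v)$ in $\F_{q^n}$; since $\phi(v)\neq 0$ we may cancel to obtain $\alpha^i = 1$, which forces $\ord(P) \mid i$, contradicting $0 < i < \ord(P)$.

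Next I would establish the cardinality. By Theorem~\ref{thm:basic} we have $|\mathcal{U}G| = |G|/|\stab{G}{\mathcal{U}}| = \ord(P)/|\stab{G}{\mathcal{U}}|$, so it suffices to show that the stabilizer of $\mathcal{U}$ in $G$ is trivial. Suppose $\mathcal{U}P^i = \mathcal{U}$ for some $0 < i < \ord(P)$; pick any $0\neq v \in \mathcal{U}$. Then $vP^i$ is again in $\mathcal{U}$ and obviously lies in the $G$-orbit $vG$, so $v$ and $vP^i$ are two elements of $\mathcal{U}$ with $vG = (vP^i)G$. The hypothesis on $\mathcal{U}$ then forces $v = vP^i$, contradicting the free-action property. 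Hence $\stab{G}{\mathcal{U}} = \{I\}$ and $|\mathcal{U}G| = \ord(P)$.

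Finally I would compute the minimum distance. By Theorem~\ref{lem:mindist} it suffices to show $\mathcal{U} \cap \mathcal{U}P^i = \{0\}$ for every $0 < i < \ord(P)$, since this yields $d(\mathcal{U},\mathcal{U}P^i) = 2k - 2\dim(\mathcal{U}\cap\mathcal{U}P^i) = 2k$, which is the maximum possible distance in $\Gr$. Suppose toward a contradiction that $0\neq v \in \mathcal{U}\cap\mathcal{U}P^i$; then $v \in \mathcal{U}$ and $v = wP^i$ for some $w \in \mathcal{U}$. Both $v$ and $w$ lie in $\mathcal{U}$ and in the same $G$-orbit, so the hypothesis gives $v = w$, hence $v = vP^i$, again contradicting the free-action property.

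The only genuine step is the free-action observation; once it is in hand, the two conclusions are essentially dual applications of the hypothesis on $\mathcal{U}$, one to the spaces $\mathcal{U}P^i$ and one to their vectors. I do not anticipate a real obstacle, since the irreducibility of $p(x)$ essentially gives a field structure on $\F_q^n$ that makes everything transparent.
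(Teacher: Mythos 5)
Your proof is correct and takes essentially the same route as the paper's much terser argument: the paper likewise observes that each nonzero vector has a free $G$-orbit of length $\ord(P)$, which together with the hypothesis that distinct elements of $\mathcal{U}$ lie in distinct orbits yields the cardinality, and that no two codewords intersect nontrivially, which yields minimum distance $2k$. Your write-up simply supplies the details (the triviality of $\Stab_G(\mathcal{U})$ and the explicit contradiction arguments) that the paper leaves implicit.
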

\begin{proof}
The cardinality follows from the fact that each element of $\mathcal{U}$ has its own orbit of
  cardinality $\mathrm{ord}(P)$. Moreover, no code words intersect
  non-trivially, hence the minimum distance is $2k$.   
\end{proof}

Note that, if the order of $P$ is equal to $\frac{q^{n}-1}{q^{k}-1}$,
these codes are again spread codes.

\begin{ex}
  Over the binary field let $p(x)=x^4+x^3+x^2+x+1$, $\alpha$ a root of
  $p(x)$ and $P$ its companion matrix. Then $\F_{2^4}\setminus \{0\}$
  is partitioned into
  \[\{\alpha^i | i=0, \dots , 4\} \cup \{\alpha^i(\alpha+1) | i=0,
  \dots , 4\} \]
  \[\cup \{\alpha^i(\alpha^2+1) | i=0, \dots , 4\} .\] Choose
  \begin{align*}
    u_1=&\phi^{-1}(1)=\phi^{-1}(\alpha^0)=(1000)\\
    u_2=&\phi^{-1}(\alpha^3+\alpha^2)=\phi^{-1}(\alpha^2(\alpha+1))=(0011)\\
    u_3=&u_1+u_2=\phi^{-1}(\alpha^3+\alpha^2+1)
    =\phi^{-1}(\alpha^4(\alpha^2+1))\\=&(1011)
  \end{align*}
  such that each $u_i$ is in a different orbit of $\langle P\rangle$
  and $\mathcal{U}=\{0,u_1,u_2,u_3\}$ is a vector space.
  Then the orbit of $\langle P\rangle$ on $\mathcal{U}$ has minimum
  distance $4$ and cardinality $5$, hence it is a spread code.
\end{ex}

The following theorem generalizes this result to any possible starting point in $\Gr$ and can be proven analogously to Theorem \ref{thmprim}.

\begin{thm} \cite{tr11}
  Let $P, G, \mathcal{U}$ be
  as before and $O_1,...,O_{\ell}$ the distinct orbits of $G$ on $\F_{q}^{n}\setminus \{0\}$. Assume that $m_i$ elements of $\mathcal{U}$ are in the
  same orbit $O_i$ ($i=1,\dots, \ell$). Apply the theory of Section
  \ref{prim} to each orbit $O_i$ and find the corresponding $d_i$ from
  Theorem \ref{thmprim}.  Then the following cases can occur:
  \begin{enumerate}
  \item No intersections of two different orbits coincide. Define $d:=
    \max_{i}d_i$.  Then the orbit of $G$ on $\mathcal{U}$
    is a code of cardinality $\mathrm{ord}(P)$ and minimum distance
    $2k-2d$.
  \item Some intersections coincide among some orbits. Then the corresponding $d_i$'s
    add up and the maximum of these is the maximal intersection number $d$. 
  \end{enumerate}
Mathematically formulated: Assume the orbits are of the type
\[O_{i}= \{\tilde{p}_i(\alpha) \alpha^{j} \mid j=1,\dots, \ord(P)\} \quad \forall \: i=1,\dots,\ell\]
for some fixed $\tilde{p}_i(\alpha)\in \F_{q}[\alpha]$. Then for any $u_{j} \in O_i$ there exists $b_{(i,j)} \in \Z_{\ord(P)}$ such that
\[\phi(u_{j}) = \tilde{p}_i(\alpha) \alpha^{b_{(i,j)}} .\]
For $i=1,\dots,\ell$ define 
\[a_{(i,\mu,\lambda)} := b_{(i,\mu)} - b_{(i,\lambda)}\]
and the difference multisets
\[D_i:=\{\{a_{(i,\mu,\lambda)}\mid \mu,\lambda \in \{1,\dots, \mathrm{ord}(P)-1\}\}\},\]
\[D:=\bigcup_{i=1}^{\ell} D_i .\]
Let $d := \log_q(\max \{m_D(a) \mid a\in D\}+1)$.
If $d<k$, then the orbit of $G$ on $\mathcal{U}$ is
  a code of cardinality $\mathrm{ord}(P)$ and minimum distance $2k-2d$.
\end{thm}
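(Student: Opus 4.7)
The plan is to mimic the proof of Theorem \ref{thmprim}, now bookkeeping contributions from all $\ell$ distinct orbits $O_1,\dots,O_\ell$ simultaneously. I would analyze the intersections $\Uvs \cap \Uvs P^h$ for $h\in\{1,\dots,\ord(P)-1\}$ through the field representation $\phi$, and translate dimension counts into multiplicities in the combined difference multiset $D$.

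First, observe that the orbits of $G=\langle P\rangle$ on $\F_q^n\setminus\{0\}$ correspond (via $\phi$) to cosets of $\langle\alpha\rangle$ in $\F_{q^n}^*$: the orbit $O_i$ is precisely $\tilde p_i(\alpha)\langle\alpha\rangle$. Hence distinct orbits are disjoint, and applying $P$ (equivalently, multiplying by $\alpha$) preserves each orbit. Consequently, if $u,u'\in\Uvs\setminus\{0\}$ satisfy $u'=uP^h$, then $u$ and $u'$ lie in the same orbit $O_i$; writing $\phi(u)=\tilde p_i(\alpha)\alpha^{b_{(i,\mu)}}$ and $\phi(u')=\tilde p_i(\alpha)\alpha^{b_{(i,\lambda)}}$, the condition $u'=uP^h$ becomes
\[
b_{(i,\lambda)}-b_{(i,\mu)}\equiv h \pmod{\ord(P)}.
\]

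Next, for each $h$ I would count the nonzero elements of $\Uvs\cap\Uvs P^h$: these are precisely the $u'\in\Uvs\setminus\{0\}$ with $u'P^{-h}\in\Uvs$, and by the previous step they are parametrized by pairs $(\mu,\lambda)$ (across all orbit indices $i$) with $b_{(i,\lambda)}-b_{(i,\mu)}\equiv h$. Their number therefore equals the multiplicity $m_D(h)$ of $h$ in $D=\bigcup_i D_i$. Since $\Uvs\cap \Uvs P^h$ is an $\F_q$-subspace, setting $d_h:=\dim(\Uvs\cap\Uvs P^h)$ yields
\[
q^{d_h}-1 = m_D(h), \qquad \text{i.e.,}\qquad d_h=\log_q(m_D(h)+1).
\]
Maximising over $h$ gives $\max_h d_h=\log_q(\max_{a\in D}m_D(a)+1)=d$. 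Since $d<k$ by hypothesis, $\Uvs P^h\ne\Uvs$ for every $h\not\equiv 0\pmod{\ord(P)}$, so $\stab{G}{\Uvs}$ is trivial, $|\Uvs G|=\ord(P)$ by Proposition \ref{prop5}, and by Theorem \ref{thm:basic} the minimum distance equals
\[
\min_{1\le h<\ord(P)} 2(k-d_h) = 2k-2d.
\]

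The main obstacle will be the case where the same shift $h$ is simultaneously realised by pairs in several distinct orbits (the second bullet of the statement). The resolution is precisely to merge the orbit-wise multisets into a single $D=\bigcup_i D_i$: because the orbits $O_i$ are pairwise disjoint as subsets of $\F_q^n$, pairs from different orbits contribute \emph{distinct} nonzero elements to $\Uvs\cap \Uvs P^h$, so their individual multiplicities add inside $m_D(h)$. This additivity is exactly what makes the single formula $d_h=\log_q(m_D(h)+1)$ valid in both enumerated cases of the theorem; verifying this additivity carefully (i.e.\ that the correspondence between pairs and nonzero intersection elements is a bijection respecting the orbit decomposition) is the key technical point.
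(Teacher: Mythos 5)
Your proof is correct and follows exactly the route the paper intends: the paper gives no explicit proof here, stating only that the result ``can be proven analogously to Theorem \ref{thmprim},'' and your argument is a faithful, careful elaboration of that analogy (translating $\Uvs\cap\Uvs P^h$ into pair counts per orbit via $\phi$, and noting that disjointness of the orbits makes the multiplicities add in $D=\bigcup_i D_i$). The identification of the additivity across orbits as the key point of the second case is precisely what the paper's informal phrase ``the corresponding $d_i$'s add up'' is meant to capture.
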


If $d=k$, Proposition \ref{d=k} also holds in this (non-primitive) case.
The algorithm works analogously to Algorithm \ref{alg1}.


Tables \ref{table4}--\ref{table6} list binary cyclic orbit codes generated by irreducible polynomials of degree $10$ with different orders, found by random search. By ``X'' we denote a set of parameters that cannot be fulfilled. As before, a ``$-$'' means that we have not found a code for these parameters and not neccessarily that there do not exist any.

\begin{table}[h]
\begin{center}
\begin{tabular}{*{6}{|c}|c|c|}
\hline
\backslashbox{$d(\Cvs)$}{$k$}&2&3&4&5 \\\hline
2&   33& 33& 33&33 \\\hline
4&   33& 33& 33&33 \\\hline
6&   X& 33& 33& 33\\\hline
8&   X& X& -&33\\\hline
10&  X& X& X&-\\\hline
\end{tabular}
\end{center}
\caption{Cardinality of codes generated by a polynomial of order $33$ for $n=10$.}
\label{table4}
\end{table}

\begin{table}[h]
\begin{center}
\begin{tabular}{*{6}{|c}|c|c|}
\hline
\backslashbox{$d(\Cvs)$}{$k$}&2&3&4&5 \\\hline
2&   93& 93& 93& 93\\\hline
4&   93& 93& 93 & 93 \\\hline
6&   X& 93& 93& 93 \\\hline
8&   X& X& -& -  \\\hline
10&   X& X& X& -\\\hline
\end{tabular}
\end{center}
\caption{Cardinality of codes generated by a polynomial of order $93$ for $n=10$.}
\end{table}

\begin{table}[h]
\begin{center}
\begin{tabular}{*{6}{|c}|c|c|}
\hline
\backslashbox{$d(\Cvs)$}{$k$}&2&3&4&5 \\\hline
2&   341& 341& 341& 341 \\\hline
4&   341& 341& 341& 341 \\\hline
6&   X& -& 341& 341 \\\hline
8&   X& X& -& -  \\\hline
10&   X& X& X& -\\\hline
\end{tabular}
\end{center}
\caption{Cardinality of codes generated by a polynomial of order $341$ for $n=10$.}
\label{table6}
\end{table}


\subsection{Completely reducible cyclic orbit codes}

We call an orbit code completely reducible if its generating group is completely reducible. In general, a group is \emph{completely reducible} or \emph{semisimple} if it is the direct product of irreducible groups. For the $\GL_n$-action on $\F_q^n$ a subgroup $H\leq \GL_n$ is compeletely reducible if $\F_q^n$ is the direct sum of subspaces $V_1,\dots, V_i$ which are $H$-invariant but do not have any $H$-invariant proper subspaces.

In the cyclic case these groups are exactly the ones where the blocks of the rational canonical form of the generator matrix are companion matrices of irreducible polynomials, i.e. all the elementary divisors have exponent $1$.
Because of this property one can use the theory of irreducible cyclic orbit codes block-wise to compute the minimum distances of the block component codes and hence the minimum distance of the whole code.

For simplicity we will explain how the theory from before generalizes in the case of generator matrices whose RCF has two blocks that are companion matrices of primitive polynomials. The generalization to an arbitrary number of blocks and general irreducible polynomials is then straight-forward.

Assume our generator matrix $P$ is of the type
\[P=\left( \begin{array}{cc} P_1 & 0 \\ 0 & P_2
           \end{array}
\right)\]
where $P_1, P_2$ are companion matrices of the primitive polynomials $p_1(x), p_2(x)\in \F_{q}[x]$ with $\deg(p_1)=n_1, \deg(p_2)=n_2$, respectively. Furthermore let
\[U=\left[ \begin{array}{cc}U_1 &U_2    \end{array}
\right]\]
be the matrix representation of the starting point $\Uvs \in \Gr$ such that $U_1 \in Mat_{k\times n_1}, U_2\in Mat_{k\times n_2}$.
Then
\[\Uvs P^i = \rs \left[\begin{array}{cc} U_1 P_1^i &   U_2 P_2^i \end{array}\right] .\]
By $\phi^{(n_{1})}: \F_{q}^{n_{1}}\rightarrow \F_{q^{n_{1}}}$ and $ \phi^{(n_{2})} : \F_{q}^{n_{2}}\rightarrow \F_{q^{n_{2}}}$ we denote the standard vector space isomorphisms.

The algorithm for computing the minimum distance of the orbit code is analogous to before, but it is now set in $\F_{q^{n_{1}}}\times \F_{q^{n_{2}}}$.

\begin{thm}
Let $\alpha_1, \alpha_2$ be primitive elements of $\F_{q^{n_1}}, \F_{q^{n_2}}$ respectively. 
\begin{align*}
\phi^{(n_{1},n_{2})}: \F_q^n &\longrightarrow \F_{q^{n_1}} \times \F_{q^{n_2}}\\
(u_1,...,u_n) &\longmapsto (\phi^{(n_1)}(u_1,...,u_{n_1}), \phi^{(n_2)}(u_{n_1+1},...,u_n))
\end{align*}
is a vector space isomorphism. Moreover, $u = v P^{i}$ for some $u,v \in \F_{q}^{n}$ if and only if 
\begin{enumerate}
\item $\phi^{(n_{1})} (u_1,...,u_{n_1}) = \phi^{(n_{1})} (v_1,...,v_{n_1}) \alpha_{1}^{i} $ and
\item $\phi^{(n_2)}(u_{n_1+1},...,u_n) = \phi^{(n_2)}(v_{n_1+1},...,v_n) \alpha_{2}^{i} $.
\end{enumerate}
\end{thm}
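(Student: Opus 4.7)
The plan is to establish the two assertions separately by reducing everything to the block structure of $P=\diag(P_1,P_2)$ and then invoking Lemma \ref{thm:compmult} on each block. First, I would verify that $\phi^{(n_1,n_2)}$ is an $\F_q$-vector space isomorphism. It factors as the obvious concatenation isomorphism $\F_q^n \cong \F_q^{n_1}\oplus \F_q^{n_2}$ followed by the direct product of the two coordinate-to-field isomorphisms $\phi^{(n_1)}$ and $\phi^{(n_2)}$. Each $\phi^{(n_j)}$ is $\F_q$-linear and bijective because it sends a coordinate vector to its expansion relative to the $\F_q$-basis $\{1,\alpha_j,\ldots,\alpha_j^{n_j-1}\}$ of $\F_{q^{n_j}}$. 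The product of two $\F_q$-linear bijections is an $\F_q$-linear bijection, so $\phi^{(n_1,n_2)}$ is an isomorphism.

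For the ``moreover'' statement, the key observation is that because $P$ is block diagonal, multiplication by $P^i$ acts on the two coordinate blocks independently. Writing $v_{[1]}:=(v_1,\ldots,v_{n_1})$ and $v_{[2]}:=(v_{n_1+1},\ldots,v_n)$, we have $P^i=\diag(P_1^i,P_2^i)$, and hence $vP^i=(v_{[1]}P_1^i,\,v_{[2]}P_2^i)$. Consequently $u=vP^i$ is equivalent to the simultaneous pair of identities $u_{[1]}=v_{[1]}P_1^i$ and $u_{[2]}=v_{[2]}P_2^i$, with the same exponent $i$ in both.

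Next, I would translate each block identity into the corresponding field using Lemma \ref{thm:compmult}. That lemma, applied to the irreducible polynomial $p_j$ with root $\alpha_j$ and companion matrix $P_j$, gives $\phi^{(n_j)}(wP_j)=\phi^{(n_j)}(w)\alpha_j$ for every $w\in \F_q^{n_j}$, and a straightforward induction on $i$ yields
\[
\phi^{(n_j)}(wP_j^i)=\phi^{(n_j)}(w)\alpha_j^i \qquad \text{for all } i\in \N.
\]
Since each $\phi^{(n_j)}$ is a bijection, $u_{[j]}=v_{[j]}P_j^i$ is equivalent to $\phi^{(n_j)}(u_{[j]})=\phi^{(n_j)}(v_{[j]})\alpha_j^i$, which is precisely condition~$(j)$ of the theorem.

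There is no genuine obstacle here; the whole argument is essentially bookkeeping once the block decomposition of $P^i$ is in place. The only point requiring a moment of care is that the exponent $i$ must be common to both conditions, but this is automatic from the fact that we are applying a single power $P^i$ of the block-diagonal matrix $P$ rather than independent powers of $P_1$ and $P_2$.
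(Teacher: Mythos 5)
Your proposal is correct and follows essentially the same route as the paper: decompose $P^i=\diag(P_1^i,P_2^i)$, apply the componentwise isomorphisms, and invoke Lemma~\ref{thm:compmult} blockwise. The paper's own proof is just a terser version of the same argument; your explicit induction on $i$ and the remark about the common exponent are details the paper leaves implicit.
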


\begin{proof}
$\phi^{(n_{1},n_{2})}$ is a vector space isomorphism because $\phi^{(n_{1})}$ and $\phi^{(n_{2})}$ are. The second statement follows since
\begin{align*}
& u = v P^{i}\\ \iff & \phi^{(n_{1},n_{2})}(u) = \phi^{(n_{1}, n_{2})}(vP^{i}) \\
\iff & \phi^{(n_{1})} ((u_1,...,u_{n_1})) = \phi^{(n_{1})} ((v_1,...,v_{n_1})P_{1}^{i})  \;\wedge\\ 
& \phi^{(n_2)}((u_{n_1+1},...,u_n)) = \phi^{(n_2)}((v_{n_1+1},...,v_n) P_{2}^{i})  .
\end{align*}
\end{proof}

Thus, if $\phi^{(n_1)}(u_i)\neq 0$ and $\phi^{(n_2)}(u_i)\neq 0$ for all non-zero elements $u_i$ of a given vector space $\Uvs \in \Gr$, in the algorithm we have to create the difference set of all $2$-tuples corresponding to the powers of $\alpha_{1}$ and $\alpha_{2}$ and proceed as usual.

\begin{prop}\label{prop30}
Assume $\mathcal{U}=\{0,u_1,\dots,u_{q^k-1}\} \in \Gr$, and for all $u_i$ there exist $b_i, b'_i$ such that
  \[\phi^{(n_1,n_2)}(u_i)=(\alpha_1^{b_i}, \alpha_2^{b'_i}) \hspace{0.7cm} \forall i=1,\dots,q^k-1 .\]
  Let $d$ be minimal such that any element of the multiset
  \begin{eqnarray*}
D:=\{\{(b_{m}- b_{\ell} \mod q^{n_1}-1 , b'_m-  b'_\ell \mod q^{n_2}-1)  \mid \\ \ell,m \in \mathbb{Z}_{q^k-1}, \ell\neq
  m\}\}   
  \end{eqnarray*}
has multiplicity less than or equal to $q^d-1$. If $d<k$ then the orbit of the group
  generated by $P$ on $\mathcal{U}$ is
  an orbit code of cardinality $ \ord (P) = \mathrm{lcm}(q^{n_1}-1, q^{n_2}-1)$ and minimum distance $2k-2d$.
\end{prop}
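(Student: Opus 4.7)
The plan is to mimic the proof of Theorem~\ref{thmprim} in the two-block setting, using the vector space isomorphism $\phi^{(n_1,n_2)}$ to translate the $P$-action into componentwise multiplication by $(\alpha_1,\alpha_2)$ in $\F_{q^{n_1}}\times \F_{q^{n_2}}$. Under this identification, applying $P^h$ sends
\[
\phi^{(n_1,n_2)}(u_i) = (\alpha_1^{b_i}, \alpha_2^{b'_i}) \longmapsto (\alpha_1^{b_i+h}, \alpha_2^{b'_i+h}),
\]
so the group $\langle P\rangle$ has order $\mathrm{lcm}(q^{n_1}-1,q^{n_2}-1)$, which is precisely $\mathrm{ord}(P)$.

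Next I would compute the intersection $\Uvs\cap \Uvs P^h$ for each $1\le h\le \mathrm{ord}(P)-1$. A nonzero element $u_i\in \Uvs$ also belongs to $\Uvs P^h$ iff there is $u_j\in \Uvs$ with $u_i = u_j P^h$, which via the isomorphism is equivalent to the simultaneous congruences
\[
b_i-b_j\equiv h \pmod{q^{n_1}-1}\quad\text{and}\quad b'_i-b'_j\equiv h\pmod{q^{n_2}-1}.
\]
Hence the number of nonzero vectors in $\Uvs\cap \Uvs P^h$ equals the multiplicity in $D$ of the pair $(h\bmod q^{n_1}-1,\,h\bmod q^{n_2}-1)$. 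Since $\Uvs\cap \Uvs P^h$ is an $\F_q$-subspace, its cardinality is a power of $q$, so the hypothesis that every element of $D$ has multiplicity at most $q^d-1$ translates directly into $\dim(\Uvs\cap \Uvs P^h)\le d$, with equality attained for some $h$ by the minimality of $d$.

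From there, the two conclusions follow quickly. For the minimum distance, Theorem~\ref{thm:basic} together with $d(\Uvs,\Uvs P^h)=2k-2\dim(\Uvs\cap \Uvs P^h)$ gives $d(\Cvs)=2k-2d$. For the cardinality, the hypothesis $d<k$ precludes any intersection of dimension $k$, i.e.\ no $h\in\{1,\dots,\mathrm{ord}(P)-1\}$ lies in $\mathrm{Stab}_{\langle P\rangle}(\Uvs)$, so all $\mathrm{ord}(P)$ orbit elements are distinct and $|\Cvs|=\mathrm{ord}(P)=\mathrm{lcm}(q^{n_1}-1,q^{n_2}-1)$.

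The one point that requires care, and which I see as the main bookkeeping obstacle, is the implicit assumption that every nonzero $u_i\in \Uvs$ has both components nonzero, so that the exponents $b_i,b'_i$ are well-defined; otherwise one would have to treat separately those $u_i$ lying in the $P_1$- or $P_2$-invariant block, whose orbits are shorter and whose contributions to $\Uvs\cap\Uvs P^h$ must be tracked by a refined multiset. In the generic case stated, however, the argument above is a clean adaptation of the one-block proof.
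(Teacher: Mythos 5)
Your argument is, in substance, the argument the paper intends: the paper actually states Proposition~\ref{prop30} \emph{without} proof, relying on the preceding theorem (that $\phi^{(n_1,n_2)}$ intertwines $P$ with componentwise multiplication by $(\alpha_1,\alpha_2)$) and on the instruction to ``proceed as usual'' as in Theorem~\ref{thmprim}. Your translation of $\Uvs\cap\Uvs P^h$ into the simultaneous congruences, the identification of the multiplicity of $(h\bmod q^{n_1}-1,\,h\bmod q^{n_2}-1)$ in $D$ with the number of nonzero vectors of the intersection, and the cardinality argument via triviality of the stabilizer are exactly the intended adaptation of the one-block proof; your closing caveat about vanishing components correctly anticipates the theorem that follows this proposition in the paper.

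The one step that needs more care is ``with equality attained for some $h$ by the minimality of $d$.'' Minimality of $d$ only guarantees that some element of $D$ has multiplicity exceeding $q^{d-1}-1$; for this to produce an $h$ with $\dim(\Uvs\cap\Uvs P^h)=d$ you need that extremal pair $(a,a')$ to be \emph{realizable}, i.e.\ the system $h\equiv a\pmod{q^{n_1}-1}$, $h\equiv a'\pmod{q^{n_2}-1}$ must have a solution. When $\gcd(q^{n_1}-1,q^{n_2}-1)>1$ this is a genuine constraint ($a\equiv a'$ modulo the gcd), and a pair in $D$ violating it contributes to no intersection $\Uvs\cap\Uvs P^h$ whatsoever, so the minimum distance could in principle be strictly larger than $2k-2d$; only in the coprime case does the Chinese Remainder Theorem make every pair realizable. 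This imprecision is inherited from the statement of the proposition rather than introduced by you, but a complete proof should either restrict the multiset $D$ to realizable difference pairs or note the coprimality assumption under which your equality step goes through. The upper-bound direction $\dim(\Uvs\cap\Uvs P^h)\le d$, and hence the cardinality claim, are unaffected.
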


Since it is possible that $u=(u_1,\dots,u_n)\in \F_q^n$ is non-zero but all $u_1=\dots=u_{n_1}=0$ (or the second part of the coefficients), we have to take the zero element into account when counting intersection elements:

\begin{thm}
Assume $\mathcal{U}=\{0,u_1,\dots,u_{q^k-1}\} \in \Gr$, and for all $u_i$ either
\begin{enumerate}
 \item   $\phi^{(n_1,n_2)}(u_i)=(\alpha_1^{b_i}, \alpha_2^{b'_i})  $ ,
 \item   $\phi^{(n_1,n_2)}(u_i)=(\alpha_1^{b_i}, 0)$ or
 \item  $\phi^{(n_1,n_2)}(u_i)=(0, \alpha_2^{b'_i})$.
\end{enumerate}
Denote by $S_1, S_2, S_3$ the sets of all elements of the first, second and third type, respectively, and construct the difference sets
  \begin{eqnarray*}
D_1:=\{\{(b_{m}- b_{\ell} \mod q^{n_1}-1 , b'_m-  b'_\ell \mod q^{n_2}-1)  \mid \\ u_{\ell},u_m \in S_1, \ell\neq m\}\} , \\
D_2:=\{\{(b_{m}- b_{\ell} \mod q^{n_1}-1, j)  \mid  u_{\ell},u_m \in S_2, \ell\neq m, \\j=1,\dots,q^{n_2}-1\}\} ,  \\
D_3:=\{\{(j, b'_m-  b'_\ell \mod q^{n_2}-1)  \mid  u_{\ell},u_m \in S_3, \ell\neq m, \\j=1,\dots,q^{n_1}-1\}\}   
  \end{eqnarray*}
and
\[D:=D_1 \cup D_2 \cup D_3 .
\]
  Let $d$ be minimal such that any element of $D$ has multiplicity less than or equal to $q^d-1$. If $d<k$ then the orbit of the group
  generated by $P$ on $\mathcal{U}$ is
  an orbit code of cardinality $ \ord (P) = \mathrm{lcm}(q^{n_1}-1, q^{n_2}-1)$ and minimum distance $2k-2d$.
\end{thm}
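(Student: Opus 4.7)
The plan is to adapt the proof strategy of Theorem \ref{thmprim} and Proposition \ref{prop30} to the setting where elements of $\Uvs$ may have a zero component in one of the two blocks. First, by Theorem \ref{lem:mindist} together with Theorem \ref{thm:basic}(2), computing $d(\Cvs)$ reduces to bounding $\max_{j} \dim(\Uvs \cap \Uvs P^j)$ for $j\in\{1,\dots,\mathrm{ord}(P)-1\}$, where $\mathrm{ord}(P) = \mathrm{lcm}(q^{n_1}-1, q^{n_2}-1)$ since $P_1$ and $P_2$ are primitive companion matrices. Via the isomorphism $\phi^{(n_1,n_2)}$ established before the proposition, the condition $u = v P^j$ decomposes into the pair of multiplicative equations $\phi^{(n_1)}(u^{(1)}) = \phi^{(n_1)}(v^{(1)}) \alpha_1^j$ and $\phi^{(n_2)}(u^{(2)}) = \phi^{(n_2)}(v^{(2)}) \alpha_2^j$.

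The key structural observation is that block-diagonal multiplication preserves the support pattern of a vector, so $u = vP^j$ forces $u$ and $v$ to lie in the same class $S_i$. I would then case-split: for $(u_\ell, u_m) \in S_1 \times S_1$, the equation $u_m = u_\ell P^j$ enforces the two simultaneous congruences $j \equiv b_m - b_\ell \pmod{q^{n_1}-1}$ and $j \equiv b'_m - b'_\ell \pmod{q^{n_2}-1}$, tallied by $D_1$. For $S_2$-pairs only the first congruence is binding (the second block stays zero regardless of $j$), so the pair contributes to every $j^*$ with $j^* \equiv b_m - b_\ell \pmod{q^{n_1}-1}$ and any residue modulo $q^{n_2}-1$, which is exactly what the replication in $D_2$ over $j\in\{1,\dots,q^{n_2}-1\}$ records; the $S_3$-case is symmetric via $D_3$.

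Summing the three contributions, the number of non-zero elements in $\Uvs \cap \Uvs P^{j^*}$ is at most the multiplicity of the tuple $(j^* \bmod q^{n_1}-1,\, j^* \bmod q^{n_2}-1)$ in $D$, hence $q^{\dim(\Uvs \cap \Uvs P^{j^*})} - 1 \leq q^d - 1$ and $\dim(\Uvs \cap \Uvs P^{j^*}) \leq d$. This gives $d(\Cvs) \geq 2k - 2d$, and the minimality of $d$ paired with CRT-realizability of an optimal tuple forces equality. For the cardinality, since $d<k$ makes the minimum distance strictly positive, no orbit element coincides with $\Uvs$ away from the identity, so Proposition \ref{prop5} together with $\stab{\langle P\rangle}{\Uvs}=\{I\}$ yields $|\Cvs| = \mathrm{ord}(P)$.

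The delicate step will be the bookkeeping in the $S_2$ and $S_3$ cases: a single $S_2$-pair with first-block difference $a_1$ actually contributes to only $(q^{n_2}-1)/\gcd(q^{n_1}-1, q^{n_2}-1)$ genuine values of $j^*$, whereas $D_2$ replicates the tuple across all $q^{n_2}-1$ second-block residues. Demonstrating that the resulting upper bound on multiplicities is attained by some CRT-consistent tuple—so that the inequality $d(\Cvs) \geq 2k - 2d$ becomes equality—requires verifying that no maximum-multiplicity tuple of $D$ is merely an artifact of the over-listing, and that the forced nonvanishing of $a_1$ (respectively $a_2$) for $\ell\neq m$ in $S_2$ (resp.\ $S_3$) precludes any hidden intersection contribution at the boundary strata. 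Once this is pinned down, the rest of the argument parallels the single-block primitive proof of Theorem \ref{thmprim}.
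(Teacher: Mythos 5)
Your proof follows essentially the same route as the paper's: reduce to counting the elements of $\Uvs\cap\Uvs P^j$ via Theorem~\ref{thm:basic}, observe that the block-diagonal action preserves the support pattern so that an intersection $u_m=u_\ell P^j$ can only occur with $u_\ell,u_m$ in the same class $S_i$, and tally the three cases by $D_1$, $D_2$, $D_3$ (with the free coordinate in $D_2$, $D_3$ accounting for the block on which the pair imposes no congruence). The ``delicate step'' you flag --- that the over-listing in $D_2$, $D_3$ and possible CRT-inconsistent tuples in $D_1$ could in principle inflate the maximal multiplicity beyond what any genuine exponent $j^*$ realizes --- is a legitimate subtlety, but the paper's own, considerably terser, proof does not address it either, so your argument is if anything more careful than the published one.
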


\begin{proof}
As in the irreducible case we want to count the number of intersecting elements and use the fact that $\langle P_1\rangle , \langle P_2 \rangle$ act transitively on $\F_q^{n_1}\setminus \{0\}$ and $\F_q^{n_2}\setminus \{0\}$ respectively. 
Let $\pi_1: \F_q^n \rightarrow \F_q^{n_1}, (u_1,\dots,u_n)\mapsto (u_1,\dots,u_{n_1})$ and $\pi_2: \F_q^n \rightarrow \F_q^{n_2}, (u_1,\dots,u_n)\mapsto (u_{n_1+1},\dots,u_{n_2})$. 
\begin{enumerate}
 \item 
Assume $u\in S_3$, i.e. $\pi_1(u)=0$. Then
\[\pi_1(uP^i)=\pi_1(u)P_1^i =0 \quad \forall\: i=1,\dots,\ord(P) .\]
Thus, $uP^j \neq v$ for all $v \in S_1\cup S_{2}$ and $j=1,\dots,\ord(P)$, i.e. intersection with $u$ can only happen inside $S_3$. 

On the other hand, if $\pi_2(u)=\pi_2(u)P_2^j$ for some $j$, then also $u=uP^j$, which is why the second entry of the tuple can run over all possible values.
 \item
For $u\in S_2$ the analogue holds.
 \item
For $u\in S_1$ we can use Proposition \ref{prop30}.
\end{enumerate}
Since we have to check if some of the intersections inside the sets $S_1,S_2,S_3$ occur at the same element of the orbit we have to count the intersection inside the union of the difference sets.
\end{proof}

Like in the irreducible case, if $d=k$, one gets orbit elements with full
intersection. Let $D':=D\setminus \{\{a=(a_1, a_2) \in D \mid m_D(a)=q^{k}-1\}\}$ and $d':=\log_{q}(\max\{m_{D'}(a) \mid a\in D'\}+1)$. Then the minimum distance of the code is $2k - 2d'$.
Moreover, let  $m:=\min\{\mathrm{lcm}(a_1,a_2) \mid a=(a_1,a_2)\in D, m_D(a)=q^k-1\}$. Then the cardinality of the code is $m-1$.


\subsection{Non- completely reducible cyclic orbit codes}

If a matrix has elementary divisors of exponent larger than one, the group generated by it is not completely reducible. In this subsection we will explain what happens in this case when you try to apply the theory from the previous subsections. For simplicity we will describe the case of polynomials that are squares of an irreducible one. This can easily be generalized to higher exponents. Along the lines of the previous subsection one can then translate the theory to more than one irreducible factor block-wise.

Let $p(x) \in \F_{q}[x]$ be irreducible of degree $n/2$ and $f(x)=p^{2}(x)$. Denote by $P_{f}$ the companion matrix of $f(x)$. Since a root of $f(x)$ is also a root of $p(x)$ we can not use it to represent $\F_{q}[x]_{<n}\cong \F_{q}^{n}$. Therefore we will now use polynomials in the variable $x$ and the standard vector space isomorphism $\phi : \F_{q}^{n} \rightarrow \F_{q}[x]_{<n} , (u_{1},\dots, u_{n}) \mapsto \sum_{i=1}^{n} u_{i}x^{i-1}$. Then the following still holds:
\[\phi (uP_{f}) = \phi(u) x  \mod f(x).\]

Hence, one can still translate the question of finding the intersection number into the polynomial setting by finding the respective $x^{i}$ that maps one element to another element of the initial point $\Uvs \in \Gr$. The difference to the cases before is that we do not have a field structure anymore, thus in general we cannot divide one element by the other modulo $f(x)$ to find the corresponding $x^{i}$. More precisely, we can divide by the units of $\F_{q}[x]/{(f(x))}$. In the other cases we can find the $x^{i}$ by brute force.

\begin{thm}
Assume $\mathcal{U}=\{0,u_1,\dots,u_{q^k-1}\} \in \Gr$, 
  \[\phi(u_i)=\phi(u_{j}) x^{b_{ij}} \]
  for all $\phi(u_{i}), \phi(u_{j})$ that lie on the same orbit of $\langle x \rangle$
  and $d$ be minimal such that any element of the multiset
  \[D:=\{\{b_{ij} \mod q^n-1 \mid i,j \in \mathbb{Z}_{q^k-1}, i\neq
  j\}\}\] has multiplicity less than or equal to $q^d-1$. If $d<k$ then the orbit of the group
  generated by $P_{f}$ on $\mathcal{U}$ is
  an orbit code of cardinality $q^{\frac{n}{2}}-1$ and minimum distance $2k-2d$.
\end{thm}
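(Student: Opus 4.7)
The plan is to mimic the proof of Theorem~\ref{thmprim} almost verbatim, with the only essential change being the replacement of the finite field $\F_q[\alpha]=\F_{q^n}$ by the ring $R:=\F_q[x]/(f(x))$. Using the polynomial isomorphism $\phi$ together with the identity $\phi(uP_f)=\phi(u)\cdot x \bmod f$, I would first translate the $P_f$-action to multiplication by $x^h$, so that the orbit code consists of the shifts $\phi(\Uvs)\cdot x^h$ for $h=0,1,\ldots,\ord(P_f)-1$. Note that since $p(0)\neq 0$, the element $x$ is a unit of $R$, so multiplication by $x^h$ is a bijection on $R$ and thus carries a $k$-dimensional $\F_q$-subspace to a $k$-dimensional $\F_q$-subspace.

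The core step is to bound $\dim(\Uvs\cap \Uvs P_f^h)$ for each fixed $h\in\{1,\ldots,\ord(P_f)-1\}$. A non-zero element $\phi(u_i)$ lies in $\Uvs P_f^h$ iff there exists some $u_j\in \Uvs$ with $\phi(u_i)=\phi(u_j)\,x^h$. Because $R$ has zero divisors one cannot divide freely, but any such relation forces $u_i$ and $u_j$ to lie on a common $\langle x\rangle$-orbit, so the hypothesis $\phi(u_i)=\phi(u_j)x^{b_{ij}}$ applies, and the relation is equivalent to $b_{ij}\equiv h$. Therefore the number of non-zero elements in $\Uvs\cap\Uvs P_f^h$ equals $m_D(h)$, which by the assumption on $D$ is at most $q^d-1$. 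Including $0$ one obtains $|\Uvs\cap \Uvs P_f^h|\leq q^d$, whence $\dim(\Uvs\cap\Uvs P_f^h)\leq d$, and by the minimality of $d$ equality is attained for some $h$. The standard identity $d(\Uvs,\Uvs P_f^h)=2k-2\dim(\Uvs\cap \Uvs P_f^h)$ then yields $d(\Cvs)=2k-2d$.

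For the cardinality, the condition $d<k$ implies $\Uvs\neq \Uvs P_f^h$ for every $1\leq h<\ord(P_f)$, so all $\ord(P_f)$ orbit elements are distinct subspaces. The claimed value $q^{n/2}-1$ then follows from the order of $x$ in $R^{\times}$ under the standing assumption that $p$ is primitive of degree $n/2$ (so that the image of $x$ modulo $p$ already has order $q^{n/2}-1$, which then coincides with its order modulo $p^2$ in the cases relevant here).

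The main obstacle is the careful handling of the ring structure: in contrast to the field case, the implication \emph{``$\phi(u_i)=\phi(u_j)x^{h}$ for some $j$''} is meaningful only for pairs $(u_i,u_j)$ living on a common $\langle x\rangle$-orbit, and one must argue that no intersecting pairs can come from different $\langle x\rangle$-orbits. This is where the units property of $x$ in $R$ is used, together with the observation that $\langle x\rangle$-orbits in $R$ are invariant under multiplication by $x^h$; pairs spanning two distinct orbits therefore contribute nothing to $\Uvs\cap \Uvs P_f^h$ and can be safely excluded from the multiset $D$.
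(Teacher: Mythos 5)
The paper itself states this theorem without proof --- it is the closing assertion of the subsection, supported only by the preceding discussion and the remark that one ``can still translate the question of finding the intersection number into the polynomial setting''. Your plan of transplanting the proof of Theorem~\ref{thmprim} is therefore exactly the intended route, and your treatment of the minimum distance is essentially sound: since $p(0)\neq 0$, the class of $x$ is a unit of $R=\F_q[x]/(f)$, so every nonzero $w\in\Uvs\cap\Uvs P_f^h$ determines a unique pair $(u_i,u_j)$ with $\phi(u_i)=\phi(u_j)x^h$, such a pair necessarily lies on a single $\langle x\rangle$-orbit, and counting these pairs gives $\dim(\Uvs\cap\Uvs P_f^h)\le d$ with equality for some $h$. (One caveat you inherit from the statement itself: each $b_{ij}$ is only well defined modulo the length of the $\langle x\rangle$-orbit of $u_j$, which is $q^{n/2}-1$ for the non-units of $R$ but $\ord(x)$ for the units, not modulo $q^n-1$; a pair on the short orbit contributes to the intersection at every $h\equiv b_{ij}\pmod{q^{n/2}-1}$, so the multiset $D$ as written can undercount multiplicities. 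This deserves at least a sentence.)

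The genuine gap is the cardinality. You assert that the order of $x$ in $R^\times$ coincides with its order $q^{n/2}-1$ modulo $p$; this is false. For irreducible $p$ with $p(0)\neq 0$ one has $\ord(p^2)=\ell\cdot\ord(p)$ where $\ell$ is the characteristic of $\F_q$, so $\ord(P_f)=\ell\,(q^{n/2}-1)>q^{n/2}-1$: indeed $x^{q^{n/2}-1}=1+pg$ for some $g$, which is not $1$ in $R$ but whose $\ell$-th power is. Your own (correct) observation that $d<k$ forces all translates $\Uvs P_f^h$, $0\le h<\ord(P_f)$, to be distinct would then yield cardinality $\ell\,(q^{n/2}-1)$, contradicting the value you set out to prove. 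This is not a repairable slip in your argument alone: for $q=2$, $p=x^2+x+1$, $f=p^2=x^4+x^2+1$ and $\Uvs=\rs\mat{1&0&0&0\\0&1&0&0}$ one finds $D=\{\{1,5\}\}$, hence $d=1<k=2$, yet the orbit consists of $6=\ell\,(q^{n/2}-1)$ distinct subspaces rather than $3$. So the cardinality $q^{n/2}-1$ does not follow from $d<k$; it would require the additional hypothesis that $x^{q^{n/2}-1}$ stabilizes $\Uvs$, or should be replaced by $\ord(P_f)/|\Stab_{\langle P_f\rangle}(\Uvs)|$. You should flag this rather than paper over it with the phrase ``in the cases relevant here''.
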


\section{Decoding primitive cyclic orbit codes}
\label{sec:decoding}


In this section we will show how to decode primitive cyclic orbit codes. Let $\Uvs \in \Gr$, $\alpha \in \F_{q}^{n}$ a primitive element, $P \in \GL_{n}$ the corresponding companion matrix, $G=\langle P\rangle$ and $\Cvs = \Uvs G$ the generated orbit code with minimum distance $2\delta$.

If $\Vvs \in \Cvs$ denotes the sent code word and $\Rvs \in \PP_{q}(n)$ the received vector space, then $\Rvs$ is uniquely decodable if $d(\Rvs, \Vvs)\leq \delta-1$. 

A minimum distance decoder finds 
$A\in G$, such that
\[
d(\Rvs,\Uvs A) \leq d(\Rvs,\Uvs A')
\]
\[\iff \dim(\Rvs \cap \Uvs A) \geq \dim(\Rvs \cap \Uvs A')\]
for all $A'\in G$. If $\Rvs$ is decodable then it holds that
\[
d(\Rvs,\Uvs A) < d(\Rvs,\Uvs A')
\]
for all $A'\in G\setminus \Stab_{G}(\Uvs A)$.





%
We will now explain how the canonizer mapping, explained in Section \ref{sec:orbitmetric}, can efficiently be computed for primitive cyclic orbit codes. 



\begin{lem}\label{lem:41}
Let $\Uvs,\Vvs\in \PP_q({n})$ be non-trivial. Then there exists $A\in G$ such that $\dim(\Uvs\cap \Vvs A)\ge 1$.
\end{lem}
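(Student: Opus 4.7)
The plan is to exploit the fact that for a primitive companion matrix $P$ of degree $n$, the cyclic group $G = \langle P \rangle$ acts transitively on the nonzero vectors of $\F_q^n$. Once this transitivity is in hand, the statement becomes almost trivial: pick a nonzero vector in each subspace and apply a group element that maps one to the other.

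More concretely, I would proceed as follows. First, I would invoke Lemma~\ref{thm:compmult}, which says that the canonical isomorphism $\phi : \F_q^n \to \F_{q^n}$ intertwines the right action of $P$ on $\F_q^n$ with multiplication by the primitive element $\alpha \in \F_{q^n}^*$, i.e.\ $\phi(vP) = \phi(v)\alpha$ for every $v \in \F_q^n$. Since $\alpha$ is primitive, the powers $\alpha^i$ for $i = 0, 1, \ldots, q^n-2$ exhaust $\F_{q^n}^*$, and therefore for any two nonzero elements $a, b \in \F_{q^n}^*$ there is a unique $i$ with $b = a\alpha^i$. Translating back through $\phi$ yields: for any two nonzero vectors $u, v \in \F_q^n$ there exists $A \in G$ (namely $A = P^i$ for the appropriate $i$) such that $u = vA$.

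Now take nonzero vectors $u \in \Uvs$ and $v \in \Vvs$, which exist because both subspaces are non-trivial. By the transitivity observation above, choose $A \in G$ with $u = vA$. Then $u \in \Uvs$ on the one hand, and $u = vA \in \Vvs A$ on the other, so $u \in \Uvs \cap \Vvs A$. Since $u \neq 0$, the intersection contains a nonzero vector and therefore $\dim(\Uvs \cap \Vvs A) \geq 1$, which is exactly the claim.

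There is no real obstacle here; the only thing one has to be careful about is ensuring $u$ and $v$ are genuinely nonzero, which is guaranteed by the hypothesis that $\Uvs$ and $\Vvs$ are non-trivial (i.e.\ of positive dimension). The substantive content is really the primitivity of $P$, which is already built into the standing assumption of this section that $\alpha$ is a primitive element of $\F_{q^n}$.
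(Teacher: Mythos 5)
Your proof is correct and follows exactly the same route as the paper, which simply states that the lemma ``follows directly from the transitivity of $G$'' on $\F_q^n\setminus\{0\}$; you have merely spelled out that transitivity via Lemma~\ref{thm:compmult} and the primitivity of $\alpha$, and then drawn the same conclusion by picking nonzero vectors $u\in\Uvs$, $v\in\Vvs$ with $u=vA$.
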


\begin{proof}
Follows directly from the transitivity of $G$.
\end{proof}



\begin{lem}
Let $\Uvs,\Vvs\in  \PP_q({n})$ and $P^{i}\in G$ with $\dim(\Uvs\cap \Vvs P^{i})\ge 1$. Then there exist $u\in \Uvs$ and $v\in \Vvs$ such that $\phi(u)\phi(v)^{-1} \equiv \alpha^{i} \mod q^{n}-1$.
\end{lem}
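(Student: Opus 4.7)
The plan is to exhibit the required vectors $u$ and $v$ by selecting a nonzero element in the intersection $\Uvs \cap \Vvs P^{i}$ and then transporting the relation into the field $\F_{q^n}$ via the isomorphism $\phi$.

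First, since $\dim(\Uvs \cap \Vvs P^{i}) \geq 1$, I would pick any nonzero vector $w \in \Uvs \cap \Vvs P^{i}$. Setting $u := w$ gives an element of $\Uvs$. The fact that $w \in \Vvs P^{i}$ means that there exists some $v \in \Vvs$ with $w = v P^{i}$, and since $P^{i}$ is invertible and $w \neq 0$, this $v$ must also be nonzero.

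Next, I would invoke Lemma \ref{thm:compmult} (the intertwining property $\phi(xM_p) = \phi(x)\alpha$), applied iteratively $i$ times, which yields
\[
\phi(u) = \phi(w) = \phi(vP^{i}) = \phi(v)\alpha^{i}.
\]
Because $v \neq 0$ and $\phi$ is an isomorphism of $\F_q$-vector spaces, $\phi(v) \in \F_{q^n}^{*}$ is a unit, so it is invertible in the field, and rearranging gives $\phi(u)\phi(v)^{-1} = \alpha^{i}$, which is precisely the claim.

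There is no real obstacle here; the statement is essentially a direct translation of the intersection condition into the field representation, and the only thing to be careful about is ensuring $\phi(v)$ is a unit (which follows from $w \neq 0$ and the invertibility of $P^{i}$). The content of the lemma lies in the fact that the group action of $P$ on vectors corresponds to multiplication by $\alpha$ in $\F_{q^n}$, which has been established earlier in Lemma \ref{thm:compmult}.
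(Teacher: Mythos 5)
Your proposal is correct and follows exactly the paper's argument: pick a nonzero vector in $\Uvs\cap\Vvs P^{i}$, write it as $vP^{i}$ with $v\in\Vvs\setminus\{0\}$, and pass to the field representation via Lemma~\ref{thm:compmult}. You actually spell out the small points (invertibility of $P^{i}$, $\phi(v)$ being a unit) that the paper leaves implicit.
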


\begin{proof}
Since $\dim(\Uvs\cap \Vvs P^{i})\ge 1$, there is  a non-zero vector $u\in \Uvs\cap \Vvs P^{i}$. Then there exists a vector $v\in \Vvs - \{0\}$ with $u=vP^{i}$. Hence, in field representation it holds that $\phi(u)\phi(v)^{-1} \equiv \alpha^{i} \mod q^{n}-1$.
\end{proof}

This leads to the following algorithm, for which we first need to define yet another vector space isomorphism:
\begin{align*}
\psi : \hspace{1.2cm} G &\longrightarrow \F_{q^{n}}\\
\sum_{i=0}^{n-1} \lambda_{i} P^{i} &\longmapsto \sum_{i=0}^{n-1} \lambda_{i} \alpha^{i}
\end{align*}
where $\lambda_{i} \in \F_{q}$.

\begin{algorithm}
\caption{Minimum distance decoder for primitive cyclic orbit codes in $\Gr$.}
\label{alg2}                       
\begin{algorithmic}       
\REQUIRE Code ${\cal C}=\Uvs \langle P\rangle \subseteq \Gr$, received word $
\Rvs \in \PP_q({n})$
\STATE set $d:=0, g:=0$
\FOR{$v \in \Rvs \setminus \{0\}$}
\FOR{$u \in \Uvs \setminus \{0\}$}
\STATE compute $g':=\phi(v)\phi(u)^{-1}$ in ${\mathbb F}_{q^n}$
\STATE compute $d':=\dim(\Rvs \cap \Uvs \psi^{-1}(g'))$
\IF{$d'>d$}
\STATE set $d:=d'$ and $g:=g'$
\ENDIF
\ENDFOR
\ENDFOR
\RETURN $\Uvs \psi^{-1}(g')$
\end{algorithmic}
\end{algorithm}

\begin{thm}
If $\dim(\Rvs):=k'$, the complexity of Algorithm \ref{alg2} is $\mathcal{O}(q^{k+k'}(n^2 + (k+k')^2 n))$ over $\F_q$. 
\end{thm}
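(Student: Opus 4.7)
My plan is to count the arithmetic operations in $\F_q$ performed by Algorithm~\ref{alg2} by multiplying the number of iterations of the two nested \texttt{for}-loops by the cost of a single iteration.

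The outer \texttt{for}-loop ranges over $\Rvs\setminus\{0\}$ (so $q^{k'}-1$ iterations) and the inner one over $\Uvs\setminus\{0\}$ (so $q^k-1$ iterations), for a total of $\mathcal{O}(q^{k+k'})$ iterations. In each iteration the first nontrivial task is computing $g'=\phi(v)\phi(u)^{-1}$ in $\F_{q^n}$, which is one multiplication and one inversion in the polynomial-basis representation modulo the degree-$n$ primitive polynomial $p(x)$; schoolbook multiplication and the extended Euclidean algorithm each cost $\mathcal{O}(n^2)$ operations over $\F_q$. The application of $\psi^{-1}$ only reinterprets the coefficient vector of $g'$ in $\alpha$ as coefficients in $I,P,\ldots,P^{n-1}$ and is therefore free.

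The second task is computing $d'=\dim(\Rvs\cap\Uvs\psi^{-1}(g'))$. I would obtain a $k\times n$ generator matrix of $\Uvs\psi^{-1}(g')$ by appealing to Lemma~\ref{thm:compmult}: each of its $k$ rows is $\phi^{-1}(g'\cdot\phi(u_j))$ for $u_j$ a chosen basis of $\Uvs$. Stacking this matrix below a fixed $k'\times n$ generator matrix of $\Rvs$ produces a $(k+k')\times n$ matrix whose rank $\rho$ satisfies $d'=(k+k')-\rho$; Gaussian elimination computes $\rho$ in $\mathcal{O}((k+k')^2 n)$ operations, and the conditional update of $d$ and $g$ is $\mathcal{O}(1)$.

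Summing these contributions gives a per-iteration cost of $\mathcal{O}(n^2+(k+k')^2 n)$; multiplying by the $\mathcal{O}(q^{k+k'})$ loop count yields the claimed overall complexity $\mathcal{O}(q^{k+k'}(n^2+(k+k')^2 n))$. The main bookkeeping point, which I expect to be the most delicate to justify, is that the $k$ extra $\F_{q^n}$-multiplications required to build a basis of $\Uvs\psi^{-1}(g')$ are not counted separately but are subsumed into the $\mathcal{O}((k+k')^2 n)$ Gauss-elimination cost in the parameter regime relevant to subspace-code decoding; this is what allows a lone $n^2$, rather than $kn^2$, to appear inside the parentheses in the bound.
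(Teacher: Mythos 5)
Your proof is correct and follows essentially the same route as the paper: count the $\mathcal{O}(q^{k+k'})$ iterations of the two nested loops, charge $\mathcal{O}(n^2)$ per iteration for the division in $\F_{q^n}$ and $\mathcal{O}((k+k')^2 n)$ for the Gaussian elimination, and multiply. You are in fact more careful than the paper, which silently ignores the cost of constructing a generator matrix for $\Uvs\psi^{-1}(g')$ that you explicitly flag as the delicate bookkeeping point.
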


\begin{proof}
\begin{enumerate}
 \item 
 In order to the determine the group element $g'$ one division in the finite field ${\mathbb F}_{q^n}$ must be performed which has complexity $\mathcal{O}(n^2)$.
\item
We compute the dimension of the intersection of subspaces 
with the aid of the Gaussian algorithm. Since $\dim(\Uvs)=k$ and $\dim(\Rvs)=k'$, the complexity is given by $\mathcal{O}((k+k')^2 n)$.
\item
The complexity of the algorithm is mainly determined by the two nested loops, i.\,e. the inner steps from above must be performed $(q^{\dim(\Uvs)}-1)(q^{\dim(\Rvs)}-1)$ times.
\end{enumerate}

\end{proof}

We can improve the complexity by choosing only specific elements of the received space to compute with in the algorithm. For this we need the following theorem.

\begin{thm}\label{t8}\cite{ma11j}
Let $v_{1},\dots,v_{k} \in \F_{q}^{n}$ be a basis of the sent code word $\Vvs\in \Gr$ and $r_{1},\dots, r_{k'} \in \F_{q}^{n}$ a basis of the received space $\Rvs$. Assume $f<k'$ linearly independent error vectors were inserted during transmission, i.e. $\Rvs= \Vvs'\oplus \mathcal{E}$, where $\Vvs'$ is a vector subspace of $\Vvs$ and $\mathcal{E}$ is the vector space spanned by the error vectors. Then the set
\[\mathcal{L}_{f} := \left\{\sum_{i\in I} \lambda_{i} r_{i} \mid  \lambda_{i} \in \F_{q} , I \in \binom{[k']}{f} \right\}\]
contains $k'-f$ linearly independent elements of $\Vvs$. By $\binom{[k']}{f}$ we denote the set of all subsets of $\{1,2,\dots,k'\}$ of size $f$.
\end{thm}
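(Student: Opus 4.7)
My plan is to use the direct-sum decomposition $\Rvs = \Vvs'\oplus\Evs$ by projecting onto the error space. Let $\pi\colon \Rvs \to \Evs$ denote the (surjective) projection along $\Vvs'$, so $\ker(\pi)=\Vvs'\subseteq\Vvs$ and $\dim\Evs=f$. Applying $\pi$ to the basis $r_1,\dots,r_{k'}$ of $\Rvs$, the images $\pi(r_1),\dots,\pi(r_{k'})$ span $\Evs$, and after relabeling I may assume $\pi(r_1),\dots,\pi(r_f)$ is a basis of $\Evs$.

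For each $j\in\{f+1,\dots,k'\}$ I would then expand $\pi(r_j)=\sum_{i=1}^{f}\mu_{ij}\,\pi(r_i)$ uniquely in $\F_q$ and define
\[
w_j\;:=\;r_j-\sum_{i=1}^{f}\mu_{ij}\,r_i.
\]
By construction $\pi(w_j)=0$, hence $w_j\in\Vvs'\subseteq\Vvs$. Each $w_j$ has support contained in the $(f+1)$-element index set $I_j:=\{1,\dots,f,j\}$, exhibiting $w_j$ as $\sum_{i\in I_j}\lambda_i r_i$ with at most $f+1$ basis terms, which is the natural shape of an element of $\mathcal{L}_f$ (read with $|I|=f+1$, consistent with the decoding application). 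Linear independence of $w_{f+1},\dots,w_{k'}$ is immediate: their coordinate vectors in the basis $(r_1,\dots,r_{k'})$ form a $(k'-f)\times k'$ matrix whose rightmost block is the identity $I_{k'-f}$, so the rows are linearly independent. This yields the claimed $k'-f$ linearly independent vectors of $\Vvs$ inside $\mathcal{L}_f$.

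The main obstacle is the apparent mismatch between the size condition $|I|=f$ in the stated definition and the support $f+1$ produced by the construction above. A Singleton-type argument applied to the coefficient code of $\Vvs'$ with respect to the basis $(r_1,\dots,r_{k'})$ shows that in general no nonzero element of $\Vvs'$ can be written as a combination of fewer than $f+1$ of the $r_i$; thus the $w_j$ above are optimal. Resolving this either requires interpreting $\mathcal{L}_f$ with $|I|=f+1$ or choosing the basis $r_1,\dots,r_{k'}$ adapted to $\Vvs'$ via Gaussian elimination on $\Rvs$ in the decoder, in which case $k'-f$ of the $r_i$ themselves already lie in $\Vvs'$ and trivially belong to $\mathcal{L}_f$.
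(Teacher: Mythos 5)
Your proof is correct, and it takes a genuinely different route from the paper's. The paper argues by induction on $f$: it isolates the received basis vectors whose expansion involves the last error vector $e_f$, cancels $e_f$ by forming pairwise combinations $r_i+(-\mu_i\mu_h^{-1})r_h$, and recurses on the resulting $k'-1$ vectors, which now involve only $f-1$ errors. You perform the entire elimination in one step via the projection $\pi\colon\Rvs\to\Evs$ along $\Vvs'$: select $f$ of the $r_i$ whose images form a basis of $\Evs$ and subtract the appropriate combination from each remaining $r_j$. Your version buys two things. First, linear independence of the $w_{f+1},\dots,w_{k'}$ is immediate from the identity block in their coordinate matrix, whereas the inductive proof must track independence of the intermediate vectors $m_1,\dots,m_{\ell-1}$ through every stage. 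Second, the support bound is transparent in your construction ($|I_j|\le f+1$), while in the recursion one still has to verify that the iterated pairwise combinations can be chosen (e.g., with a common pivot at each stage) so that the final supports do not grow beyond $f+1$; this is implicit rather than proved in the paper. Both arguments produce vectors supported on $f+1$ of the $r_i$, which confirms your diagnosis of an off-by-one slip in the displayed definition: the condition $I\in\binom{[k']}{f}$ should read $I\in\binom{[k']}{f+1}$, as the paper's own complexity count $\sum_{i=1}^{f+1}\binom{k'}{i}(q-1)^i$ for $|\mathcal{L}_f\setminus\{0\}|$ makes clear. Your proposed alternative fix of adapting the basis $r_1,\dots,r_{k'}$ to $\Vvs'$ is not available in the decoding application, since the receiver does not know $\Vvs$, so the correct resolution is the reinterpretation of $|I|$; your main argument already establishes the theorem in that form for an arbitrary basis.
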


\begin{proof}
 Inductively on $f$: 
\begin{enumerate}
 \item 
If $f=0$, then $r_{1},\dots,r_{k'} \in\Uvs$.
 \item
If $f=1$, assume $r_1,\dots,r_{\ell}  \in \mathcal{E}$ and $r_{\ell+1},\dots,r_{k'}\in \Uvs'$. Then
\[r_i = \sum_{j=1}^k \lambda_{ij} v_j + \mu_i e \quad \forall \; i=1,\dots,l\]
where $e \in \mathcal{E}$ denotes the error vector, and hence $\forall \; i,h=1,\dots,l$
\[r_i + r_h = \sum_{j=1}^{k} (\lambda_{ij}+\lambda_{hj}) v_j + (\mu_i +\mu_h)e \]
\[\implies r_i + (-\mu_i \mu_h^{-1}) r_h = \sum_{j=1}^k (\lambda_{ij}+\lambda_{hj}) v_j .\]
Then the elements $r_{\ell+1},\dots,r_{k'}, r_1 + (-\mu_1 \mu_2^{-1}) r_2,\dots,$ $r_1 + (-\mu_1 \mu_{\ell}^{-1}) r_{\ell}$ are $k'-1$ linearly independent elements without errors.
 \item If more errors, say $e_{1},\dots,e_{f}$, were inserted, then one can inductively ``erase'' $f-1$ errors in the linear combinations of at most $f$ elements. Write the received elements as
 \[r_i = \sum_{j=1}^k \lambda_{ij} v_j + \sum_{j=1}^f \mu_{ij} e_{j} \quad \forall \; i=1,\dots,k' .\]
 Assume $\mu_{1f},\dots,\mu_{\ell f} \neq 0$ and $\mu_{(l+1)f},\dots,\mu_{k'f}=0$, i.e. the first $l$ elements involve $e_{f}$ and the others do not. 

 From above we know that the linear combinations of any two elements of $r_{1},\dots,r_{\ell}$ will include $l-1$ linearly independent elements without $e_{f}$. Denote them by $m_{1},\dots,m_{\ell -1}$. Naturally these elements are also linearly independent from $r_{\ell+1},\dots,r_{k'}$. Use the induction step on $m_{1},\dots,m_{\ell -1},r_{\ell+1},\dots,r_{k'}$ to get $k'-1-(f-1)=k'-f$ linearly independent elements without errors.
\end{enumerate}
\end{proof}



Hence, if the received word is decodable, the set $\mathcal{L}_{f}$ contains at least one element $v$ of the sent word $\Vvs$. For this element it holds that there exists $u \in \Uvs \setminus \{0\}$ such that
\[\dim(\Rvs \cap \Uvs \psi^{-1}(\phi(v)\phi(u)^{-1})) \geq k+k'-\delta +1\]
\[\iff d(\Rvs , \Uvs \psi^{-1}(\phi(v)\phi(u)^{-1})) \leq \delta -1 .\]
Therefore, Algorithm  \ref{alg2} is improved by replacing the outer FOR-loop ``\textbf{for} $v \in \Rvs \setminus \{0\}$ \textbf{do}'' by ``\textbf{for }$v \in \mathcal{L}_{f} \setminus \{0\}$ \textbf{do}''. 

If the number $f$ of inserted errors is unknown, one can set $f$ equal to the error-correction capability of the code. Since $d(\Rvs, \Vvs)\leq\delta-1$, the error-correction capability is equal to 
\begin{align*}
 \dim(\Evs) &= k' - \dim (\Rvs \cap \Vvs) \\
 &\leq k' -\frac{1}{2}(k'+k-\delta+1) \\
 &=\frac{1}{2}(k'-k+\delta-1) .
\end{align*}

More efficiently, the algorithm can iteratively work over $\mathcal{L}_{1}, \mathcal{L}_{2}, \dots , \mathcal{L}_{f}$ and return the respective result as soon as $d'\geq k+k'-\delta +1$ is found.   

The complexity of the improved algorithm differs from the one before only in the number of loops. Instead of $q^{k'}-1$ we now have  $\sum_{i=1}^{f+1} \binom{k'}{i} (q-1)^{i}$ elements of $\Rvs$ to check. 

Figures \ref{map3} and \ref{map5} compare the numbers $q^{k}-1$ (red graph) and $\sum_{i=1}^{f+1} \binom{k}{i} (q-1)^{i}$ (blue graph) for different values of $q$ and $f$ and thus depict that for increasing $q$ and small $f$ the complexity is greatly improved.

\begin{figure}[h]
\centering
\includegraphics[width=60mm]{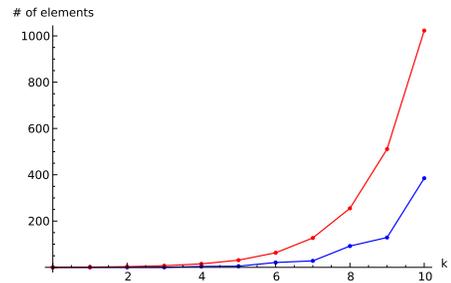}
\caption{$q=2, f=\lfloor k/2 \rfloor$-2}
\label{map3}
\end{figure}

\begin{figure}[h]
\centering
\includegraphics[width=60mm]{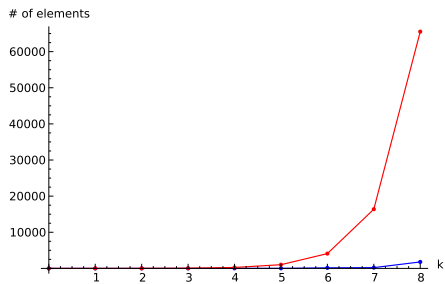}
\caption{$q=4, f=\lfloor k/2 \rfloor$-2}
\label{map5}
\end{figure}

In comparison, the decoding algorithms for Reed-Solomon like codes
contained in \cite{ko08} and \cite{si08a} in the case of $k=k'$ have a complexity of $\mathcal{O}(n^2(n-k)^2)$ and $\mathcal{O}(\delta (n-k)^3)$, respectively.  


In \cite{go11} the authors
present a minimum distance decoder for their spread code
construction. The complexity of their algorithm is
$\mathcal{O}((n-k)k^3)$. Another spread decoder is presented in \cite{ma11j}, whose complexity is $\mathcal{O}(nk(qk)^{f+1})$.


\begin{rem}
The presented algorithm also works for non-primitive irreducible cyclic orbit codes. The only main difference is that $G$ is not transitive and thus Lemma \ref{lem:41} does not hold for arbitrary elements. Nonetheless the subsequent results are still correct if one assumes that the received space $\Rvs$ is decodable. Moreover, in Lemma \ref{lem:41} ``$\alpha^{i}$'' has to be changed to an element from $\F_{q}[\alpha]$.
\end{rem}

\section{Conclusions}
\label{sec:conclusions}

In the first part we presented an overview of orbit codes in general and showed how these are a natural generalization of the concept of linearity for block codes. The main results of this part are that the minimum distance of the whole code is equal to the minimal distance between the initial point and any other point on the orbit and that one can define syndrome-like decoding for these codes.

Furthermore, we investigated cyclic orbit codes in the Grassmannian in more detail. For that we first showed how to classify them by the rational canonical forms of their generators. Then we showed how to compute the minimum distance of cyclic orbit codes and gave some examples of cardinalites and minimum distances found by random search. Moreover, we showed that spread codes can be constructed as primitive cyclic orbit codes for any set of valid parameters. In the end we explained how to decode irreducible cyclic orbit codes and determined the complexity of the proposed algorithm, which is efficient if the field size $q$ and the dimension of the vector spaces $k$ is small.

For further research it would be natural to generalize our presented results to arbitrary groups with more than one generator. Another interesting question is whether there are subgroups of the general linear group where the canonizing mapping is easily computed. For these groups an efficient minimum distance decoder can easily be derived. Furthermore, it is an open question what an encoder map from the actual message space could be. 

Although one loses some of the algebraic structure it is still an interesting project to investigate unions of orbit codes. In the primitive case some research in this area has already been done in \cite{ko08p}, but the more general case is unknown. Moreover, it is interesting to see what combinations of orbits will keep some algebraic structure and how this can be exploited for decoding.

\section*{Acknowledgement}

The authors thank Katherine Morrison for her useful comments on Section \ref{sec:cardinality}.

\bibliographystyle{plain}

\end{document}